\newtheorem{theorem}{Theorem}[section]
\newtheorem{lemma}[theorem]{Lemma}
\newtheorem{corollary}[theorem]{Corollary}
\newtheorem{example}[theorem]{Example}
\newtheorem*{remark}{Remark}
\newcommand{\ssp}{{\sc{Shortest-Superstring}}}
\newcommand{\tsp}{{\sc{Max-ATSP-Path}}}
\newcommand{\ov}{\textrm{ov}}
\newcommand{\pr}{\textrm{pref}}
\newcommand{\imin}{i_{\min}}
\newcommand{\imax}{i_{\max}}
\newcommand{\wmin}{w_{\min}}
\newcommand{\wmax}{w_{\max}}
\newcommand{\pmin}{p_{\min}}
\newcommand{\pmax}{p_{\max}}
\newcommand{\lmin}{l_{\min}}
\newcommand{\lmax}{l_{\max}}
\newcommand{\hwmin}{\pmin(w)}
\newcommand{\hwmax}{\pmax(w)}
\newcommand{\case}[1]{\noindent{\bf Case #1: }}
\title{Lyndon Words and Short Superstrings}
\author{Marcin Mucha\thanks{This work was partially supported by the ERC StG project PAAl no.\ 259515} \\ \texttt{mucha@mimuw.edu.pl}}
\begin{document}

\maketitle

\begin{abstract}
In the shortest superstring problem, we are given a set of strings $\{s_1,\ldots,s_k\}$ and want to find a string that contains all $s_i$ as substrings and has minimum length. This is a classical problem in approximation and the best known approximation factor is $2\frac{1}{2}$, given by Sweedyk~\cite{sweedyk} in 1999. Since then no improvement has been made, howerever two other approaches yielding a $2\frac{1}{2}$-approximation algorithms have been proposed by Kaplan et al.~\cite{kaplan} and recently by Paluch et al.~\cite{paluch} --- both based on a reduction to maximum asymmetric TSP path (\tsp) and structural results of Breslauer et al.~\cite{breslauer}. 

In this paper we give an algorithm that achieves an approximation ratio of $2 \frac{11}{23}$, breaking through the long-standing bound of $2\frac{1}{2}$. We use the standard reduction of \ssp\ to \tsp. The new, somewhat surprising, algorithmic idea is to take the better of the two solutions obtained by using: (a) the currently best $\frac{2}{3}$-approximation algorithm for \tsp\, and (b) a na\"{\i}ve cycle-cover based $\frac{1}{2}$-approximation algorithm. To prove that this indeed results in an improvement, we further develop a theory of string overlaps, extending the results of Breslauer et al.~\cite{breslauer}. This theory is based on the novel use of Lyndon words, as a substitute for generic unbordered rotations and critical factorizations, as used by Breslauer et al. 
\end{abstract}

\section{Introduction}

\paragraph{The Shortest Superstring Problem}
In the \ssp\ problem we are given a set of strings $\{s_1,\ldots,s_k\}$ and want to find a string that contains all $s_i$ as substrings and has minimum length. The problem has several applications including data compression~\cite{gallant,storer} and DNA sequencing~\cite{lesk,li,peltola,waterman}. In the latter, one attempts to reconstruct a DNA molecule, which is a string over the alphabet $\{A,C,G,T\}$, based on a massive set of short fragments. These fragments (i.e.\ substrings) of the molecule can be obtained by \emph{sequencing}. The reconstruction problem can be viewed as a shortest superstring problem based on the premise that the original molecule is a superstring of all the fragments, and that shorter superstrings should in general be more similar to the original.

\paragraph{Previous Results}

\begin{table}
\label{fig:previous-results}
\centering
\begin{tabular}{|l|c|c|}
\hline 
Authors & Date & Factor \\
\hline
Li~\cite{li} & 1990 & $O(\log(n))$\\ 
Blum, Jiang, Li, Tromp, Yannakakis~\cite{blum} & 1991 &  $3$\\
Teng, Yao~\cite{teng} & 1993 &  $2\frac{8}{9}$\\
Czumaj, Gasieniec, Piotr{\'o}w, Rytter~\cite{czumaj} & 1994 &  $2\frac{5}{6}$\\
Kosaraju, Park, Stein~\cite{kosaraju} & 1994 & $2\frac{50}{63}$\\ 
Armen, Stein~\cite{armen95} & 1995 & $2\frac{3}{4}$\\ 
Armen, Stein~\cite{armen96} & 1996 & $2\frac{2}{3}$\\
Breslauer, Jiang, Jiang~\cite{breslauer} & 1997 &  $2\frac{25}{42}$\\ 
Sweedyk~\cite{sweedyk} & 1999 &  $2\frac{1}{2}$ \\
Kaplan, Lewenstein, Shafrir, Sviridenko~\cite{kaplan} & 2005 & $2\frac{1}{2}$ \\
Paluch, Elbassioni, van Zuylen~\cite{paluch} & 2012 & $2\frac{1}{2}$ \\
\hline 
\end{tabular}
\caption{Previous results for the \ssp\ problem.}
\end{table}

Since \ssp\ is NP-hard~\cite{gallant,garey} and even MAX-SNP-hard~\cite{blum,vassilevska}, the best we can hope for in terms of approximation is a constant factor. A lot of effort went into designing approximation algorithms for the problem, Table~\ref{fig:previous-results} summarizes these developments. Note that the last two results, by Kaplan et al.~\cite{kaplan} and Paluch et al.~\cite{paluch} do not improve the approximation factor. They both give $\frac{2}{3}$-approximation algorithms for the related \tsp\ problem. Using a black-box reduction due to Breslauer et al.~\cite{breslauer}, these  give $2\frac{1}{2}$-approximation algorithms for \ssp. Both, especially the one due to Paluch et al., are significantly simpler than the original result of Sweedyk.  

Parallel to these developments, some progress has been made towards resolving the \emph{Greedy Superstring Conjecture} (see~\cite{storer,tarhio,turner}), which says that the greedy approach of repeatedly picking the two strings that overlap the most and gluing them together until only a single string remains, is actually a $2$-approximation. Blum et al.~\cite{blum} showed that the greedy algorithm gives a $4$-approximation, and Kaplan et al.~\cite{kaplan-greedy} improved this to $3\frac{1}{2}$.

\paragraph{Our Results/Techniques}

In this paper we develop several results that describe the structure of the overlaps of a collection of strings. Our results can be viewed as an extension of the framework introduced by Breslauer et al.~\cite{breslauer}. However, while Breslauer et al.\ use generic unbordered rotations and critical factorizations, we construct ours by using Lyndon words. It turns out that the added control we gain in this way allows for much more precise structural analysis of string overlaps.

We use these results to obtain a $2\frac{11}{23}$-approximation for \ssp, and therefore break a long-standing bound of $2\frac{1}{2}$.  

The basic idea of our approach is the following. For two strings $u$, $v$, let the \emph{overlap} of $u$ and $v$, denoted $\ov(u,v)$, be the longest suffix of $u$ which is also a prefix of $v$. The \emph{overlap graph} of a set of strings $S$ is a complete directed graph on $S$ with edge weights equal to lengths of corresponding overlaps. 

Blum et al.~\cite{blum} show how approximating \ssp\ for a set of strings $S$ can be reduced to approximating the problem of finding a longest path in the overlap graph of a certain auxiliary set of strings $R(S)$, called \emph{representative strings}. The performance of the resulting algorithm depends on how well we can bound the \emph{overlap loss} in the longest path approximation.

This bound can essentially be improved in two ways: by using a better approximation algorithm for the longest path problem in directed graphs (\tsp), or by providing a better bound on the overlap of the optimum path. For the first direction, Kaplan et al.~\cite{kaplan} and Paluch et al.~\cite{paluch} both give $\frac{2}{3}$-approximation for \tsp, which is the best known. For the second, the bounds given by Breslauer et al.~\cite{breslauer} are essentially tight.

In this paper we propose a third way to improve by joining the two objectives. Note that one can approximate \tsp\ by finding a maximum weight cycle cover, removing the lightest edge on each cycle, and then joining the resulting paths with arbitrary edges. This na\"{\i}ve algorithm only gives $\frac{1}{2}$-approximation, significantly weaker than $\frac{2}{3}$, the tight case being \emph{balanced} $2$-cycles. We observe however, that with a careful choice of representative strings $R(S)$, if the bounds given by Breslauer et al.\ are nearly tight, the cycles in the maximum weight cycle cover are far from balanced. So far in fact, that choosing the better of the two solutions: one given by a $\frac{2}{3}$-approximation algorithm, and one given by our na\"{\i}ve algorithm, results in an approximation algorithm for \ssp\ with ratio strictly smaller than $2\frac{1}{2}$.  

It is worth noting that, similarly to the approach of Breslauer et al., our algorithm is a black-box reduction from \ssp\ to \tsp. Therefore, any improvements on the approximation factor for the latter will yield an improvement for the former.

\paragraph{Organization of the Paper}

The paper is organized as follows. In Section~\ref{sec:prelim} we recall some facts regarding the properties of strings and their overlaps, as well as the standard approach to shortest string approximation. In Section~\ref{sec:approximation} we describe the new algorithm and analyse its approximation factor. This analysis relies on Theorem~\ref{thm:main-local}, which is the main technical result of this paper. The remaining part of the paper is devoted to proving this theorem.

In Section~\ref{sec:general-bounds} we present some general bounds concerning overlaps of strings. We believe they might be of independent interest. In Section~\ref{sec:proof} we use these bounds to prove the main theorem. Since the proof is a rather long and detailed case analysis, to facilitate understanding of the basic ideas of the paper, in Subsection~\ref{subsec:weaker} we give a simple proof of a weaker version of the main theorem. This version still gives an approximation factor smaller than $2\frac{1}{2}$.

Finally in Section~\ref{sec:tight} we show that Theorem~\ref{thm:main-local} is essentially tight. We also briefly discuss reasons why using our bounds to improve the analysis of the greedy algorithm might be difficult.


\section{Preliminaries}
\label{sec:prelim}

In this section we recall some definitions, results and ideas concerning basic properties of strings. For a more extensive exposition the reader should consult any of the standard textbooks on combinatorics on words, e.g.\ the excellent monograph by Lothaire~\cite{lothaire}. 

We also describe the standard framework for \ssp\ approximation. Our presentation mostly follows that of Breslauer et al.~\cite{breslauer}. Note however, that instead of generic critical factorizations we use nice rotations, introduced at the end of Subsection~\ref{subsec:stringology}. This requires almost no changes in the framework, except for the proof of Lemma~\ref{lem:breslauer1}, which we provide.

\subsection{Stringology}
\label{subsec:stringology}

\paragraph{Basic concepts}
For a string $v$, we will use $v[i]$ to denote the $i$-th letter of $v$, and $v[i,j]$ to denote the substring
of $v$ consisting of letters $i,\ldots,j$. We will use $vu$ to denote concatenation of $v$ and $u$, and $v^k$ to denote the concatenation of $k$ copies of $v$. We will also use $v^\infty$ to denote the semi-infinite string $vvv\ldots$. Any representation of $w=uv$ as a concatenation of two (not necessarily nonempty) 
strings is called a \emph{factorization} of $w$. The factorization is \emph{nontrivial} if both $u$ and $v$ are nonempty.

For a string $w$ of length $n$, any integer $1 \le p \le n$ is a \emph{period} of $w$ if
$w[i] = w[i+p]$ for all $1 \le i \le n-p$. Note that $w$ always has at least one period, that
is its length. The smallest period of $w$ is called \emph{the minimum period} of $w$ or simply \emph{the period} of $w$, and denoted $p(w)$.

A string $w$ is \emph{primitive} if there is no $v$ such that $w=v^k$ with $k \ge 2$.

A string $z$ is called a \emph{rotation} of $w$ if there exists a factorization $w=uv$ such that $z=vu$. In that case we also say
that $z$ is a rotation \emph{starting at position} $|u|+1$, or that $|u|+1$ is $z$'s starting position in $w$. It is easy to see that
if $z$ is a rotation of $w$, then $z$ is primitive iff $w$ is. It is also a standard fact that if $w$ is primitive, then any rotation corresponding to a 
nontrivial factorization of $w$ is different from $w$. More generally, for a primitive $w$ and two different factorizations 
$w=u_1v_1$ and $w=u_2v_2$, the rotations $v_1u_1$ and $v_2u_2$ are different. It follows that for primitive $w$, every rotation of $w$ has a unique starting position in $w$. 

We say that two strings are \emph{equivalent} if one is a rotation of the other. Otherwise they are \emph{non-equivalent}.

We will assume a fixed order on the alphabet. This order induces a standard lexicographical order on the set of strings. 
We use $u \prec v$ to denote that $u$ is lexicographically smaller than $v$, and $u \preceq v$ to denote that $u$ is 
smaller or equal to $v$. 

Let $w$ be a primitive string and consider the order induced by $\prec$ on all rotatations of $w$. Let $\wmin$ and $\wmax$ 
be the minimal and maximal rotations in this order. Also, denote by $\imin(w)$ and $\imax(w)$ the starting positions of $\wmin$ and $\wmax$ in $w$.
Moreover, let $\hwmin$ and $\hwmax$ be strings such that $\wmin = \hwmin\hwmax$ and $\wmax = \hwmax\hwmin$.

\begin{lemma}
\label{lem:min-max}
Let $w$, $|w| \ge 2$ be a primitive string. Then both $\hwmin$ and $\hwmax$ are nonempty. In other words, $\wmin \neq \wmax$.
\end{lemma}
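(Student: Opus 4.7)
The plan is to derive the lemma from the primitivity facts already recalled in the preliminaries, via a short chain of equivalences.

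First I would reduce the problem to showing $\wmin \neq \wmax$. If $\hwmin$ is empty, then $\wmin = \hwmax$ and $\wmax = \hwmax\hwmin = \hwmax$, so $\wmin = \wmax$; the case $\hwmax = \varepsilon$ is symmetric. Conversely, if $\wmin \neq \wmax$, then the factorization $\wmin = \hwmin\hwmax$ with $\wmax = \hwmax\hwmin$ realizes $\wmax$ as a rotation of the primitive string $\wmin$ corresponding to a nontrivial factorization, and by the unique-starting-position property of rotations of primitive strings (recalled in Subsection~\ref{subsec:stringology}) this factorization is unique and must be nontrivial, so both $\hwmin$ and $\hwmax$ are nonempty.

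For the core claim $\wmin \neq \wmax$, I would invoke the fact stated in the preliminaries that two distinct factorizations of a primitive string yield two distinct rotations. Applying this to the factorizations $w = \varepsilon \cdot w$ and $w = w[1] \cdot w[2,|w|]$, which are distinct since $|w| \ge 2$, produces two distinct rotations of $w$. Hence the set of rotations of $w$ has at least two elements, and so its lexicographically minimal element $\wmin$ and maximal element $\wmax$ cannot coincide.

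There is no real obstacle here: the entire argument is a one-line consequence of the primitivity facts already collected in Subsection~\ref{subsec:stringology}, and the only minor subtlety is unpacking the equivalence between the two formulations in the lemma statement.
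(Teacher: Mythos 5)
Your proof is correct. You reduce the lemma to showing $\wmin \neq \wmax$ (observing that if $\hwmin$ or $\hwmax$ were empty the two extreme rotations would coincide — this is exactly the reduction the paper makes implicitly, and your ``Conversely'' sentence is harmless but not actually needed), and then you establish $\wmin \neq \wmax$ by a different mechanism than the paper: you apply the recalled fact that distinct factorizations of a primitive string give distinct rotations to the factorizations $w = \varepsilon \cdot w$ and $w = w[1]\cdot w[2,|w|]$, concluding that $w$ has at least two distinct rotations and hence that the lexicographic minimum and maximum of the rotation set differ. The paper instead argues at the level of letters: primitivity together with $|w|\ge 2$ forces $w$ to contain two distinct letters, so $\wmin$ and $\wmax$ begin with different letters and are therefore different strings. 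Both arguments are one-liners resting on facts already stated in Subsection~\ref{subsec:stringology}; the paper's version yields the marginally stronger observation that $\wmin$ and $\wmax$ already differ in their first character, while yours shows more directly that a primitive string of length at least $2$ has at least two (in fact $|w|$) pairwise distinct rotations. Either route is acceptable, and there is no gap in what you wrote.
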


\begin{proof}
Since $|w| \ge 2$ and $w$ is primitive, it contains at least $2$ different letters. It follows that $\wmin$ and $\wmax$ start with different letters, and so they are different strings. In particular both $\hwmin$ and $\hwmax$ are nonempty.
\end{proof}

The following property of $\wmin$ and $\wmax$ is implicitly used by Crochemore et al.~\cite{crochemore}.

\begin{lemma}
\label{lem:unique}
Let $w$, $|w| \ge 2$ be a primitive string. Then $\wmax$ is the only rotation of $w$ that starts with $\hwmax$, and $\wmin$ is the only rotation that starts with $\hwmin$.
\end{lemma}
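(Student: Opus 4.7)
The plan is to prove the claim for $\wmax$ by a two-sided lexicographic squeeze, using only that $\wmin$ and $\wmax$ are extremal among all rotations of $w$. The claim for $\wmin$ then follows by a fully symmetric argument with the roles of $\min$ and $\max$ exchanged throughout.

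First I would let $z$ be an arbitrary rotation of $w$ beginning with $\hwmax$. Since $|z| = |w| = |\hwmax| + |\hwmin|$, there is a unique string $u$ of length $|\hwmin|$ with $z = \hwmax\, u$, and the whole task reduces to showing $u = \hwmin$. The main idea is to play $z$ against its \emph{companion} rotation $z' := u\, \hwmax$. Because $z'$ is a rotation of $z$, and $z$ is a rotation of $w$, both $z$ and $z'$ are rotations of $w$, so extremality of $\wmax$ and $\wmin$ gives
\begin{align*}
\hwmax\, u = z &\preceq \wmax = \hwmax\, \hwmin, \\
\hwmin\, \hwmax = \wmin &\preceq z' = u\, \hwmax.
\end{align*}
The first inequality has a common prefix $\hwmax$ on both sides, so it simplifies to $u \preceq \hwmin$. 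The second has a common suffix $\hwmax$, and since $u$ and $\hwmin$ have the same length, it simplifies to $\hwmin \preceq u$. Combining, $u = \hwmin$ and hence $z = \hwmax\, \hwmin = \wmax$.

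The only step that needs a tiny bit of care is the suffix cancellation used in the second inequality: for equal-length strings $s,t$, the relation $sx \preceq tx$ is equivalent to $s \preceq t$, because any first position of disagreement between $sx$ and $tx$ already falls inside the $s,t$ block. I do not anticipate a genuine obstacle here; the whole proof is really just the setup trick (passing from $z$ to the companion rotation $z'$) followed by two invocations of extremality, and Lemma~\ref{lem:min-max} ensures that $\hwmax$ and $\hwmin$ are both nonempty so that the factorizations are meaningful.
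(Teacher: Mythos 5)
Your proof is correct and is essentially the paper's argument: the paper also plays the rotation $z=\hwmax v$ against $\wmax$ while using minimality of $\wmin$ on the companion rotation $v\hwmax$ (that is where its claim $v \succeq \hwmin$ comes from), merely phrasing the conclusion as the contradiction $z \succ \wmax$ instead of your direct two-sided squeeze $u \preceq \hwmin \preceq u$. No gap; the equal-length prefix/suffix cancellations you flag are exactly the routine facts the paper uses implicitly.
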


\begin{proof}
We will prove the claim for $\hwmax$, the other part of the proof is analogous. Suppose that there is a rotation $z=\hwmax v$ of $w$, other than $\wmax = \hwmax \hwmin$, that starts with $\hwmax$.
We claim that $z \succ \wmax$ which is a contradiction. To see that, notice that they both start with $\hwmax$, and $v \succeq \hwmin$ since $\hwmin$ is a prefix of the 
minimal rotation of $w$. The claim follows, since $v \neq \hwmin$.
\end{proof}

\paragraph{Borders}

A nonempty string $b$ is called a \emph{border} of a string $w$ if $w=bu=vb$ for some nonempty $u,v$.
A string is \emph{unbordered} if it has no border. So, a string $w$ is unbordered if $w$ has no proper prefix
that is also a suffix of $w$. The following is a standard fact (see e.g.\ Proposition 5.1.2 in Lothaire~\cite{lothaire}).

\begin{lemma}
\label{lem:unbordered}
Every primitive string $w$ has a rotation that is unbordered. In particular, $\wmin$ and $\wmax$ are unbordered. 
\end{lemma}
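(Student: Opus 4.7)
The plan is to establish the stronger assertion that $\wmax$ (and by symmetry $\wmin$) is unbordered; the first sentence of the lemma is then immediate, since $\wmax$ is itself a rotation of $w$. I proceed by contradiction, focusing on $\wmax$.

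Suppose $\wmax$ has a nonempty proper border $b$, yielding two factorizations $\wmax = bu = vb$ with $u, v$ nonempty. Equating lengths gives $|u| = |v|$. The key observation is that both $bv$ and $ub$ are rotations of $\wmax$, hence of $w$, so by maximality each is $\preceq \wmax$. From $bv \preceq bu$ I cancel the common prefix $b$ to get $v \preceq u$. For $ub \preceq vb$, since $|u| = |v|$ and the two strings share the suffix $b$, the lexicographic comparison is decided entirely by $u$ versus $v$, giving $u \preceq v$. Combining, $u = v$, so that $\wmax = bu = ub$.

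The relation $bu = ub$ says that $b$ and $u$ commute as strings, which by a classical theorem in combinatorics on words (see Lothaire~\cite{lothaire}) forces $b$ and $u$ to be integer powers of a common root $t$, say $b = t^p$ and $u = t^q$ with $p, q \geq 1$. Then $\wmax = t^{p+q}$ with $p+q \geq 2$, contradicting primitivity of $\wmax$ --- which is inherited from that of $w$, since rotations of primitive strings are primitive, as noted in the preliminaries. The case of $\wmin$ is fully symmetric: minimality reverses both inequalities, giving $\wmin \preceq bv$ and $\wmin \preceq ub$, which yield $u \preceq v$ and $v \preceq u$ respectively, and the same commutativity argument delivers the contradiction.

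I expect no real obstacle. The only ingredient beyond direct manipulation of the definitions and elementary lexicographic comparisons is the commutativity-of-strings theorem ($xy = yx$ implies $x, y$ are powers of a common root), which is standard and already covered by the Lothaire reference cited in the preliminaries.
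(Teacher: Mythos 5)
Your proof is correct, and it is close in spirit to the paper's, but the finishing move is different. Both arguments exploit the rotations induced by a border $b$ of $\wmax$ (writing $\wmax = bu = vb$, these are $bv$ and $ub$) together with maximality. The paper uses a single comparison chain: maximality against the rotation $ub$ gives $v \succeq u$, hence $bv \succeq bu = \wmax$, and the contradiction comes from strictness, which follows from the fact already stated in the preliminaries that rotations of a primitive string are pairwise distinct. You instead apply maximality twice to squeeze out $u = v$, so that $\wmax = bu = ub$, and then invoke the commutation theorem ($xy = yx$ implies $x$ and $y$ are powers of a common word) to contradict primitivity. That step is valid and the citation to Lothaire is appropriate, but it imports a heavier tool than needed: once you have $bu = ub$ with $b$ nonempty, the preliminary fact that a rotation arising from a nontrivial factorization of a primitive string differs from the string itself already yields the contradiction, so you could stay entirely within facts the paper has set up. Your handling of $\wmin$ by reversing the inequalities is fine and matches the paper's remark that one can either repeat the argument or pass to the reversed alphabet order.
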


\begin{proof} 
Suppose that $\wmax$ has a border, i.e.\ there exists a proper prefix $v$ of $\wmax$ which is also its suffix. Let $|w|=|\wmax|=n$ and $|v|=k<n$. Since $v$ is a suffix of $\wmax$, we know that $u = (v\wmax)[1,n]$ is a rotation of $w$.
We claim that $u \succ \wmax$, which is a contradiction. To see this, notice that $u[1,k] = v = \wmax[1,k]$ and also $u[k+1,n] = \wmax[1,n-k] \succeq \wmax[k+1,n]$, since
$\wmax$ is maximal. So $u \succeq \wmax$, but this cannot be an equality since rotations of a primitive string are all different.

The same proof applies to $\wmin$ or one can simply notice that it is a maximal rotation in the lexicographical order induced by the reversed order on the alphabet.
\end{proof}

\begin{remark}
A primitive string $w$ such that $w = \wmax$ is called a \emph{Lyndon word}\footnote{We use term ``word'' here and not ``string'' as ``Lyndon word'' seems to be a well established phrase. In general, the algorithmic community tends to use the term ``string'' and the combinatorial community uses the term ``word''. We decided to follow this rule and use the term ``string'' with the single exception of ``Lyndon word''.} (w.r.t.\ the particular order on the alphabet that is used to define the lexicographical order). Note that the two rotations appearing in Lemma~\ref{lem:unbordered} are Lyndon words, and in fact this Lemma is equivalent to saying that Lyndon words are unbordered.
\end{remark}

\begin{remark}
One of the key ingredients of the results of Breslauer et al.~\cite{breslauer} is the notion of a critical factorization and 
,,The Critical Factorization Theorem'' (see C\'esari et al.~\cite{cesari}). Although we do not use them directly, a reader acquainted with Breslauer et al.~\cite{breslauer} will realize that they are nevertheless present in our work. In particular $\pmin(w)\pmax(w)$ and $\pmax(w)\pmin(w)$ are critical factorizations (this fact was used by Crochemore et al.~\cite{crochemore} in their proof of the Critical Factorization Theorem).
\end{remark}

\paragraph{Nice rotations}
Let $w$, $|w| \ge 2$ be a primitive string. \emph{The nice rotation} of $w$ is defined to be $\wmax$ if $|\hwmax| \le |\hwmin|$, otherwise it is defined to be $\wmin$. Let $\alpha(w) = \min(|\hwmax|,|\hwmin|)$. We will call a primitive string \emph{nice} if it is its own nice rotation. Note that if $w$ is nice, then:
\begin{itemize}
\item $w=\wmax$ and $\alpha(w) = |\hwmax| \le |w|/2$, or
\item $w=\wmin$ and $\alpha(w) = |\hwmin| < |w|/2$.
\end{itemize}
In particular we always have $\alpha(w) \le |w|/2$.

For a nice string $w$, we call $x$ a $w$-string if $x$ is a prefix of $w^\infty$.

\subsection{Shortest Superstring Approximation}

\paragraph{Basic ideas} 
In the remainder of this paper we assume w.l.o.g.\ that $S$ contains at least two strings and that no string in $S$ is a substring of another string

For two strings $u,v$ define the \emph{overlap} of $u$ and $v$, denoted $\ov(u,v)$, as the longest suffix of $u$ that is also a prefix of $v$. Also, define the \emph{prefix} of $u$ w.r.t.\ $v$, denoted $\pr(u,v)$, as the string $x$ such that $u=x\ov(u,v)$, i.e.\ prefix is the part of $u$ that does not overlap $v$. 

The following two directed graphs are good models of how the strings in $S$ overlap with each other. The \emph{overlap graph} of $S$ is a complete directed graph with $S$ as the vertex set, and edge 
$(s_i,s_j)$ having length $|\ov(s_i,s_j)|$. The \emph{prefix graph} (also called the \emph{distance graph}) 
is defined similarly, only edge $(s_i,s_j)$ now has length $|\pr(s_i,s_j)|$.

Let $\langle s_{i_1},s_{i_2},\ldots,s_{i_n}\rangle $ be the string 
$\pr(s_{i_1},s_{i_2})\pr(s_{i_2},s_{i_3})\ldots\pr(s_{i_{n-1}},s_{i_n})s_{i_n}$.
Obviously, it is the shortest string containing $s_{i_1},s_{i_2},\ldots,s_{i_n}$ in that order. 
Notice that the optimal solution has the form $\langle s_{i_1},s_{i_2},\ldots,s_{i_n}\rangle $ for some
ordering $s_{i_1},s_{i_2},\ldots,s_{i_n}$ of the strings in $S$.

The length of $\langle s_{i_1},s_{i_2},\ldots,s_{i_n}\rangle $ is equal to
\[|\pr(s_{i_1},s_{i_2})|+|\pr(s_{i_2},s_{i_3})|+\ldots+
|\pr(s_{i_{n-1}},s_{i_n})|+|\pr(s_{i_n},s_{i_1})|+|\ov(s_{i_n},s_{i_1})|,\]
which is the length of the cycle $s_{i_1}\rightarrow s_{i_2}\rightarrow\ldots\rightarrow s_{i_n}$ in the prefix graph of $S$ increased
by $|\ov(s_{i_n},s_{i_1})|$. Thus, the length of the shortest TSP tour in the prefix graph of 
$S$ lowerbounds the length of the shortest superstring.

The above considerations suggest that reduction to asymmetric TSP might be useful in approximating \ssp.
Unfortunately, the best known approximation algorithm for asymmetric TSP has factor  $O\left(\frac{\log n}{\log\log n}\right)$ (see~\cite{madry}),
so this approach is not very useful.

Let us look again at a generic solution $\langle s_{i_1},s_{i_2},\ldots,s_{i_n}\rangle $ and this time express its
length in terms of the overlap graph:
\[ |\langle s_{i_1},s_{i_2},\ldots,s_{i_n}\rangle| = \sum_{j=1}^n |s_j| - \sum_{j=1}^{n-1} |\ov(s_{i_j},s_{i_{j+1}})|.\]
The right term in the above expression (the \emph{total overlap} of $\langle s_{i_1},s_{i_2},\ldots,s_{i_n}\rangle $) is 
the length of the path $s_{i_1},\ldots,s_{i_n}$ in 
the overlap graph, so the longest TSP path in the overlap graph corresponds to the optimal solution for \ssp. 
Longest TSP path in a directed graph (called \tsp) can be approximated within constant factor. Notice however, that this does not lead to a constant factor approximation for \ssp. The problem is that the total overlap of the optimal solution could be very large compared to its length. In that case even a very good approximation algorithm for total overlap might only give mediocre approximation for the length of the superstring.

\paragraph{Two-step reduction to \tsp}

We can avoid the problems described in the previous paragraph by using the following two-step approach introduced by Blum et al.~\cite{blum}:
\begin{enumerate}
\item Find a minimum cycle cover $\mathcal{C}_{\min}$ in the distance graph.
\item For each cycle $C \in \mathcal{C}_{\min}$ construct a representative string $R(C)$ containing all strings in $C$ as substrings, let $R=R(\mathcal{C}_{\min})=\{R(C): C \in \mathcal{C}_{\min}\}$.
\item Find a \ssp\ solution for $R$ by reducing to \tsp.
\end{enumerate}
The idea here is that the first step groups strings with large overlaps together, so that the overlaps of the strings in $R$ are relatively small, and then the last step actually gives good approximation. 

The following series of lemmas and definitions from Blum et al.~\cite{blum} and Breslauer et al.~\cite{breslauer} gives an idea of why this approach works. 

For any cycle $C=s_{i_1}\rightarrow s_{i_2}\rightarrow \ldots \rightarrow s_{i_k}$ in $\mathcal{C}_{\min}$ let $R(C) = \langle s_{i_1},s_{i_2},\ldots,s_{i_k},s_{i_1}\rangle $. Note two interesting features of this definition. First, depending on where we break the cycle we can start $R(C)$ with any of the strings $s_{i_1},\ldots,s_{i_k}$. Second, $R(C)$ is actually ``too long'' as it unnecessarily contains two copies of $s_{i_1}$ --- this will turn out useful later on. 

Let $OPT(S)$ and $OPT(R)$ be the lengths of optimal \ssp\ solutions for $S$ and $R$. 

\begin{lemma}[Follows from Lemma 2.6 of~\cite{breslauer}, also implicit in~\cite{blum}]
\label{lem:blum-bound}
\[OPT(R) \le 2 OPT(S).\]
\end{lemma}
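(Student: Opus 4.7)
The plan is to exhibit an explicit superstring of $R$ by concatenating the representatives $R(C)$ in a carefully chosen order and exploiting overlaps between consecutive ones. The structural key is that $R(C) = \langle s_{i_1}, \ldots, s_{i_k}, s_{i_1}\rangle$ both starts with and ends with its break-point string $s_{i_1}$: the duplicated copy of $s_{i_1}$ appended at the end --- the ``unnecessary'' extra flagged in the definition of $R(C)$ --- is precisely the anchor that enables overlap with the next representative.

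I would first record the identity $|R(C)| = |C| + |s_{i_1}|$, where $|C|$ denotes the total weight of cycle $C$ in the distance graph. This follows by unfolding the definition of $\langle\cdot\rangle$: $R(C) = \pr(s_{i_1}, s_{i_2}) \pr(s_{i_2}, s_{i_3}) \cdots \pr(s_{i_k}, s_{i_1}) s_{i_1}$. Moreover, for any two cycles $C, C'$ with respective break-point strings $s$ and $s'$, since $s$ is a suffix of $R(C)$ and $s'$ is a prefix of $R(C')$, one gets $|\ov(R(C), R(C'))| \ge |\ov(s, s')|$.

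Label the cycles $C_1, \ldots, C_m$ with break-point strings $s^{(1)}, \ldots, s^{(m)}$ in the order that minimizes $|\langle s^{(1)}, \ldots, s^{(m)}\rangle|$. A direct calculation using the standard length formula for $\langle\cdot\rangle$ then yields
\[ OPT(R) \;\le\; \bigl|\langle R(C_1), \ldots, R(C_m)\rangle\bigr| \;\le\; \sum_{j=1}^m |C_j| \;+\; \bigl|\langle s^{(1)}, \ldots, s^{(m)}\rangle\bigr|. \]

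It remains to bound each of the two terms on the right by $OPT(S)$. For $\sum_j |C_j|$: the optimal superstring $w^* = \langle s_{\sigma_1}, \ldots, s_{\sigma_n}\rangle$ induces the tour $s_{\sigma_1} \to \cdots \to s_{\sigma_n} \to s_{\sigma_1}$ in the distance graph, of total weight $OPT(S) - |s_{\sigma_n}| + |\pr(s_{\sigma_n}, s_{\sigma_1})| \le OPT(S)$; since this is a cycle cover, the minimum-weight one is no heavier. For the second term: $\{s^{(j)}\}_{j} \subseteq S$, and $w^*$ is also a superstring of this subset, so with the cycle ordering chosen optimally this length is at most $OPT(S)$. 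Adding the two bounds gives $OPT(R) \le 2 OPT(S)$. The argument has no real obstacle once the anchor role of the duplicated $s_{i_1}$ in $R(C)$ is recognized; the remaining steps are routine manipulations of the $\pr$/$\ov$ identities.
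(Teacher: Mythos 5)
Your proof is correct, and it is essentially the standard argument behind the cited result (the paper itself gives no proof of Lemma~\ref{lem:blum-bound}, deferring to Breslauer et al.\ and Blum et al.): exploit that each $R(C)$ begins and ends with its break string, concatenate the representatives in a good order so consecutive overlaps dominate the break-string overlaps, and bound the cycle-cover weight and the break-string superstring separately by $OPT(S)$. In fact your argument yields the slightly sharper and standard intermediate bound $OPT(R) \le w(\mathcal{C}_{\min}) + OPT(S)$, from which the stated inequality follows since $w(\mathcal{C}_{\min}) \le OPT(S)$.
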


For a cycle $C = s_{i_1}\rightarrow s_{i_2}\rightarrow \ldots \rightarrow s_{i_k}$ in the prefix graph define 
\[s(C) = \pr(s_{i_1},s_{i_2})\pr(s_{i_2},s_{i_3}),\ldots,\pr(s_{i_k},s_{i_1}).\] 
Then $|s(C)|$ is the length of $C$ and $s(C)$ essentially reads the prefixes along the cycle. The strings $s(C)$ for $C \in \mathcal{C}_{\min}$ have very interesting properties.

\begin{lemma}[Claims 3 and 5 in Blum et al.~\cite{blum} ]
The strings $s(C)$ all all primitive with $s(C) \ge 2$ and are all non-equivalent.
\end{lemma}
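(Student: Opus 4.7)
The proof proceeds by contradiction, leveraging the minimality of $\mathcal{C}_{\min}$: if any of the three properties fails, I would exhibit a cycle cover of strictly smaller total weight.

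For $|s(C)| \ge 2$, the standing assumptions (strings in $S$ are pairwise distinct and no one is a substring of another) force $|\pr(s_i,s_j)| \ge 1$ whenever $s_i \ne s_j$, since $\pr(s_i,s_j)=\varepsilon$ would make $s_j$ a prefix, hence a substring, of $s_i$. Each cycle in $\mathcal{C}_{\min}$ has at least two vertices (self-loops are excluded, as they would trivialize the cover), so $|s(C)|$ is a sum of at least two strictly positive terms.

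For primitivity, suppose $s(C)=u^k$ with $k\ge 2$ and $C = s_{i_1}\to\cdots\to s_{i_n}$. Each string $s_{i_j}$ is a prefix of the rotation of $s(C)^\infty=u^\infty$ starting at position $p_j$, where $p_1=0$ and $p_{j+1}=p_j+|\pr(s_{i_j},s_{i_{j+1}})|$. Because $u^\infty$ has period $|u|$, that rotation depends only on $r_j := p_j \bmod |u|$. I would partition the vertices of $C$ by residue $r_j$ and, within each class, link them in order of increasing position to form a new cycle; a careful accounting then yields a cycle cover with total length exactly $|u|$, strictly less than $k|u|=|s(C)|$. The key technical check is that these reconstructions are genuine simple cycles in the prefix graph: the positions are pairwise distinct (otherwise two distinct strings would share a starting position in the same infinite periodic background, forcing one to be a prefix of the other), and the edges carry the claimed prefix weights.

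For non-equivalence, suppose $s(C_1)=uv$ and $s(C_2)=vu$ for some factorization. The strings of $C_1$ sit at specific positions in $(uv)^\infty$, while those of $C_2$ sit at positions in $(vu)^\infty$, which is $(uv)^\infty$ shifted by $|u|$; so the $C_2$-strings can equally well be located inside $(uv)^\infty$ at shifted positions. I would merge both position sets --- pairwise distinct by the same no-substring argument --- into one cyclic ordering on $[0,|uv|)$ and form a single cycle visiting all the strings in that order. This cycle has length $|uv|=|s(C_1)|$, so replacing $C_1$ and $C_2$ in $\mathcal{C}_{\min}$ by it saves $|s(C_2)|$ in total weight, contradicting minimality. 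The main obstacle across parts two and three is precisely this verification that the combinatorially natural objects are legitimate cycles in the prefix graph with the asserted edge weights; everything else is bookkeeping around the periodic structure.
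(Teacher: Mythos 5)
The paper does not actually prove this lemma --- it is imported verbatim from Blum et al.\ (Claims 3 and 5) --- so there is no internal proof to compare against; judged on its own, your plan reconstructs the standard minimality/exchange argument from that source, and it is essentially sound. Two points need tightening. First, in the primitivity step the phrase ``partition the vertices by residue and, within each class, link them'' is off: under the standing no-substring assumption, two distinct strings of $C$ starting at the same residue modulo $|u|$ would both be prefixes of the same suffix of $u^\infty$, so the shorter would be a substring of the longer; hence every residue class is a singleton, and the construction you actually need is the one you already describe for non-equivalence --- a single cycle visiting all vertices of $C$ in increasing cyclic order of residue. The ``key technical check'' you defer does go through: if $s$ starts at residue $r$ and $t$ at residue $r'>r$, then either $|s|\le r'-r$ (and trivially $|\pr(s,t)|\le r'-r$) or the no-substring assumption forces $|t|\ge |s|-(r'-r)$, so the suffix of $s$ of length $|s|-(r'-r)$ matches a prefix of $t$ and $|\pr(s,t)|\le r'-r$; summing the cyclic gaps gives a replacement cycle of weight at most $|u|<k|u|=|s(C)|$ (and at most $|uv|$ for the merged cycle in the non-equivalence step, saving $|s(C_2)|\ge 2$), contradicting minimality of $\mathcal{C}_{\min}$. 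Second, a small slip in the first part: if $\pr(s_i,s_j)$ were empty, then $s_i$ (not $s_j$) would be a prefix, hence a substring, of $s_j$; the contradiction is unchanged, and your assumption that the cover contains no self-loops is consistent with the framework, so $|s(C)|\ge 2$ follows as you say. With these repairs the argument is correct and coincides with the exchange argument of the cited source.
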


Let $w(C)$ be the nice rotation of $s(C)$ for every $C \in \mathcal{C}_{\min}$.
As we already mentioned, the representative strings $R(C)$ defined as earlier are unnecessary long. This can be used to prove the following.

\begin{lemma}[Special case of Lemma 5.1 in Breslauer et al.~\cite{breslauer}]
One can define the representative $R(C)$ for a cycle $s_{i_1} \rightarrow \ldots \rightarrow s_{i_k}$ so that:
\begin{itemize}
\item $R(C)$ is a substring of $\langle s_{i_j},s_{i_{j+1}},\ldots,s_{i_k},s_{i_1},\ldots,s_{i_j}\rangle $ for some $j$ (in particular $OPT(R) \le 2OPT(S)$ still holds),
\item $R(C)$ is a $w(C)$-word.
\end{itemize} 
\end{lemma}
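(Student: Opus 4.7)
The idea is that since $w(C)$ is a rotation of the primitive word $s(C)$, the infinite words $s(C)^\infty$ and $w(C)^\infty$ coincide up to a shift, so each window $W_j := \langle s_{i_j},s_{i_{j+1}},\ldots,s_{i_k},s_{i_1},\ldots,s_{i_j}\rangle$ is naturally a substring of $w(C)^\infty$, and I will define $R(C)$ as a prefix of $w(C)^\infty$ that fits inside $W_j$ for a carefully chosen $j$.

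First I would record, for each $l \in \{1,\ldots,k\}$, the starting position $a_l \in \{1,\ldots,|s(C)|\}$ of $s_{i_l}$ within the first period of $w(C)^\infty$. These positions are pairwise distinct: if two of them coincided, one of the two strings would be a prefix (hence a substring) of the other, contradicting the standing assumption on $S$. The window $W_j$ then equals the substring $w(C)^\infty[a_j,\, a_j + |s(C)| + |s_{i_j}| - 1]$, which has length $|s(C)| + |s_{i_j}|$, because $W_j$ reads the prefixes around the cycle once starting from $s_{i_j}$ and then appends $s_{i_j}$.

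Next I would pick $j$ to be an index maximizing $a_l + |s_{i_l}|$, set $M := a_j + |s_{i_j}| - 1$, and define $R(C) := w(C)^\infty[1,M]$. Being a prefix of $w(C)^\infty$, the string $R(C)$ is a $w(C)$-word by definition. Every $s_{i_l}$ occurs at position $a_l$ in $w(C)^\infty$ and ends at position $a_l + |s_{i_l}| - 1 \le M$ by the maximality of $j$, so $R(C)$ contains each $s_{i_l}$ as a substring. The key technical check is that $R(C)$ is a substring of $W_j$: by the periodicity identity $w(C)^\infty[1,M] = w(C)^\infty[|s(C)|+1,\, |s(C)|+M]$, it suffices to verify the inclusion $[|s(C)|+1,\, |s(C)|+M] \subseteq [a_j,\, a_j + |s(C)| + |s_{i_j}| - 1]$; the left inequality $|s(C)|+1 \ge a_j$ is immediate since $a_j \le |s(C)|$, and the right inequality $|s(C)|+M \le a_j + |s(C)| + |s_{i_j}| - 1$ is an equality by the definition of $M$.

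The bound $OPT(R) \le 2 OPT(S)$ is then preserved because each $R(C)$ is a substring of a corresponding $W_j$ of length $|s(C)| + |s_{i_j}|$, and the counting argument behind Lemma~\ref{lem:blum-bound} applies verbatim. The only real obstacle is conceptual rather than technical: one must spot that choosing $j$ so that $s_{i_j}$'s occurrence extends furthest past position~$1$ of $w(C)^\infty$ makes the prefix $R(C)$ end exactly at the right endpoint of $W_j$, so the window is simultaneously long enough to contain a $w(C)$-word that covers every $s_{i_l}$ and short enough not to weaken the $2\,OPT(S)$ bound.
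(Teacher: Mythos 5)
Your construction is correct. Note that the paper itself does not prove this lemma at all --- it is imported as a special case of Lemma~5.1 of Breslauer et al.~\cite{breslauer} --- so there is no internal proof to compare against; your argument is a legitimate self-contained reconstruction of the standard one: break the cycle at the string whose occurrence reaches furthest into $w(C)^\infty$, and take the corresponding prefix of $w(C)^\infty$ as the representative. The index bookkeeping checks out ($M=a_j+|s_{i_j}|-1$ makes the periodic shift $w(C)^\infty[1,M]=w(C)^\infty[|s(C)|+1,|s(C)|+M]$ land exactly inside the window of length $|s(C)|+|s_{i_j}|$ starting at $a_j$), and the chosen prefix contains every $s_{i_l}$ by maximality of $a_j+|s_{i_j}|$. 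The one ingredient you use silently and should state explicitly is the fact, due to Blum et al.~\cite{blum}, that each $s_{i_l}$ is a prefix of $\big(\pr(s_{i_l},s_{i_{l+1}})\cdots\pr(s_{i_{l-1}},s_{i_l})\big)^\infty$ (obtained by chaining $s_{i_l}=\pr(s_{i_l},s_{i_{l+1}})\ov(s_{i_l},s_{i_{l+1}})$ around the cycle); this is what justifies that $W_j$ sits inside $w(C)^\infty$ at the canonical position $a_j$ and that each $s_{i_l}$ occurs there at $a_l$. The pairwise-distinctness of the $a_l$ is fine but never used, and deferring the $OPT(R)\le 2\,OPT(S)$ bound to the unchanged counting argument is exactly what the parenthetical in the statement intends.
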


Finally, we need to show that the strings $R(C)$ do not overlap too much. The lemma below is stated in a slightly more general fashion so that it can be used more easily later on.

\begin{lemma}[Implicit in the proof of Lemma 3.3 in Breslauer et al.~\cite{breslauer}]
\label{lem:breslauer1}
Let $w_1$ and $w_2$ be non-equivalent nice words and let $x_i$ be a $w_i$-word for $i=1,2$. Also let $\alpha_i = \alpha(w_i)$ and $l_i = |w_i|$. Then
$|\ov(x_1,x_2)| < l_1+\alpha_2$.
\end{lemma}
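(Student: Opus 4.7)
The plan is to argue by contradiction. Assume $|\ov(x_1,x_2)| \ge l_1 + \alpha_2$, set $z := \ov(x_1,x_2)$, and aim to derive that $w_1$ and $w_2$ are equivalent. Since $z$ is a suffix of $x_1$ and $x_1$ is a prefix of $w_1^\infty$, $z$ appears as a substring of $w_1^\infty$. Let $w_1'$ be the rotation of $w_1$ such that $z$ is a prefix of $(w_1')^\infty$; then $w_1'$ is primitive, and because $|z| \ge l_1$ the first $l_1$ characters of $z$ equal $w_1'$ and $z$ inherits period $l_1$. Simultaneously, $z$ is a prefix of $x_2$, hence of $w_2^\infty$.

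Let $q := z[1,\alpha_2]$, which is well defined since $|z| \ge \alpha_2$. Reading $z$ as a prefix of $w_2^\infty$ and using that $w_2$ is nice, $q$ equals $\pmax(w_2)$ (if $w_2$ is its own maximal rotation) or $\pmin(w_2)$ (if $w_2$ is its own minimal rotation). In either case Lemma~\ref{lem:unique} implies that $q$ is a prefix of a unique rotation of $w_2$, namely $w_2$ itself.

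The crux of the argument is to examine the block $z[l_1+1,\,l_1+\alpha_2]$, which exists thanks to the assumption $|z| \ge l_1 + \alpha_2$. By period $l_1$ it equals $z[1,\alpha_2] = q$. At the same time, viewed through $w_2^\infty$, the block starts at position $l_1+1$, so $q$ is a prefix of the rotation of $w_2$ whose starting position in $w_2$ is $(l_1 \bmod l_2)+1$. The uniqueness from Lemma~\ref{lem:unique}, combined with the fact that rotations of a primitive string have unique starting positions, forces this rotation to be $w_2$ itself, which in turn forces $l_2 \mid l_1$.

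To conclude, since $z$ is a prefix of $w_2^\infty$ and $l_2 \mid l_1$, we have $z[1,l_1] = w_2^{l_1/l_2}$, and we already saw $z[1,l_1] = w_1'$, so $w_1' = w_2^{l_1/l_2}$. Primitivity of $w_1'$ forces $l_1/l_2 = 1$, hence $w_1' = w_2$, which gives $w_1 \sim w_2$, contradicting the hypothesis. The step requiring the most care is the positional bookkeeping for $z[l_1+1,\,l_1+\alpha_2]$: the length hypothesis $|z| \ge l_1+\alpha_2$ is exactly what is needed to make both readings (through $w_1'$ and through $w_2^\infty$) apply to the same positions, and niceness of $w_2$ is exactly what lets us pin down the starting position via Lemma~\ref{lem:unique}. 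This is where the gain from using nice rotations, as opposed to arbitrary unbordered rotations, manifests.
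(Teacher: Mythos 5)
Your proof is correct and follows essentially the same route as the paper: both examine the block $\ov(x_1,x_2)[l_1+1,\,l_1+\alpha_2]$, use the period $l_1$ inherited from $x_1$ to identify it with $\ov(x_1,x_2)[1,\alpha_2]=w_2[1,\alpha_2]$, invoke Lemma~\ref{lem:unique} to force $l_2 \mid l_1$, and then rule this out via equivalence (for $l_1=l_2$) or primitivity of $w_1$ (for $l_1>l_2$). Your write-up merely spells out the positional bookkeeping (via the rotation $w_1'$) that the paper leaves implicit.
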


\begin{proof}
Assume for a contradiction that $|\ov(x_1,x_2)| \ge l_1 + \alpha_2$. Consider the string $z = \ov(x_1,x_2)[l_1+1,l_1+\alpha_2]$. We have $z = \ov(x_1,x_2)[1,\alpha_2] = w_2[1,\alpha_2]$, so we need to have $l_1 = kl_2$ for some $k$ because of Lemma~\ref{lem:unique}, which is impossible.

To see why, notice that if $l_1 = kl_2$ and $|\ov(x_1,x_2)| \ge l_1$, then either $w_1$ and $w_2$ are equivalent (if $k=1$) or $w_1$ is nonprimitive (if $k>1$).
\end{proof}

\begin{theorem}[Breslauer et al.~\cite{breslauer}]
Given $c$-approximation for \tsp, one can approximate \ssp\ with approximation factor of $3\frac{1}{2} - 1\frac{1}{2} c$.
\end{theorem}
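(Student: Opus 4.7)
The plan is to analyze the three-step algorithm of the ``Two-step reduction to \tsp'' paragraph. Let $L = \sum_{C \in \mathcal{C}_{\min}} |R(C)|$ be the total length of representatives, let $OVP(R)$ denote the total overlap of an optimum \tsp\ path in the overlap graph of $R$, and let $ALG$ be the length of the algorithm's output. Since the algorithm assembles its superstring along a path produced by a $c$-approximation of \tsp, we have $ALG \le L - c \cdot OVP(R)$, and using $OPT(R) = L - OVP(R)$ this rearranges to
\[ALG \le (1-c)\, L + c \cdot OPT(R).\]

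The first task is to bound $L$ in terms of $OPT(S)$ and $OPT(R)$. Write $l(C) = |w(C)| = |s(C)|$ and $\alpha(C) = \alpha(w(C))$. By Lemma~\ref{lem:breslauer1}, the overlap on any edge $R(C) \to R(C')$ of the optimum path in the overlap graph of $R$ is strictly less than $l(C) + \alpha(C')$. Summing over the path and using the inequality $\alpha(C') \le l(C')/2$ from the definition of a nice string, we obtain
\[OVP(R) < \sum_C l(C) + \tfrac{1}{2} \sum_C l(C) = \tfrac{3}{2} \sum_C |s(C)|.\]
Since $\sum_C |s(C)|$ is the weight of the minimum cycle cover $\mathcal{C}_{\min}$ in the distance graph, which lower bounds the length of any superstring of $S$ (by the cycle-length formula recalled in Section~\ref{subsec:stringology}), we conclude $OVP(R) \le \tfrac{3}{2} OPT(S)$. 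Consequently $L = OPT(R) + OVP(R) \le OPT(R) + \tfrac{3}{2} OPT(S)$.

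The second task is a direct calculation: combining with Lemma~\ref{lem:blum-bound} gives $L \le 2 \cdot OPT(S) + \tfrac{3}{2} OPT(S) = \tfrac{7}{2} OPT(S)$, and substituting into the bound for $ALG$ yields
\[ALG \le (1-c) \cdot \tfrac{7}{2} OPT(S) + c \cdot 2 \cdot OPT(S) = \bigl(\tfrac{7}{2} - \tfrac{3}{2} c\bigr)\, OPT(S),\]
which is exactly $3\tfrac{1}{2} - 1\tfrac{1}{2} c$.

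There is no serious obstacle here; the heart of the argument is that Lemma~\ref{lem:breslauer1} together with the identity $\alpha(w) \le |w|/2$ for nice $w$ is precisely what produces the crucial factor $\tfrac{3}{2}$ in the bound on $OVP(R)$. The only subtle bookkeeping point is that the weight of $\mathcal{C}_{\min}$ lower bounds $OPT(S)$, but this follows from the standard observation that any optimal superstring has the form $\langle s_{i_1}, \ldots, s_{i_n}\rangle$ and hence corresponds to a Hamiltonian cycle in the distance graph whose total weight is at most $OPT(S)$.
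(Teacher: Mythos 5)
Your proposal is correct and follows essentially the same route as the paper: bound the total overlap of the optimum path in the overlap graph of $R$ by $\frac{3}{2}\sum_C |w(C)| \le \frac{3}{2}OPT(S)$ via Lemma~\ref{lem:breslauer1} and $\alpha(w)\le |w|/2$, invoke $OPT(R)\le 2OPT(S)$, and account for the $(1-c)$ overlap loss of the \tsp\ approximation. Your algebraic rearrangement through $L$ rather than directly through $OPT(R)+(1-c)R_{OV}$ is equivalent, so there is nothing substantive to distinguish the two proofs.
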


\begin{proof}
Consider the string $s=\langle R(C_1),\ldots,R(C_k)\rangle$ that is the optimal solution for $R$. Let $R_{OV}$ be the total overlap of this string. Then by applying Lemma~\ref{lem:breslauer1} to every pair of consecutive strings we get
\[ R_{OV} \le \sum_{i=1}^{k-1} \left(|w(C_i)|+\alpha(w(C_{i+1}))\right) \le \frac{3}{2}\sum_{i=1}^k |w(C_i)| \le \frac{3}{2} OPT(S).\]
A $c$-approximation algorithm for \tsp\ can be used to obtain a solution with total overlap of $c R_{OV}$. The length of the resulting \ssp\ solution is therefore at most
\[ OPT(R) + \frac{3}{2}(1-c) OPT(S) \le 2OPT(S)+\frac{3}{2}(1-c)OPT(S) = \left(3\frac{1}{2} - 1\frac{1}{2} c\right)OPT(S).\] 
\end{proof}

Since $\frac{2}{3}$-approximation algorithms for \tsp\ are known we obtain the following
\begin{corollary}[Kaplan et al.~\cite{kaplan}, also Paluch et al.~\cite{paluch}]
There exists a $2\frac{1}{2}$-approximation algorithm for \ssp.
\end{corollary}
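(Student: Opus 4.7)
The plan is essentially a one-line computation plugging a known \tsp\ approximation into the preceding theorem. First I would invoke the existence of a $\frac{2}{3}$-approximation algorithm for \tsp, provided independently by Kaplan et al.~\cite{kaplan} and by Paluch et al.~\cite{paluch}; this is the only external ingredient needed, and the corollary is explicitly attributed to these works. Then I would apply the theorem just proved with parameter $c = \frac{2}{3}$.

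The resulting approximation factor for \ssp\ is
\[ 3\tfrac{1}{2} - 1\tfrac{1}{2} \cdot \tfrac{2}{3} = \tfrac{7}{2} - 1 = \tfrac{5}{2} = 2\tfrac{1}{2},\]
which is exactly the claimed bound. There is no real obstacle here: the corollary is a direct consequence of the preceding theorem specialized to the best known \tsp\ approximation factor, and the only thing to verify is the arithmetic, which is trivial. The more substantive content of the paper, as foreshadowed in the introduction, is in \emph{improving past} this bound by combining the $\frac{2}{3}$-approximation with a na\"ive cycle-cover-based $\frac{1}{2}$-approximation and a refined overlap analysis based on Lyndon words; the present corollary merely records the state of the art implied by the framework developed so far.
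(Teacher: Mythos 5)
Your proposal is correct and matches the paper exactly: the paper derives this corollary by plugging the known $\frac{2}{3}$-approximation for \tsp\ (Kaplan et al., Paluch et al.) into the preceding theorem, giving $3\frac{1}{2} - 1\frac{1}{2}\cdot\frac{2}{3} = 2\frac{1}{2}$. Nothing further is needed.
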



\section{The Algorithm}
\label{sec:approximation}

In this section we give the new approximation algorithm and bound its approximation factor.

\paragraph{Description}
The algorithm we are going to analyse is very simple. It returns a solution $S_0$ which is the better of the following two solutions $S_1$, $S_2$:
\begin{itemize}
\item $S_1$ is obtained by using any algorithm that reduces \ssp\ to \tsp\ (e.g.\ one due to Kaplan et al.~\cite{kaplan} or Paluch et al.~\cite{paluch}),
\item $S_2$ is also obtained by reducing to \tsp, but this time we get the final solution by computing the maximum weight cycle cover in the overlap graph of $R$ and dropping the lightest edge from every cycle.
\end{itemize}

\paragraph{Analysis}

For any cycle $C$ in the overlap graph let $O_C$ be the total overlap of $C$, i.e.\ sum of the weights of its edges. Let $M_C$ be the minimum weight of an edge of $C$. Also, let $L_C$ be the sum of the periods of the strings in $C$, which is equal to the total length of the corresponding cycles in $\mathcal{C}_{\min}$. 

Let $|R|$ be the total length of the representative strings in $R$ and let $\mathcal{C}$ be the maximum weight cycle cover in the overlap graph of $R$. Note that $\sum_{C \in \mathcal{C}} L_C = w(\mathcal{C}_{\min})$. Moreover, let $O_\mathcal{C} = \sum_{C \in \mathcal{C}} O_C$, let $M_\mathcal{C} = \sum_{C \in \mathcal{C}} M_C$. Finally, let $c$ be the best known approximation ratio for \tsp.

\begin{lemma}
\label{lem:losses}
$|S_1| \le OPT(R) + (1-c)O_\mathcal{C}$ and $|S_2| \le OPT(R) + M_\mathcal{C}$.
\end{lemma}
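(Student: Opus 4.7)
The plan is to work entirely in the overlap graph of $R$ and use the standard identity that a superstring built from a Hamiltonian path has length $|R|$ minus the total overlap used along the path. Let $O^*$ denote the maximum total overlap of a Hamiltonian path in the overlap graph of $R$, so that $OPT(R) = |R| - O^*$. I will also need the standard relaxation $O^* \le O_\mathcal{C}$: any Hamiltonian path extends to a Hamiltonian cycle by adding a single edge of nonnegative overlap, and this Hamiltonian cycle is itself a (single-cycle) cycle cover, hence has weight at most $O_\mathcal{C}$. Equivalently, $OPT(R) \ge |R| - O_\mathcal{C}$.

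For the bound on $|S_1|$, I invoke the $c$-approximation for \tsp: it produces a Hamiltonian path in the overlap graph of $R$ whose total overlap is at least $c O^*$. The corresponding superstring then has length at most $|R| - c O^* = OPT(R) + (1-c) O^*$, and combining with $O^* \le O_\mathcal{C}$ gives the claimed inequality $|S_1| \le OPT(R) + (1-c) O_\mathcal{C}$.

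For the bound on $|S_2|$, I would simply unwind the construction. Starting from the maximum cycle cover $\mathcal{C}$, whose total overlap is $O_\mathcal{C}$, and dropping the lightest edge from each cycle loses exactly $M_\mathcal{C}$ in overlap and leaves a collection of vertex-disjoint directed paths with total overlap $O_\mathcal{C} - M_\mathcal{C}$. Each such path naturally encodes a superstring of its vertices whose length equals the sum of their lengths minus the overlap along the path; concatenating these per-path superstrings in arbitrary order (taking zero overlap between successive paths, which is always admissible) yields a single superstring of length exactly $|R| - (O_\mathcal{C} - M_\mathcal{C})$. Combining with $OPT(R) \ge |R| - O_\mathcal{C}$ then gives $|S_2| \le OPT(R) + M_\mathcal{C}$. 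There is no real obstacle here: the whole lemma is a chain of bookkeeping identities plus the standard fact that maximum cycle cover weight upper-bounds maximum Hamiltonian-path weight, and the only subtle point worth stating explicitly is the nonnegativity of overlap weights (needed both to close a Hamiltonian path into a cycle without losing weight, and to justify concatenating disjoint paths with zero joining overlap).
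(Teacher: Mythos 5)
Your proof is correct and follows essentially the same route as the paper: both rest on the inequality $|R| - O_\mathcal{C} \le OPT(R)$ (maximum cycle-cover weight dominates maximum Hamiltonian-path overlap), the bound $|S_2| \le |R| - O_\mathcal{C} + M_\mathcal{C}$ from dropping the lightest edge of each cycle, and the observation that a $c$-approximation for \tsp\ recovers at least a $c$-fraction of the optimal overlap $|R| - OPT(R)$. You merely spell out the bookkeeping (via $O^*$ and the path-concatenation step) that the paper leaves implicit.
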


\begin{proof}
We have $|R| - O_\mathcal{C} \le OPT(R)$ and $|S_2| \le |R|-O_\mathcal{C}+M_\mathcal{C}$ which proves the second part.

For the first, note that since $O_\mathcal{C} \ge |R|-OPT(R)$, any algorithm that approximates $|R|-OPT(R)$ with factor $c$ gives a solution of length at most
$|S_1| \le |R| - c(|R|-OPT(R)) = OPT(R)+(1-c)(|R|-OPT(R)) \le OPT(R) + (1-c)O_\mathcal{C}$.
\end{proof}

The main technical ingredient of this paper is the following theorem (we show in Section~\ref{sec:tight} that it is essentially tight).

\begin{theorem}[Main Theorem (local version)]
\label{thm:main-local}
For any cycle $C$ in the overlap graph, we have
\[ 2M_C + 7O_C \le 11L_C.\]
\end{theorem}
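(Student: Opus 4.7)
Let $C = x_1 \to x_2 \to \cdots \to x_n \to x_1$, associate to each $x_i$ the nice word $w_i = w(C_i)$ of length $l_i$, and set $\alpha_i = \alpha(w_i) \le l_i/2$ and $e_i = |\ov(x_i, x_{i+1})|$ with indices mod $n$. Then $O_C = \sum_i e_i$, $M_C = \min_i e_i$, and $L_C = \sum_i l_i$. The plain bound $e_i < l_i + \alpha_{i+1}$ from Lemma~\ref{lem:breslauer1}, together with $\alpha_i \le l_i/2$, already gives $O_C < \tfrac{3}{2} L_C$ and hence $7 O_C < \tfrac{21}{2} L_C$. This leaves only a $\tfrac{1}{2} L_C$ gap to close, so the whole point will be to show that whenever the basic bound is nearly saturated the minimum edge $M_C$ is forced to be small.

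For cycles of length $n \ge 3$, I would use $M_C \le O_C / n \le O_C/3$, so that $2 M_C + 7 O_C \le \tfrac{23}{3} O_C$, reducing the task to $O_C \le \tfrac{33}{23} L_C$. The idea is to extract an improved overlap bound from the refinements of Section~\ref{sec:general-bounds}: a Fine--Wilf-type argument leveraging Lemma~\ref{lem:unique} shows that a pair of consecutive overlaps cannot both be close to $l_i + \alpha_{i+1}$, since a sufficiently long overlap between a $w_i$-string and a $w_{i+1}$-string forces $w_i$ and $w_{i+1}$ to share a nontrivial common root, contradicting non-equivalence of primitive nice words. Averaging the resulting slack around $C$ gives the required saving.

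The main case is $n=2$, which is also the only regime in which the inequality can be tight (cf.\ Section~\ref{sec:tight}). Assume without loss of generality $e_1 \ge e_2 = M_C$; the target becomes $7 e_1 + 9 e_2 \le 11(l_1 + l_2)$, whereas plain Lemma~\ref{lem:breslauer1} only yields $\tfrac{23}{2}(l_1 + l_2)$. The key ingredient here is a sharper symmetric bound I would prove in Section~\ref{sec:general-bounds}: if $e_1 \ge l_1$, then the overlap region inherits $l_1$-periodicity, which forces a loss in $e_2$ depending on $l_1$ and $\alpha_1$, and symmetrically when $e_2 \ge l_2$. The proof then splits into subcases on (i) whether $e_i < l_i$ or $e_i \ge l_i$, (ii) whether each $w_i$ is of $\wmin$- or $\wmax$-type (which controls the shape of $\pmin(w_i)$ and $\pmax(w_i)$), and (iii) the ordering of $\alpha_1, \alpha_2$; in each subcase the refined overlap bound is plugged into $7 e_1 + 9 e_2$.

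The principal obstacle will be the $n = 2$ case analysis. Each subcase needs an argument that pins down, via Lemma~\ref{lem:unique} and the unborderedness of $\wmin$ and $\wmax$ from Lemma~\ref{lem:unbordered}, where the $\pmin$- and $\pmax$-blocks of $w_1$ and $w_2$ sit inside the overlap; the Lyndon-word control is precisely what turns these positional facts into the clean numerical inequalities required for the constants $2, 7, 11$ to work out. I expect the tight configurations to be exactly the balanced $2$-cycles described in Section~\ref{sec:tight}, in which every inequality in the chain collapses to an equality.
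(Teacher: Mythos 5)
Your guiding idea---when the generic bound $o_{ij} < l_i+\alpha_j \le l_i+\tfrac12 l_j$ is nearly saturated, the minimum edge must be small---is the right one, but your treatment of cycles of length $n\ge 3$ has a genuine gap. Replacing $M_C$ by $O_C/n$ and reducing to the purely overlap-versus-length inequality $O_C \le \tfrac{11n}{7n+2}L_C$ cannot work: for $n=3$ this asks for $O_C \le \tfrac{33}{23}L_C \approx 1.435\,L_C$, whereas the tight $3$-cycle of Example~\ref{ex:tight3} has $O_C/L_C \to 19/13 \approx 1.462$. So the intermediate inequality you propose to prove is false; tightness of $2M_C+7O_C\le 11L_C$ is attained on $3$-cycles as well (not only on $2$-cycles, as you assume), and there the essential point is precisely that $M_C$ lies on an edge incident to the shortest string and is much smaller than $O_C/3$---information your reduction throws away. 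Relatedly, the mechanism you invoke (``consecutive overlaps cannot both be close to $l_i+\alpha_{i+1}$, then average'') is not the correct quantitative statement: the refinement that actually holds (Lemmas~\ref{lem:general-long-l1}, \ref{lem:general-long-l1-cor}, \ref{lem:delta} and Corollary~\ref{cor:delta}) yields slack proportional to the length drop $l_i-l_j$ along down-edges, hence it degenerates on cycles with nearly equal lengths, where one must instead use $o_{ij}<l_j$ on up-edges (Lemma~\ref{lem:breslauer2}); aggregating these savings requires a case analysis over the pattern of up/down and steep/flat edges, together with bounding $M_C$ by a specifically chosen small edge. In the paper this averaging route suffices only for $k\ge 6$ (trivially) and $k=5$ (a uniform bound $\Delta O\ge \tfrac{1}{74}L$ holds there); for $k=4,3,2$ the proof must combine slack bounds with edge-specific bounds on $M_C$.

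Your $n=2$ plan is in the right spirit---using Lemma~\ref{lem:unique} and unborderedness of $\wmin,\wmax$ (Lemma~\ref{lem:unbordered}) to constrain where the extremal rotations of $w_1$ can start once $o_{12}\ge l_1$, thereby forcing $\alpha_1$ small and penalizing the return overlap $o_{21}\le l_2+\alpha_1$---and this matches the paper's Lemmas~\ref{lem:general-long-l1}, \ref{lem:general-long-l1-cor} and~\ref{lem:delta-2-cycle}. But as written it is only a plan: the quantitative facts that make the constants $2,7,11$ come out (e.g.\ $\alpha_1\le l_1+l_2-o_{12}$ or $\alpha_1\le l_2+(l_1+\alpha_2-o_{12})$, and $\Delta o_{12}+\Delta o_{21}\ge \tfrac12 l_2$ when $l_1\ge 2l_2$) are nowhere derived, and your expectation that the tight configurations are balanced $2$-cycles would misdirect the case analysis---the tight $2$-cycle of Example~\ref{ex:tight2} has $l_1/l_2\to 3/2$ and unequal overlaps, and the tight $3$-cycle is even more lopsided.
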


By summing over all cycles of $\mathcal{C}$ we obtain the following.

\begin{theorem}[Main Theorem (global version)]
\label{thm:main-global}
\[ 2M_\mathcal{C} + 7O_\mathcal{C} \le 11w(\mathcal{C}_{\min}) \le 11OPT(S).\]
\end{theorem}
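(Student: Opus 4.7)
The plan is almost entirely mechanical, since Theorem~\ref{thm:main-local} does all the real work. First, I would apply the local inequality $2M_C + 7O_C \le 11L_C$ to each cycle $C$ of the maximum weight cycle cover $\mathcal{C}$ and sum the inequalities. Using the definitions recalled right before the theorem, namely $M_\mathcal{C} = \sum_{C\in\mathcal{C}} M_C$, $O_\mathcal{C} = \sum_{C\in\mathcal{C}} O_C$, and the observation $\sum_{C\in\mathcal{C}} L_C = w(\mathcal{C}_{\min})$, this yields the first inequality $2M_\mathcal{C} + 7O_\mathcal{C} \le 11\,w(\mathcal{C}_{\min})$ directly.

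For the second inequality $w(\mathcal{C}_{\min}) \le OPT(S)$, I would appeal to the standard observation recalled in Section~\ref{sec:prelim}: any ordering of $S$ corresponding to an optimal superstring $\langle s_{i_1},\ldots,s_{i_n}\rangle$ induces a Hamiltonian cycle $s_{i_1}\to\cdots\to s_{i_n}\to s_{i_1}$ in the prefix graph, whose total weight equals $OPT(S) - |\ov(s_{i_n},s_{i_1})| \le OPT(S)$. Since any Hamiltonian cycle is in particular a cycle cover, the minimum cycle cover weight $w(\mathcal{C}_{\min})$ is at most this Hamiltonian cycle weight, and hence at most $OPT(S)$.

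Combining the two bounds finishes the proof. Since the local theorem is assumed available in this section, there is no real obstacle here: all the content is hidden in Theorem~\ref{thm:main-local}, whose proof is the long case analysis postponed to Section~\ref{sec:proof}. The only thing to double-check when writing the argument is that the relation $\sum_{C\in\mathcal{C}} L_C = w(\mathcal{C}_{\min})$ holds as stated — this comes from the fact that $L_C$ is defined as the sum of periods of the representative strings in $C$, which by construction of the representatives (cycles in $\mathcal{C}_{\min}$ correspond to primitive roots whose lengths are the periods) equals the total length of the corresponding cycles in $\mathcal{C}_{\min}$.
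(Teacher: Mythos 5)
Your proposal is correct and follows essentially the same route as the paper: the paper's proof of the global version is exactly the one-line summation of Theorem~\ref{thm:main-local} over the cycles of $\mathcal{C}$ using $\sum_{C\in\mathcal{C}} L_C = w(\mathcal{C}_{\min})$, and the bound $w(\mathcal{C}_{\min}) \le OPT(S)$ is the standard prefix-graph/cycle-cover observation already recorded in Section~\ref{sec:prelim}, which you reproduce faithfully.
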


The next two sections are devoted to the proof of Theorem~\ref{thm:main-local}. For now let us see what it implies.

\begin{corollary}
\label{cor:factor}
\ssp\ can be approximated with factor $\left(2 + \frac{11(1-c)}{9-2c} \right)$. In particular for $c=\frac{2}{3}$ we get $2\frac{11}{23}$-approximation.
\end{corollary}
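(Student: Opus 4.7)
The plan is to combine the two inequalities from Lemma~\ref{lem:losses} via a convex combination, then plug in the global version of the Main Theorem (Theorem~\ref{thm:main-global}) together with the basic bound $OPT(R) \le 2\,OPT(S)$ from Lemma~\ref{lem:blum-bound}. Since the algorithm returns $S_0$ of length $\min(|S_1|,|S_2|)$, for every $\lambda \in [0,1]$ we have
\[
|S_0| \;\le\; \lambda |S_1| + (1-\lambda)|S_2| \;\le\; OPT(R) + \lambda(1-c)\,O_{\mathcal{C}} + (1-\lambda)\,M_{\mathcal{C}}.
\]
So the task reduces to choosing $\lambda$ to make the coefficients of $O_{\mathcal{C}}$ and $M_{\mathcal{C}}$ proportional to the coefficients $7$ and $2$ appearing in Theorem~\ref{thm:main-global}, and then reading off the bound.

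Concretely, I would solve $2\lambda(1-c) = 7(1-\lambda)$ for $\lambda$, giving $\lambda = \frac{7}{9-2c}$ and $1-\lambda = \frac{2(1-c)}{9-2c}$. With this choice,
\[
\lambda(1-c)\,O_{\mathcal{C}} + (1-\lambda)\,M_{\mathcal{C}} \;=\; \frac{1-c}{9-2c}\bigl(7 O_{\mathcal{C}} + 2 M_{\mathcal{C}}\bigr) \;\le\; \frac{11(1-c)}{9-2c}\,OPT(S),
\]
using Theorem~\ref{thm:main-global} in the last step. Combining with $OPT(R) \le 2\,OPT(S)$ yields the claimed approximation factor
\[
|S_0| \;\le\; \left(2 + \frac{11(1-c)}{9-2c}\right) OPT(S).
\]
The specialization $c = \tfrac{2}{3}$ gives $\tfrac{11(1/3)}{23/3} = \tfrac{11}{23}$, and hence a $2\tfrac{11}{23}$-approximation.

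There is essentially no obstacle here: all the heavy lifting is done by Theorem~\ref{thm:main-local} (and hence Theorem~\ref{thm:main-global}), which is precisely the statement the rest of the paper is devoted to proving. The only minor point worth checking is that the optimizing $\lambda$ does lie in $[0,1]$ for the relevant range of $c$ (it does, for $c \in [0,1]$, since $9-2c \ge 7$), so that the convex combination argument is legitimate. Everything else is a one-line arithmetic verification.
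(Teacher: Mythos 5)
Your proposal is correct and is essentially the paper's own argument: the paper likewise takes the convex combination $\frac{2-2c}{9-2c}M_\mathcal{C}+\frac{7(1-c)}{9-2c}O_\mathcal{C}$ as an upper bound on $\min(M_\mathcal{C},(1-c)O_\mathcal{C})$, applies Theorem~\ref{thm:main-global} to get $\frac{11(1-c)}{9-2c}OPT(S)$, and adds $OPT(R)\le 2\,OPT(S)$. Your choice of $\lambda=\frac{7}{9-2c}$ reproduces exactly those weights, so the two proofs coincide.
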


\begin{proof}
It follows from Lemma~\ref{lem:losses} and Lemma~\ref{lem:blum-bound} that
\[ |S_0| \le 2OPT(S) + \min(M_\mathcal{C},(1-c)O_\mathcal{C}).\]
We can bound the second term as follows:
\[ \min(M_\mathcal{C},(1-c)O_\mathcal{C}) \le \frac{2-2c}{9-2c}M_\mathcal{C}+\frac{7}{9-2c} (1-c)O_\mathcal{C} \le \frac{11(1-c)}{9-2c}OPT(S).\]

For $c=\frac{2}{3}$ we obtain $\frac{11\cdot \frac{1}{3}}{9-2\cdot \frac{2}{3}} = \frac{11}{23}$ and so $|S_0| \le 2\frac{11}{23} OPT(S)$.
\end{proof}


\section{The General Bounds}
\label{sec:general-bounds}

In this section we present and prove the bounds on overlaps of strings. We consider a set of non-equivalent nice strings $w_1,\ldots,w_k$, and for each $i=1,\ldots,k$ a $w_i$-string $x_i$. We use $l_i$ to denote $|w_i|$, and $\alpha_i$ to denote $\alpha(w_i)$. Moreover, for each $i\neq j$, $i,j \in \{1,\ldots,k\}$ we define $\ov_{ij} = \ov(x_i,x_j)$ and $o_{ij}=|\ov_{ij}|$. Finally, let $w_{ij}$ be the rotation of $w_i$ that matches $\ov(x_i,x_j)$ from the left. If there is more than one such rotation (which might happen if $o_{ij} < l_i$), choose any such rotation.

By Lemma~\ref{lem:breslauer1} we have $o_{12} \le l_1 + \frac{1}{2}l_2$. The main theme of this section is characterizing situations in which  this inequality is in some way non-tight. The underlying idea in most (but not all) of these results is the following:  We show that if $o_{12}$ is actually close to its upper-bound, then the set of possible starting positions of the maximal/minimal rotation of $w_1$ is strongly limited, which in turn leads to an upper-bound on $\alpha_1$. This can be used to upper-bound other overlaps using Lemma~\ref{lem:breslauer1}.

We start with another lemma from the work of Breslauer et al.~\cite{breslauer}.
\begin{lemma}[Implicit in the proof of Lemma 3.3 of Breslauer et al.~\cite{breslauer}]
\label{lem:breslauer2}
If $l_1 \le l_2$ then $o_{12} < l_2$. For general $l_1,l_2$ we have $o_{12} < kl_2$ whenever $l_1 \le kl_2$.
\end{lemma}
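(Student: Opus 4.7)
The plan is to argue by contradiction, and it suffices to prove the bound for $k = \lceil l_1/l_2 \rceil$ (the smallest integer satisfying $l_1 \le k l_2$), since the claim for any larger valid $k$ then follows trivially from $k l_2 \ge \lceil l_1/l_2 \rceil l_2 > o_{12}$. So I would assume $o_{12} \ge k l_2$ and look for a contradiction.

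The first step is a two-sided reading of the prefix $P := \ov(x_1,x_2)[1..kl_2]$. From the $x_2$-side, $P$ is a prefix of $w_2^\infty$ of length $kl_2$, hence equals $w_2^k$, producing $k$ occurrences of $w_2$ inside $\ov(x_1,x_2)$ at positions $1, l_2+1, \ldots, (k-1)l_2+1$. From the $x_1$-side, $\ov(x_1,x_2)$ is a window of $w_1^\infty$ starting at some position $a$, and since $kl_2 \ge l_1$ its first $l_1$ characters form a rotation $\rho$ of $w_1$. Combined, $w_2$ occurs in $w_1^\infty$ at the $k$ positions $a, a+l_2, \ldots, a+(k-1)l_2$. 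Because $w_2$ is nice, it is unbordered by Lemma~\ref{lem:unbordered}, and I would invoke the standard fact that two occurrences of an unbordered word in any string must be at distance at least $l_2$ (else the overlap would be a nontrivial border).

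I then case-split on the relation between $l_1$ and $l_2$. If $l_1 < l_2$ (so $k=1$), the single copy of $w_2$ sitting inside $w_1^\infty$ forces $w_2$ to have period $l_1 < l_2$, contradicting its unbordered minimum period $l_2$. If $l_1 = l_2$, $w_2$ is itself a rotation of $w_1$, contradicting non-equivalence. If $l_1 > l_2$ with $l_2 \mid l_1$, then $\rho = P[1..l_1] = w_2^{l_1/l_2}$ with $l_1/l_2 \ge 2$, so $\rho$ fails to be primitive, contradicting that $\rho$ is a rotation of the primitive $w_1$. Finally, if $l_1 > l_2$ with $l_2 \nmid l_1$, the periodicity of $w_1^\infty$ yields an extra occurrence of $w_2$ at $a + l_1$; by choice of $k$ we have $(k-1)l_2 < l_1 < k l_2$, so the gap $l_1 - (k-1)l_2$ between the occurrences at $a+(k-1)l_2$ and $a+l_1$ lies strictly in $(0, l_2)$, violating the unbordered distance bound.

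The main obstacle is the last case, which is where one really uses both periodicities at once: the $x_2$-induced spacing of $w_2$-occurrences by $l_2$ inside $\ov(x_1,x_2)$, and the $x_1$-induced period-$l_1$ structure of $w_1^\infty$. The key realization is that $l_2 \nmid l_1$ is precisely the condition forcing the wrap-around distance $l_1 \bmod l_2$ into the open interval $(0, l_2)$, so that unborderedness is genuinely violated. The other three cases reduce to quick direct contradictions via unborderedness, non-equivalence, and primitivity of $w_1$, respectively.
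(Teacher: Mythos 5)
Your proof is correct and is essentially the paper's argument: after reducing to the minimal $k$ and disposing of the divisibility/equality cases via primitivity and non-equivalence (exactly as in Lemma~\ref{lem:breslauer1}), your ``two occurrences of the unbordered $w_2$ at distance $l_1-(k-1)l_2\in(0,l_2)$'' step produces precisely the border $\ov_{12}[l_1+1,kl_2]=\ov_{12}[1,kl_2-l_1]$ that the paper exhibits directly, contradicting Lemma~\ref{lem:unbordered}. The only difference is presentational (occurrences in $w_1^\infty$ versus indexing inside $\ov_{12}$), not mathematical.
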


\begin{proof}
Assume for a contradiction that $l_1 \le kl_2$ and $o_{12} \ge kl_2$. Also, w.l.o.g.\ assume that $k$ is the smallest integer such that $l_1 \le kl_2$. 

Similarly as in the proof of Lemma~\ref{lem:breslauer1} we cannot have $l_1 = kl_2$, and we also cannot have $l_1 = (k-1)l_2$ for the same reasons. Therefore $(k-1)l_2 < l_1 < kl_2$. 

Consider now the string $\ov_{12}[l_1+1,kl_2]=\ov_{12}[1,kl_2-l_1]$. This string is a non-trivial suffix of $w_2$, and also a prefix of $w_2$, a contradiction with $w_2$ being nice and Lemma~\ref{lem:unbordered}. 
\end{proof}

The next two lemmas demonstrate that $o_{12}$ getting close to $l_1+\frac{1}{2}l_2$ implies an upper-bound on the value of $\alpha_1$. While Lemma~\ref{lem:general-long-l1-cor} gives this bound explicitly, Lemma~\ref{lem:general-long-l1} describes it in terms of constraints on the starting positions of maximal and minimal rotations of $w_1$.

\begin{lemma}
\label{lem:general-long-l1}
Let $l_1 \ge l_2$, $o_{12} \ge l_2$ and let $w_2$ be its maximal rotation, then:
\begin{itemize}
\item $\imax(w_{12}) = 1$, or $\imax(w_{12}) = l_2 \lfloor \frac{o_{12}-1}{l_2}\rfloor+1$ or $\imax(w_{12}) > \max(l_2 \lfloor \frac{o_{12}-1}{l_2}\rfloor+1,o_{12}-\alpha_2+1)$.
\item $\imin(w_{12}) = \alpha_2+1$, or $\imin(w_{12}) =  l_2 \lfloor \frac{o_{12}-\alpha_2-1}{l_2}\rfloor+\alpha_2+1$, or  $\imin(w_{12}) > \max(l_2 \lfloor \frac{o_{12}-\alpha_2-1}{l_2}\rfloor+\alpha_2+1,o_{12}-(l_2-\alpha_2)+1)$,
\end{itemize}
Moreover, if $o_{12} \ge l_1$, then $\imax(w_{12}) \neq 1$ and $\imin(w_{12}) = \alpha_2+1$.
\end{lemma}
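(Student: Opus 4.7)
My plan is to combine three tools: the $l_2$-periodicity of $\ov_{12}$ (it is a prefix of $w_2^\infty$), Lemma~\ref{lem:unique} applied separately to $w_1$ and to $w_2$, and Lemma~\ref{lem:unbordered} which tells us that both $(w_1)_{\max}$ and $(w_1)_{\min}$ are unbordered and hence have minimum period $l_1$. The two instances of Lemma~\ref{lem:unique} give two ``uniqueness of placement'' facts that drive everything: in $w_{12}^\infty$, the prefix $\hat p_1$ of $(w_1)_{\max}$ occurs precisely at positions $\equiv m \pmod{l_1}$; and in $w_2^\infty$ (hence in $\ov_{12}$), the prefix $\hat p_2$ of $(w_2)_{\max}=w_2$ occurs precisely at positions $\equiv 1 \pmod{l_2}$.

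First I would dispose of the ``moreover'' half. If $(w_1)_{\max}$ were to sit entirely inside $\ov_{12}$, it would be a length-$l_1$ substring of the $l_2$-periodic string $\ov_{12}$ and so would itself have period $l_2 < l_1$, forcing a nonempty border and contradicting Lemma~\ref{lem:unbordered}. In particular when $o_{12} \ge l_1$ the placement $m = 1$ is ruled out, and in general we may assume $m > o_{12} - l_1 + 1$, so only a proper prefix $(w_1)_{\max}[1,\ell]$ with $\ell = o_{12} - m + 1$ is visible in $\ov_{12}$.

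For the first bullet I would then case-analyze $m$ using both uniqueness facts. A naive ``shift by $l_2$'' argument already gives a bound in terms of $\alpha_1$: by periodicity, $(w_1)_{\max}[1,\ell]$ also occurs at $m - l_2$ whenever $m > l_2$, and if $\ell \ge \alpha_1$ this occurrence contains $\hat p_1$, so the $w_1$-uniqueness forces $l_1 \mid l_2$, contradicting $l_1 > l_2$. To sharpen to $\alpha_2$, I would instead look at the length-$\alpha_2$ prefix of $(w_1)_{\max}$, which is visible in $\ov_{12}$ iff $m \le o_{12} - \alpha_2 + 1$: it equals a specific length-$\alpha_2$ substring of $w_2^\infty$ determined by $r := (m-1)\bmod l_2$, and by the $w_2$-uniqueness it equals $\hat p_2$ if and only if $r = 0$. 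In the ``$r = 0$'' sub-case we get $m \in \{1, 1+l_2, \ldots, p^*\}$; comparing the rotation starting at $m$ with $R_1 = w_{12}$ via periodicity shows they agree on the first $\ell$ characters, so if $\ell \ge \alpha_1$ the $w_1$-uniqueness forces $R_m = R_1$ and hence $m = 1$, while if $\ell < \alpha_1$ the only remaining candidate compatible with $m \le o_{12} - \alpha_2 + 1$ is $m = p^*$. In the complementary ``$r \ne 0$'' sub-case the relevant shift by $l_2$ together with the $w_2$-uniqueness rules out every such $m$, forcing $m > o_{12} - \alpha_2 + 1$.

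The second bullet for $\imin$ is proved by running the same argument symmetrically, using that $(w_2)_{\min} = \hat q_2 \hat p_2$ starts at position $\alpha_2 + 1$ inside $w_2$ and hence at positions $\alpha_2 + 1,\, \alpha_2 + 1 + l_2, \ldots$ inside $\ov_{12}$; the appearance of $l_2 - \alpha_2$ in the tail bound corresponds to the length of $\hat q_2$ taking over the role played by $\alpha_2$ in the first bullet. The equality $\imin = \alpha_2 + 1$ in the ``moreover'' clause then follows by combining the Lyndon/unbordered lower bound $\imin > o_{12} - l_1 + 1$ with the upper bound $o_{12} < l_1 + \alpha_2$ from Lemma~\ref{lem:breslauer1}: together these eliminate the $p^{**}$ and tail alternatives and leave only $\imin = \alpha_2 + 1$. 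The hardest step, I expect, is the $\alpha_2$-sharpening just described: one must carry the $w_2$-uniqueness cleanly through the partial-visibility boundary and correctly separate the single allowed middle position $m = p^*$ from the ``almost nothing visible'' tail region, where the boundary arithmetic involving $\lfloor(o_{12}-1)/l_2\rfloor$ becomes delicate.
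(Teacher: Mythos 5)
Your periodicity observation at the start is correct and is actually a nicer route than the paper's for one point: a full occurrence of $(w_1)_{\max}$ inside the $l_2$-periodic string $\ov_{12}$ would give $(w_1)_{\max}$ period $l_2<l_1$, hence a border, so indeed $\imax(w_{12})\neq 1$ when $o_{12}\ge l_1$. But the heart of the lemma is not reached by the tools you list. In the visible, $r=0$ sub-case, your elimination of interior grid positions rests on the claim that when $\ell<\alpha_1$ ``the only remaining candidate compatible with $m\le o_{12}-\alpha_2+1$ is $m=p^*$''. That is unjustified: the window $(o_{12}-\alpha_1+1,\,o_{12}-\alpha_2+1]$ has length $\alpha_1-\alpha_2$, and nothing at this stage bounds $\alpha_1$ by anything comparable to $l_2$ (such bounds are exactly what Lemma~\ref{lem:general-long-l1-cor} later deduces \emph{from} this lemma), so several positions $\equiv 1\pmod{l_2}$ can survive your test. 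The paper excludes middle grid positions by a lexicographic device absent from your plan: if $\imax(w_{12})=kl_2+1$ with grid positions on both sides inside the overlap, the three rotations $w_2w_2w$, $w_2ww_2$, $ww_2w_2$ show the middle one is never maximal. Similarly, in the $r\ne 0$ sub-case, ``shift by $l_2$ plus $w_2$-uniqueness'' gives no contradiction by itself: uniqueness only says the visible length-$\alpha_2$ prefix is not $\pmax(w_2)$, which is not absurd on its own. One needs the sandwich $(w_{12})_{\max}[1,\alpha_2]\succeq\pmax(w_2)$ (a copy of $\pmax(w_2)$ occurs in $w_{12}$ since $o_{12}\ge l_2$) and $(w_{12})_{\max}[1,\alpha_2]\preceq\pmax(w_2)$ (any length-$\alpha_2$ factor of $w_2^\infty$ is a prefix of a rotation of $w_2$), forcing equality and then $r=0$ via Lemma~\ref{lem:unique}. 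You also leave untreated the positions with $o_{12}-\alpha_2+1<m<l_2\lfloor\frac{o_{12}-1}{l_2}\rfloor+1$, which the paper excludes by an unborderedness argument.

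The second genuine gap is the ``moreover'' claim $\imin(w_{12})=\alpha_2+1$. It does not follow from combining $\imin(w_{12})>o_{12}-l_1+1$ with $o_{12}<l_1+\alpha_2$: both alternatives you want to kill are themselves lower bounds on $\imin(w_{12})$, so a lower bound on $\imin(w_{12})$ plus an upper bound on $o_{12}$ cannot exclude them. Concretely, with $l_2=10$, $\alpha_2=5$, $l_1=27$, $o_{12}=30$, the positions $\imin(w_{12})=26$ (the rightmost grid position $kl_2+\alpha_2+1$) and $\imin(w_{12})=27$ (the tail) remain numerically admissible under your constraints. The paper needs two further lexicographic arguments here: that among rotations starting at positions of the form $kl_2+\alpha_2+1$ the leftmost is minimal when $o_{12}\ge l_1$, and a dedicated comparison showing the rotation starting at $\alpha_2+1$ beats any rotation starting beyond $o_{12}-(l_2-\alpha_2)+1$ (using that the visible fragment is a prefix of $(w_2)_{\max}$ but, by Lemma~\ref{lem:unbordered}, cannot also be a suffix). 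Without arguments of this type, both the main bullets and the ``moreover'' part remain unproved.
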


\begin{remark}
Several of the lemmas appearing in the remainder of this section assume that one of the strings involved is its maximal rotation. In all cases a symmetrical statement is true as well, in which the roles of minimal and maximal rotations of all strings involved are reversed. We omit the corresponding statements in all these lemmas.
\end{remark}

\begin{proof}[Proof (of Lemma~\ref{lem:general-long-l1})]
Let us start with claims concerning $\imax(w_{12})$. Since we assume $o_{12} \ge l_2$, we know that $w_{12}$ contains $\pmax(w_2)$. Therefore $(w_{12})_{\max}[1,\alpha_2] \succeq \pmax(w_2)$. We now consider two cases, depending on whether or not $\imax(w_{12}) \le o_{12}-\alpha_2+1$.

\case{1} If $\imax(w_{12}) \le o_{12}-\alpha_2+1$, then $\imax(w_{12})+\alpha_2-1 \le o_{12}$ and consequently $(w_{12})_{\max}[1,\alpha_2] \preceq \pmax(w_2)$. By the previous observation this is in fact an equality, and by Lemma~\ref{lem:unique} $\imax(w_2) = kl_2 + 1$ for some natural $k$, i.e.\ the maximal rotation of $w_{12}$ is aligned with the starting position of some occurence of $w_2$ in $o_{12}$. The two positions that appear in the statement of the lemma: $\alpha_2+1$ and $l_2 \lfloor \frac{o_{12}-1}{l_2}\rfloor+1$, are the starting positions of the leftmost and the rightmost occurences, respectively. We will show that $\imax(w_{12})$ has to be equal to one of them.

Suppose that this is not the case. This means that we have $\imax(w_{12}) = kl_2 + 1$ and both $(k-1)l_2+1$ and $(k+1)l_2+1$ are in $[1,\ldots,o_{12}]$. Note that by Lemma~\ref{lem:breslauer2} we then also have $(k+1)l_2 \le l_1$, and in fact $(k+1)l_2 < l_1$ since otherwise $l_1$ would not be primitive. Therefore $(k+1)l_2+1 \le l_1$.
Consider the rotations $r_1,r_2,r_3$ of $w_{12}$ starting at positions $(k-1)l_2+1$, $kl_2+1$ and $(k+1)l_2+1$, respectively. We have $r_1 = w_2w_2w$, $r_2 = w_2ww_2$ and $r_3=ww_2w_2$ for some string $w$.
Since all rotations of a primitive string are different, we have $ww_2 \neq w_2w$. If $ww_2 \succ w_2w$, then $r_3$ is the largest of the three rotations. If, on the other hand, $ww_2 \prec w_2w$, then $r_1$ is the largest one. Therefore $\imax(w_{12}) \neq kl_2+1$, a contradiction.

\case{2} We are left with the case where $\imax(w_{12}) > o_{12}-\alpha_2+1$, and we can also assume that we do not have $\imax(w_{12})=l_2 \lfloor \frac{o_{12}-1}{l_2}\rfloor+1$ since then the lemma clearly holds.

We need to prove that $\imax(w_{12}) > l_2 \lfloor \frac{o_{12}-1}{l_2}\rfloor+1$. If that was not the case, then $(k-1)l_2 + 1 < \imax(w_{12}) \le kl_2 < o_{12}$ for $k=\lfloor \frac{o_{12}-1}{l_2}\rfloor$. Then $w_{12}[\imax(w_{12}),kl_2]$ is a non-trivial suffix of $w_2$, and it is also a prefix of $w_2$ by maximality of the rotation of $w_{12}$ starting at $\imax(w_{12})$. But that is a contradiction with the fact that $w_2$ is unbordered by Lemma~\ref{lem:unbordered}. This ends the proof of the bounds for $\imax(w_{12})$.

One final claim we need to show concerning $\imax(w_{12})$ is that if $o_{12} \ge l_1$, then $\imax(w_{12}) \neq 1$. Since for $o_{12} \ge l_1$ we have $l_1 > l_2$, there are at least two positions of the form $kl_2+1$ within $w_{12}$. Consider rotations $r_1$ and $r_2$ of $w_{12}$ starting at two consecutive such positions $kl_2+1$ and $(k+1)l_2+1$. We will prove that $r_1 \prec r_2$, which implies our claim. We have $r_1=w_2 w w_2^k$ and $r_2 =ww_2^{k+1}$ for some $w$ which is a prefix of $w_2^\infty$. In particular, we have $r_1 = wvw_2^k$ for some $v$ which is a rotation of $w_2$. Since $r_1 \neq r_2$ by Lemma~\ref{lem:unique}, and $w_2$ is its maximal rotation, we conclude that $r_1 \prec r_2$. 

Let us now prove the claims concerning $\imin(w_{12})$.
Similarly to the case of $\imax(w_{12})$ we can argue that either $\imin(w_{12}) > o_{12} - (l_2-\alpha_2)+1$ or we have $\imin(w_{12}) = kl_2 + \alpha_2 + 1$ for some $k$. 

This time it will be more convenient to start with the case of $o_{12} \ge l_1$. Among rotations starting at positions of the form $kl_2+\alpha_2+1$, the one starting at $\alpha_2+1$ is minimal in this case, and the proof is almost identical to the one we just presented for $\imax(w_{12})$. 

So we only need to exclude the case where $o_{12} \ge l_1$ and $\imin(w_{12}) > o_{12}-(l_2-\alpha_2)+1$. If that happened, then we would have $(w_{12})_{\min}[1,l_2-\alpha_2] = w_{12}[\imin(w_{12}),l_1] w$, where $w$ is a non-empty prefix of $w_2$. Call this string $r_1$ and let $r_2=w_{12}[\alpha_2+1,l_2]$. We claim that $r_2  \prec r_1$, which is a contradiction with minimality of $r_1$. To see that $r_2 \prec r_1$, note that $w_{12}[\alpha_2+1,\alpha_2+l_1-\imin(w_{12})] \preceq w_{12}[\imin(w_{12}),l_1]$ by the definition of $\alpha_2$. Moreover $w_{12}[l_2-|w|+1,l_2] \preceq w$ because $w$ is a prefix of $(w_2)_{\max}$. But we cannot have an equality here, since then $w$ would also be a suffix of $(w_2)_{\max}$, a contradiction with Lemma~\ref{lem:unbordered}.

Let us now prove the main claims concerning $\imin(w_{12})$. Again, we consider two cases.

\case{1} If $\imin \le o_{12}-(l_2-\alpha_2)+1$ and consequently $\imin(w_{12})$ is of the form $kl_2+\alpha_2+1$, then we can show that we have either $\imin(w_{12})=\alpha_2+1$ or $\imin(w_{12}) = 
l_2 \lfloor \frac{o_{12}-\alpha_2-1}{l_2}\rfloor+\alpha_2+1$. The proof  is almost identical to the one we provided for $\imax(w_{12})$. The only step that does not directly translate, is that the rightmost position of the form $kl_2+\alpha_2+1$ within $\ov_{12}$ is also within $w_{12}$. Luckily, we already considered the case of $o_{12} \ge l_1$.

\case{2} If $\imin(w_{12}) > o_{12} - (l_2-\alpha_2)+1$, then we can also assume that we do not have $\imin(w_{12})= l_2 \lfloor \frac{o_{12}-\alpha_2-1}{l_2}\rfloor+\alpha_2+1$. We need to show that $\imin(w_{12}) > l_2 \lfloor \frac{o_{12}-\alpha_2-1}{l_2}\rfloor+\alpha_2+1$. Again, the proof is almost identical to the one for $\imax(w_{12})$.
\end{proof}

\begin{lemma}
\label{lem:general-long-l1-cor}
Let $l_1,o_{12} \ge l_2$ and let $w_2$ be its maximal rotation. Then we always have $\alpha_1 \le l_2 + (l_1 + \alpha_2 - o_{12})$ and moreover: 
\begin{enumerate}
\item either $\alpha_1 \le l_1 + l_2 - o_{12}$, or
\item the maximal rotation of $w_{12}$ starts at the rightmost position of the form $kl_2+1$, or the minimal rotation of $w_{12}$ starts at the rightmost position of the form $kl_2+\alpha_2+1$.
\end{enumerate}
\end{lemma}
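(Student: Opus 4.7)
The plan is to read off both bounds from the characterization of $\imax(w_{12})$ and $\imin(w_{12})$ provided by Lemma~\ref{lem:general-long-l1}, via a simple geometric observation that identifies $\alpha_1$ with a circular distance inside $w_{12}$. Since $w_1$ is nice, one of $|\pmax(w_1)|$ and $|\pmin(w_1)|$ equals $\alpha_1$ and the other equals $l_1-\alpha_1$. Reading where $\pmax(w_1)$ and $\pmin(w_1)$ sit inside each of the two rotations, the starting position of $(w_1)_{\min}$ in $w_{12}$ is a circular shift of the starting position of $(w_1)_{\max}$ by exactly $\alpha_1$ in the direction that depends on whether $w_1=(w_1)_{\max}$ or $w_1=(w_1)_{\min}$. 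Writing $p_{\max}=\imax(w_{12})$, $p_{\min}=\imin(w_{12})$, and $\Delta=(p_{\min}-p_{\max}) \bmod l_1$, we therefore have $\alpha_1 = \min(\Delta,\, l_1-\Delta)$.

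Next I would invoke Lemma~\ref{lem:general-long-l1} to place $p_{\max}$ in one of $\{1,\, r_1\}$ or the ``option~3'' range $p_{\max} > \max(r_1,\, o_{12}-\alpha_2+1)$, and similarly $p_{\min}$ in one of $\{\alpha_2+1,\, r_2\}$ or $p_{\min} > \max(r_2,\, o_{12}-l_2+\alpha_2+1)$, where $r_1$ (resp.~$r_2$) is the rightmost position of the form $kl_2+1$ (resp.~$kl_2+\alpha_2+1$) within $[1,o_{12}]$. Crucially, both $r_1$ and $r_2$ lie in $[o_{12}-l_2+1,\, o_{12}]$. In each sub-case I would bound one of $\Delta$ or $l_1-\Delta$ explicitly. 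The four \emph{non-extreme} sub-cases, in which $p_{\max}\neq r_1$ and $p_{\min}\neq r_2$, should each give the strong bound $\alpha_1\le l_1+l_2-o_{12}$: for $(p_{\max},p_{\min})=(1,\alpha_2+1)$ one has $\Delta=\alpha_2<l_2-\alpha_2\le l_1+l_2-o_{12}$ using Lemma~\ref{lem:breslauer1} and $2\alpha_2\le l_2$; for the sub-cases that mix option~1 and option~3, one simply uses $p_{\max}\ge o_{12}-\alpha_2+2$ or $p_{\min}\ge o_{12}-l_2+\alpha_2+2$ to bound whichever of $\Delta$ or $l_1-\Delta$ points ``inside'' the tail. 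For the \emph{extreme} sub-cases one only needs the weak bound: when $p_{\max}=r_1=ql_2+1$ with $ql_2\in[o_{12}-l_2,\, o_{12}-1]$, the forward distance from $p_{\max}$ to $p_{\min}$ is bounded by $l_1-ql_2+\alpha_2 \le l_1+l_2+\alpha_2-o_{12}$, and the case $p_{\min}=r_2$ is symmetric.

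The subtle sub-case, and the main obstacle, is ``option~3 for both''\negthinspace, since neither position is tied to a specific arithmetic value; both lie somewhere in the tail interval $[o_{12}-l_2+\alpha_2+2,\, l_1]$ of length $l_1+l_2-o_{12}-\alpha_2-1$. If $|p_{\max}-p_{\min}|\le l_1/2$ the strong bound follows from $\alpha_1=|p_{\max}-p_{\min}|\le l_1+l_2-o_{12}-\alpha_2-1$. Otherwise the interval must itself have length greater than $l_1/2$, which forces $o_{12}<l_1/2+l_2$, and then $\alpha_1 = l_1-|p_{\max}-p_{\min}| < l_1/2 < l_1+l_2-o_{12}$. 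Once this delicate sub-case is dispatched, the remaining verifications are mechanical applications of $r_i\in[o_{12}-l_2+1,\, o_{12}]$ together with the inequality $o_{12}<l_1+\alpha_2$ from Lemma~\ref{lem:breslauer1}.
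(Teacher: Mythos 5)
Your proposal is correct and takes essentially the same route as the paper: both proofs read the possible values of $\imax(w_{12})$ and $\imin(w_{12})$ off Lemma~\ref{lem:general-long-l1} and convert them into a bound on $\alpha_1$ via the circular distance between the two starting positions. The paper merely packages the case analysis more compactly---in each alternative both positions land in a single short wrap-around interval whose length bounds $\alpha_1$---which disposes of your ``option~3 for both'' sub-case without extra work and also fixes the one loose phrase in your extreme case, where ``the forward distance'' should be ``one of the two circular distances'' (the forward one need not obey the stated bound when $p_{\min}$ also lies near the right end).
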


\begin{proof}
Since we assume $o_{12} \ge l_2$, Lemma~\ref{lem:general-long-l1} describes all posibilities for $\imin(w_{12})$ and $\imax(w_{12})$. The rest is simple case analysis. 

If either $\imax(w_{12}) = l_2 \lfloor \frac{o_{12}-1}{l_2}\rfloor + 1$ or $\imin(w_{12}) = l_2 \lfloor \frac{o_{12}-\alpha_2-1}{l_2}\rfloor+\alpha_2+1$ (i.e.\ the second alternative in the statement of the lemma holds) then $\imin(w_{12})$ is either at most $\alpha_2+1$ or at least $o_{12}-l_2+1$ by Lemma~\ref{lem:general-long-l1}, and the same bounds hold for $\imax(w_{12})$. ``Wrapping around'' the end of $w_{12}$, they both land in an interval of length $l_1-(o_{12}-l_2)+\alpha_2 = l_2 + (l_1+\alpha_2-o_{12})$ and hence this quantity is also an upper bound on $\alpha_1$.

If neither $\imax(w_{12}) = l_2 \lfloor \frac{o_{12}-1}{l_2}\rfloor + 1$ nor $\imin(w_{12}) = l_2 \lfloor \frac{o_{12}-\alpha_2-1}{l_2}\rfloor+\alpha_2+1$, then Lemma~\ref{lem:general-long-l1} gives even stronger bounds. We have $\imin(w_{12}) \le \alpha_2 + 1$ or $\imin(w_{12}) > o_{12}-(l_2-\alpha_2)+1$ and the same bounds (in fact stronger) hold for $\imax(w_{12})$. Repeating the previous argument we get 
\[ \alpha_1 \le l_1-(o_{12}-(l_2-\alpha_2)) + \alpha_2 = l_1 + l_2 - o_{12}.\]
\end{proof}

The next two lemmas state some of the consequences of Lemmas~\ref{lem:general-long-l1} and~\ref{lem:general-long-l1-cor}, that are particularly easy to use.

\begin{lemma} 
\label{lem:delta}
If $l_1 \ge l_2$ then:
\begin{enumerate}
\item $o_{12} + \alpha_1 \le l_1 + l_2$ for $l_1 < 2l_2$,
\item $o_{12} + \alpha_1 \le 2l_1-l_2 = l_1+l_2 + (l_1-2l_2)$ for $2l_2 \le l_1 < \frac{5}{2}l_2$,
\item $o_{12} + \alpha_1 \le l_1 + l_2+\alpha_2$.
\end{enumerate}
\end{lemma}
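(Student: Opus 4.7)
The plan is to prove each part by case analysis on $o_{12}$, using Lemma~\ref{lem:general-long-l1-cor} as the main lever and Lemma~\ref{lem:breslauer1} (i.e.\ $o_{12} < l_1 + \alpha_2$) to cancel the $o_{12}$ term at the end. Throughout, one may assume by symmetry that $w_2$ is its own maximal rotation. Part~3 is immediate: when $o_{12} < l_2$, the universal bound $\alpha_1 \le l_1/2$ gives $o_{12} + \alpha_1 < l_2 + l_1/2 \le l_1 + l_2 + \alpha_2$; when $o_{12} \ge l_2$, Lemma~\ref{lem:general-long-l1-cor} is exactly the bound.

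For Part~1, since $l_1 < 2l_2$ we have $\alpha_1 \le l_1/2 < l_2$, so the range $o_{12} < l_1$ is immediate. When $o_{12} \ge l_1 \ge l_2$, the ``moreover'' clause of Lemma~\ref{lem:general-long-l1} fixes $\imin(w_{12}) = \alpha_2 + 1$ and forbids $\imax(w_{12}) = 1$; note also that $l_1 > l_2$ here, since $l_1 = l_2$ would contradict Lemma~\ref{lem:breslauer2}. Invoking Lemma~\ref{lem:general-long-l1-cor}: either $\alpha_1 \le l_1 + l_2 - o_{12}$ and we are done; or $\imax(w_{12}) = l_2 + 1$ (the unique remaining aligned position in $w_{12}$), in which case the forward cyclic distance from $\imin$ to $\imax$ equals $l_2 - \alpha_2$, giving $|\hat p_{\min}(w_1)| = l_2 - \alpha_2$ and hence $\alpha_1 \le l_2 - \alpha_2$, so Lemma~\ref{lem:breslauer1} closes the bound; or $\imax(w_{12}) > o_{12} - \alpha_2 + 1$ lies beyond the overlap, in which case $\imax - \imin > o_{12} - 2\alpha_2$ forces $\alpha_1 \le |\hat p_{\max}(w_1)| < l_1 - o_{12} + 2\alpha_2$, and $o_{12} + \alpha_1 < l_1 + 2\alpha_2 \le l_1 + l_2$.

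For Part~2, the range $o_{12} < 2l_2$ is trivial ($o_{12} + \alpha_1 < 2l_2 + l_1/2 \le 2l_1 - l_2$ by $l_1 \ge 2l_2$). For $o_{12} \ge 2l_2$, Lemma~\ref{lem:general-long-l1-cor}'s first alternative gives $o_{12} + \alpha_1 \le l_1 + l_2 \le 2l_1 - l_2$; in its second alternative we set (WLOG) $\imax(w_{12}) = 2l_2 + 1$, the new rightmost aligned position (which sits in $w_{12}$ since $l_1 > 2l_2$). I would then enumerate the three possibilities for $\imin(w_{12})$ from Lemma~\ref{lem:general-long-l1} -- leftmost ($\alpha_2 + 1$), rightmost aligned ($l_2 + \alpha_2 + 1$ or $2l_2 + \alpha_2 + 1$), or beyond the overlap -- and in each subcase read off the cyclic distance between $\imin$ and $\imax$ to bound $|\hat p_{\min}(w_1)|$ or $|\hat p_{\max}(w_1)|$, combining with Lemma~\ref{lem:breslauer1} and $\alpha_2 \le l_2/2$ to recover $o_{12} + \alpha_1 \le 2l_1 - l_2$. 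The dual case, with $\imin(w_{12})$ at its rightmost aligned position instead of $\imax$, is handled analogously.

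The main obstacle is the bookkeeping in Part~2's second alternative: the rightmost aligned position for $\imin$ is $l_2 + \alpha_2 + 1$ or $2l_2 + \alpha_2 + 1$ depending on whether $o_{12} \ge 2l_2 + \alpha_2 + 1$, and the beyond-overlap sub-option adds a further branch. Verifying that Lemma~\ref{lem:breslauer1} absorbs $o_{12}$ into exactly the constant $-l_2$ (and not something weaker like $-l_2 + \alpha_2$) in every branch is the one place where the calculation is genuinely tight; the rest is routine.
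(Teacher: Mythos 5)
Your Part~3 is correct (and even patches the small $o_{12}<l_2$ case the paper leaves implicit), and your Part~1 is correct and essentially the paper's argument: pin $\imin(w_{12})=\alpha_2+1$ via the ``moreover'' clause, then split on $\imax(w_{12})=l_2+1$ versus $\imax(w_{12})>o_{12}-\alpha_2+1$ and cancel $o_{12}$ with Lemma~\ref{lem:breslauer1}.

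Part~2, however, has a genuine gap, and it is exactly the place you flag as ``bookkeeping.'' By splitting at $o_{12}\ge 2l_2$ rather than at $o_{12}\ge l_1$, you push the window $2l_2\le o_{12}<l_1$ into the hard branch, where the ``moreover'' clause of Lemma~\ref{lem:general-long-l1} does \emph{not} apply and $\imin(w_{12})$ is not pinned to $\alpha_2+1$. Your enumeration is only sketched (``I would then enumerate\dots''), and the recipe you propose (read off the cyclic distance, then absorb $o_{12}$ via Lemma~\ref{lem:breslauer1} and $\alpha_2\le l_2/2$) does not close every branch it generates: in the dual branch where $\imin(w_{12})$ sits at its rightmost aligned position and $\imax(w_{12})$ lies beyond the overlap, the distance bounds only give $\alpha_1 < l_1+l_2+2\alpha_2-o_{12}$, hence $o_{12}+\alpha_1 < l_1+l_2+2\alpha_2$, which can exceed $2l_1-l_2$ throughout the regime $2l_2\le l_1<\frac{5}{2}l_2$. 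That branch is in fact harmless, but only because it forces $o_{12}<l_1$ (for $o_{12}\ge l_1$ the ``moreover'' clause would pin $\imin(w_{12})=\alpha_2+1$, which is not the rightmost aligned position), and then the trivial bound $\alpha_1\le\frac{1}{2}l_1$ gives $o_{12}+\alpha_1<\frac{3}{2}l_1\le 2l_1-l_2$ --- an observation absent from your plan. The paper avoids all of this by splitting at $o_{12}<l_1$ versus $o_{12}\ge l_1$: the first case is killed by $\frac{3}{2}l_1\le 2l_1-l_2$, and in the second the forced equality $\imin(w_{12})=\alpha_2+1$ collapses the second alternative of Lemma~\ref{lem:general-long-l1-cor} to the single option $\imax(w_{12})=2l_2+1$, giving $\alpha_1\le(l_1-2l_2)+\alpha_2$ and $o_{12}+\alpha_1\le(l_1+\alpha_2)+(l_1-2l_2)+\alpha_2\le 2l_1-l_2$. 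You should restructure Part~2 along these lines (or explicitly add the $o_{12}<l_1$ fallback to your problematic branches) before the proof can be considered complete.
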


\begin{proof}
In the proof, we assume w.l.o.g.\ that $w_2$ is its maximal rotation.

Let us first consider the case of $l_1 < 2l_2$. If $o_{12} < l_1$ then we get the claim, since $\alpha_1 \le \frac{1}{2}l_1 \le l_2$. On the other hand, if $o_{12} \ge l_1$ (note that in this case $l_1 > l_2$), then  by Lemma~\ref{lem:general-long-l1} we have $\imin(w_{12}) = \alpha_2 + 1$ and $\imax(w_{12}) \in \{l_2+1\} \cup (o_{12}-\alpha_2+1,\ldots,l_1]$. Therefore, we either have $\alpha_1 \le l_2-\alpha_2$ or $\alpha_1 \le (l_1-o_{12}+\alpha_2)+\alpha_2$ and in both cases it is easy to verify that our claim is true.

To prove the second inequality, we consider three cases: 

\case{1} If $o_{12} < l_1$ then
\[ o_{12} + \alpha_1 \le l_1 + \alpha_1 \le \frac{3}{2}l_1 \le 2l_1 - l_2.\]

\case{2} If the first alternative in Lemma~\ref{lem:general-long-l1-cor} holds, i.e.\ $\alpha_1 \le l_1 + l_2 - o_{12}$ then
\[ o_{12}+ \alpha_1 \le l_1 + l_2 \le 2l_1 - l_2.\]

\case{3} We are left with the case where the second alternative of Lemma~\ref{lem:general-long-l1-cor} holds. Since $o_{12} \ge l_1$ and so $\imin(w_{12}) = \alpha_2 + 1$, this means that $\imax(w_{12}) = 2l_2+1$. It follows that $\alpha_1 \le (l_1-2l_2) + \alpha_2$ and so
\[ o_{12} + \alpha_1 \le (l_1 + \alpha_2) + (l_1-2l_2) + \alpha_2 \le 2l_1-l_2.\] 

The third inequality of Lemma~\ref{lem:delta} follows immediately from the inequality $\alpha_1 \le l_2 + (l_1+\alpha_2-o_{12})$ in the first part of Lemma~\ref{lem:general-long-l1-cor}.
\end{proof}

Let $\Delta o_{ij} = (l_i+\frac{1}{2}l_j)-o_{ij}$ and $\Delta \alpha_i = \frac{1}{2} l_i - \alpha_i$. These basically measure how much smaller $o_{ij}$ and $\alpha_i$ are from their maximum values.

\begin{corollary}
\label{cor:delta}\ 
\begin{enumerate}
\item $\Delta o_{12} + \Delta \alpha_1 \ge \frac{1}{2}(l_1 - l_2)$ if $l_1 < 2l_2$, 
\item $\Delta o_{12} + \Delta \alpha_1 \ge \frac{1}{4}(l_1 - l_2)$ if $l_1 \ge 3l_2$, and
\item $\Delta o_{12} + \Delta \alpha_1 \ge \frac{1}{6}(l_1 - l_2)$.
\end{enumerate}
\end{corollary}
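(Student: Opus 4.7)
The plan is to observe that, by definition,
\[
\Delta o_{12} + \Delta \alpha_1 \;=\; \tfrac{3}{2}l_1 + \tfrac{1}{2}l_2 - (o_{12} + \alpha_1),
\]
so each of the three inequalities in the corollary is just the corresponding upper bound on $o_{12}+\alpha_1$ from Lemma~\ref{lem:delta}, rearranged. The whole proof is therefore a short case analysis on the ratio $l_1/l_2$ that selects which bound from Lemma~\ref{lem:delta} to plug in.

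For part~1, assuming $l_1 < 2l_2$, substitute the first bound $o_{12}+\alpha_1 \le l_1+l_2$ from Lemma~\ref{lem:delta} to get $\Delta o_{12}+\Delta\alpha_1 \ge \tfrac{1}{2}l_1 - \tfrac{1}{2}l_2 = \tfrac{1}{2}(l_1-l_2)$. For part~2, assuming $l_1 \ge 3l_2$, substitute the third bound $o_{12}+\alpha_1 \le l_1+l_2+\alpha_2$ together with $\alpha_2 \le \tfrac{1}{2}l_2$ to obtain $\Delta o_{12}+\Delta\alpha_1 \ge \tfrac{1}{2}l_1 - l_2$; this is at least $\tfrac{1}{4}(l_1-l_2)$ exactly when $l_1 \ge 3l_2$, which is our hypothesis.

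The third inequality is the one requiring all three bounds of Lemma~\ref{lem:delta}, and will be proved by splitting the interval $[0,\infty)$ for $l_1/l_2$ into a handful of subranges. If $l_1 \le l_2$ the claim is trivial since $\Delta o_{12} \ge 0$ (by Lemma~\ref{lem:breslauer1} and $\alpha_2 \le l_2/2$) and $\Delta\alpha_1 \ge 0$ while the right hand side is non-positive. If $l_2 < l_1 < 2l_2$, part~1 gives $\tfrac{1}{2}(l_1-l_2) \ge \tfrac{1}{6}(l_1-l_2)$. If $2l_2 \le l_1 < \tfrac{5}{2}l_2$, the second bound of Lemma~\ref{lem:delta} yields $\Delta o_{12}+\Delta\alpha_1 \ge \tfrac{3}{2}l_2-\tfrac{1}{2}l_1$, and the required inequality $\tfrac{3}{2}l_2-\tfrac{1}{2}l_1 \ge \tfrac{1}{6}(l_1-l_2)$ simplifies to $l_1 \le \tfrac{5}{2}l_2$, i.e.\ exactly the upper endpoint of this subrange. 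If $\tfrac{5}{2}l_2 \le l_1 < 3l_2$, the computation from part~2 still gives $\Delta o_{12}+\Delta\alpha_1 \ge \tfrac{1}{2}l_1 - l_2$, and $\tfrac{1}{2}l_1-l_2 \ge \tfrac{1}{6}(l_1-l_2)$ simplifies to $l_1 \ge \tfrac{5}{2}l_2$, which again matches the subrange. Finally $l_1 \ge 3l_2$ is handled by part~2 directly.

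There is no real obstacle: Lemma~\ref{lem:delta} does all the heavy lifting. The only mildly delicate point is that the bounds from Lemma~\ref{lem:delta} dominate each other on overlapping intervals, so one has to pick the subranges so that at each boundary $l_1/l_2 \in \{1,2,\tfrac{5}{2},3\}$ the inequality one switches to remains tight enough; the computations above show that $\tfrac{5}{2}l_2$ is precisely the crossover point between inequalities~(2) and~(3) of Lemma~\ref{lem:delta} for part~3, and that the constant $\tfrac{1}{6}$ is exactly what the worst case at $l_1=\tfrac{5}{2}l_2$ allows.
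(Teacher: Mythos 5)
Your proposal is correct and follows essentially the same route as the paper: rewrite $\Delta o_{12}+\Delta\alpha_1$ as $\frac{3}{2}l_1+\frac{1}{2}l_2-(o_{12}+\alpha_1)$, plug in the three bounds of Lemma~\ref{lem:delta}, and for the third inequality split at $l_1=\frac{5}{2}l_2$ exactly as the paper does (using bound~2 below and bound~3 above that threshold). Your explicit treatment of the trivial case $l_1\le l_2$ is a harmless small addition not spelled out in the paper.
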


\begin{proof}
For the first inequality, we have by Lemma~\ref{lem:delta}
\[ \Delta o_{12} + \Delta \alpha_1 = \frac{3}{2}l_1 + \frac{1}{2}l_2 - o_{12} - \alpha_1 \ge \frac{3}{2}l_1 + \frac{1}{2}l_2 - l_1 - l_2  = \frac{1}{2}(l_1-l_2).\]

For the second inequality, we have by Lemma~\ref{lem:delta}
\[ \Delta o_{12} + \Delta \alpha_1 \ge \frac{3}{2}l_1 + \frac{1}{2}l_2 - (l_1+l_2+\alpha_2) = \frac{1}{2}l_1 - \frac{1}{2}l_2 - \alpha_2 \ge \frac{1}{2}l_1 - l_2 \ge \frac{1}{4}l_1 + \frac{3}{4}l_2 - l_2 = \frac{1}{4}(l_1-l_2).\]

Clearly, we only need to prove the third inequality for $2l_2 \le l_1 < 3l_2$. We consider two cases:

\case{1} If $2l_2 \le l_1 \le \frac{5}{2}l_2$ then
\[ \Delta o_{12} + \Delta \alpha_1 \ge \frac{3}{2}l_1 + \frac{1}{2}l_2 - (2l_1-l_2)  = \frac{3}{2}l_2 - \frac{1}{2}l_1 = \frac{1}{6}(l_1-l_2) + \frac{10}{6}l_2 - \frac{4}{6}l_1 \ge \frac{1}{6}(l_1-l_2).\]

\case{2} If $\frac{5}{2}l_2 \le l_1 < 3l_2$ then we have
\[ \Delta o_{12} + \Delta \alpha_1 \ge \frac{3}{2}l_1 + \frac{1}{2}l_2 - (l_1+l_2+\alpha_2)   \ge \frac{1}{2}l_1 - l_2 \ge \frac{1}{6}l_1 + \frac{5}{2} \cdot \frac{1}{3} l_2 - l_2 = \frac{1}{6}(l_1-l_2).\]
 
\end{proof}

The last lemma is not used in the proof of the main theorem. Nevertheless, we decided to include it in this section, as we believe it might turn out useful in future developments.

\begin{lemma}
\label{lem:general-short-l1}
Let $l_1 \le l_2$ and let $w_2$ be its maximal rotation. If $o_{12} \ge l_1 + \alpha_2 - \alpha_1$, then $\alpha_1 \le |\alpha_2 - kl_1|$ for all positive integers $k$.
\end{lemma}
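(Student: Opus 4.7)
The plan is to prove the contrapositive: assume there is a positive integer $k_0$ with $\alpha_1 > |\alpha_2 - k_0 l_1|$, and derive $o_{12} < l_1 + \alpha_2 - \alpha_1$. First I would record the structural setup. Since $l_1 \le l_2$, Lemma~\ref{lem:breslauer2} gives $o_{12} < l_2$, and as $\ov(x_1,x_2)$ is a prefix of $w_2^\infty$ this means $\ov(x_1,x_2) = w_2[1, o_{12}]$. Simultaneously $\ov(x_1,x_2)$ is a prefix of $w_{12}^\infty$ for some rotation $w_{12}$ of $w_1$, so $w_2[1, o_{12}]$ has period $l_1$; iterating, $w_2[i] = w_2[i + jl_1]$ for every $j \ge 1$ and every $i$ with $i + jl_1 \le o_{12}$. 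The hypothesis on $k_0$ splits into Case~A ($k_0 l_1 \le \alpha_2$, so $\alpha_2 - k_0 l_1 < \alpha_1$) and Case~B ($k_0 l_1 > \alpha_2$, so $k_0 l_1 - \alpha_2 < \alpha_1$); in either case $k_0 l_1$ sits in an $\alpha_1$-window around $\alpha_2$.

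The main tool is Lemma~\ref{lem:unique} applied on both sides. Since $w_2 = (w_2)_{\max}$, any occurrence of $\pmax(w_2)$ as a substring of $w_2^\infty$ must lie at a position $\equiv 1 \pmod{l_2}$, and the symmetric statement for $(w_2)_{\min}$ says any occurrence of $\pmin(w_2)$ must lie at a position $\equiv \alpha_2 + 1 \pmod{l_2}$. Applying the first at the shifted position $k_0 l_1 + 1$ already gives a preliminary bound: if $o_{12} \ge k_0 l_1 + \alpha_2$, periodicity yields $w_2[k_0 l_1 + 1, k_0 l_1 + \alpha_2] = w_2[1, \alpha_2] = \pmax(w_2)$, forcing $k_0 l_1 \equiv 0 \pmod{l_2}$; but $0 < k_0 l_1 < l_2$ in both cases, a contradiction. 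Hence $o_{12} < k_0 l_1 + \alpha_2$.

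To tighten this to the target $o_{12} < l_1 + \alpha_2 - \alpha_1$, I would then bring in the $\pmin$-side. In Case~A the rotation $r''$ of $w_2$ starting at $\alpha_2 + 1 - k_0 l_1 \in [1, \alpha_2]$ agrees with $(w_2)_{\min}$ on its first $o_{12} - \alpha_2$ characters (by periodicity) but is not equal to $(w_2)_{\min}$, so $r'' \succ (w_2)_{\min}$ with a first-difference position $i_1 \ge o_{12} - \alpha_2 + 1$ satisfying $w_2[\alpha_2 - k_0 l_1 + i_1] > w_2[\alpha_2 + i_1]$. Paired with the analogous first-difference data from the preliminary step (the rotation of $w_2$ at $k_0 l_1 + 1$ yields $w_2 \succ r$ with $w_2[i_0] > w_2[k_0 l_1 + i_0]$ for some $i_0 \in [o_{12} - k_0 l_1 + 1, \alpha_2]$), the periodicity-induced equalities for the positions in between, and the unborderedness of $w_2 = (w_2)_{\max}$ (Lemma~\ref{lem:unbordered}), these two character inequalities should be incompatible unless $o_{12} < l_1 + \alpha_2 - \alpha_1$. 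Case~B would be treated symmetrically with the left- and right-shifts swapped and the roles of $\pmax$ and $\pmin$ exchanged. I expect the main obstacle to be this final synthesis step: when $o_{12}$ lies in the stubborn window $[l_1 + \alpha_2 - \alpha_1,\, k_0 l_1 + \alpha_2)$, neither $\pmax(w_2)$ nor $\pmin(w_2)$ fits fully at the shifted position, so Lemma~\ref{lem:unique} can only be invoked through partial matches, and the quantitative leverage $|\alpha_2 - k_0 l_1| < \alpha_1$ must be stitched together carefully with the periodicity and unborderedness to close the $\alpha_1$-gap between the preliminary and the target bound.
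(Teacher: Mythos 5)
Your preliminary reductions are fine (with $l_1\le l_2$, Lemma~\ref{lem:breslauer2} gives $o_{12}<l_2$, so $\ov_{12}=w_2[1,o_{12}]$ has period $l_1$, and the shifted-copy argument with Lemma~\ref{lem:unique} correctly yields $o_{12}<k_0l_1+\alpha_2$), but the decisive step --- closing the gap from $k_0l_1+\alpha_2$ down to $l_1+\alpha_2-\alpha_1$ --- is exactly the part you leave as a hope ("should be incompatible"), and it is a genuine gap, not a routine verification. The reason it resists the tools you assembled is structural: every ingredient in your plan (periodicity of the overlap, Lemma~\ref{lem:unique} and Lemma~\ref{lem:unbordered} applied to $w_2$, the two first-difference inequalities) refers only to letters of $w_2$, partly beyond the overlap, while $\alpha_1$ enters your argument purely as a number in the window arithmetic. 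Nothing in your data ties $\alpha_1$ to the content of the overlap, so there is no mechanism that can produce the $-\alpha_1$ term in the target bound; the two character inequalities you extract are not by themselves contradictory for $o_{12}$ in the window $[\,l_1+\alpha_2-\alpha_1,\;k_0l_1+\alpha_2)$.

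The paper gets the $\alpha_1$-dependence by applying the uniqueness argument on the $w_1$ side rather than the $w_2$ side: since $o_{12}\ge l_1+\alpha_2-\alpha_1$, the prefix $w_2[1,|\pmax(w_1)|]$ and the segment $w_2[\alpha_2+1,\alpha_2+|\pmin(w_1)|]$ both lie inside $\ov_{12}$, hence are factors of $w_{12}^\infty$, so $\pmax(w_1)\succeq w_2[1,|\pmax(w_1)|]$ and $\pmin(w_1)\preceq w_2[\alpha_2+1,\alpha_2+|\pmin(w_1)|]$; if the occurrence of $\pmax(w_1)$ (resp.\ $\pmin(w_1)$) starting at $\imax(w_{12})$ (resp.\ $\imin(w_{12})$) fits inside the overlap, the reverse comparison forces equality and Lemma~\ref{lem:unique} applied to $w_1$ pins $\imax(w_{12})=1$ (resp.\ $\imin(w_{12})=(\alpha_2+1)\bmod l_1$), and otherwise that starting position is pushed past $o_{12}-(l_1-\alpha_1)+1\ge\alpha_2+1$. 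Since $\alpha_1$ is the smaller cyclic distance between $\imax(w_{12})$ and $\imin(w_{12})$, a short case check then gives $\alpha_1\le|\alpha_2-kl_1|$ for \emph{all} positive $k$ simultaneously, rather than one $k_0$ at a time. If you want to salvage your contrapositive plan, this $w_1$-side localisation is the missing idea you would have to import; as written, the proposal stops short of the lemma's actual content.
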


\begin{remark}
The most important consequence of the above lemma is that if $l_2 \sim 2l_1$, then we cannot have $\alpha_1 \sim \frac{1}{2}l_1$, $\alpha_2 \sim \frac{1}{2}l_2$ and $o_{12} \sim l_1 + \alpha_2$ all happening at the same time.
\end{remark}

\begin{proof}
We have $o_{12} \ge l_1 - \alpha_1 \ge |\pmax(w_{12})|$, so $w_2[1,|\pmax(w_{12})|] = (w_2)_{\max}[1,|\pmax(w_{12})|]$ is a substring of $\ov_{12}$, and in particular $\pmax(w_{12}) \succeq (w_2)_{\max}[1,|\pmax(w_{12})|]$. Therefore, if $\pmax(w_{12})$ is contained in $\ov_{12}$ we need to have $\pmax(w_{12}) = w_{12}[1,|\pmax(w_{12})|]$ and $\imax(w_{12})=1$ by Lemma~\ref{lem:unique}. If, on the other hand, $\pmax(w_{12})$ is not contained in $\ov_{12}$, then we have $\imax(w_{12}) > o_{12}-(l_1-\alpha_1)+1 \ge \alpha_2+1$ and so $\imax(w_{12}) \in [\alpha_2+1,\ldots,l_1]$.

Similarly we can see that $\ov_{12}$ contains $w_2[\alpha_2+1,\alpha_2+|\pmin(w_{12})|]$, and so if $\pmin(w_{12})$ is contained in $\ov_{12}$, we have $\pmin(w_{12}) = x_2[\alpha_2+1,\alpha_2+|\pmin(w_{12})|]$, and by Lemma~\ref{lem:unique} $\imin(w_{12}) = (\alpha_2+1) \mod l_1$. Otherwise, we have $\imin(w_{12}) \in [\alpha_2+1,\ldots,l_1]$.

It is easy to verify that in all cases for $\imax(w_{12})$ and $\imin(w_{12})$ we get $\alpha_1 \le |\alpha_2-kl_1|$ for all positive integers $k$. 
\end{proof}

\section{The Proof of the Main Theorem}
\label{sec:proof}

In this section we present the proof of Theorem~\ref{thm:main-local}. We first introduce some additional definitions and technical lemmas, designed specifically for this proof, in Subsection~\ref{subsec:proof-prelims}. Since the proof itself is a rather long and detailed case analysis, in Subsection~\ref{subsec:weaker} we present a simple proof of a weaker version of Theorem~\ref{thm:main-local}. This weaker statement still gives an approximation ratio below $2\frac{1}{2}$. The proof of Theorem~\ref{thm:main-local} follows, for easier reading split into a subsection covering some basic observations and four subsections corresponding to different cycle lengths.

\subsection{Preliminaries}
\label{subsec:proof-prelims}

We keep the notation from previous chapters. In particular, for a cycle $C=x_1 \rightarrow x_2 \rightarrow \ldots \rightarrow x_k \rightarrow x_1$, we are interested in bounding 
$M=M_C = \min\{o_{ij} : (x_i,x_j) \in C\}$ and $O=O_C = \sum_{(x_i,x_j) \in C} o_{ij}$ in terms of $L=L_C=\sum_{i=1}^k l_i$. Recall also, that $\Delta o_{ij} = (l_i+\frac{1}{2}l_j)-o_{ij}$ and $\Delta \alpha_i = \frac{1}{2} l_i - \alpha_i$. Let $\Delta O = \sum_{(x_i,x_j) \in C} \Delta o_{ij} = \frac{3}{2}L - O$. 

We now introduce a couple more definitions. We call an edge $x_i \rightarrow x_j$ a \emph{down-edge} if $l_i \ge l_j$ and we call it an \emph{up-edge} otherwise. We denote the sets of down-edges and up-edges of $C$ by $C_d$ and $C_u$ respectively. A down-edge $x_i \rightarrow x_j$ is \emph{steep} if $l_i \ge 2l_j$, otherwise it is \emph{flat}. Similarly an up-edge $x_i \rightarrow x_j$ is \emph{steep} if $l_i \le \frac{1}{2} l_j$, and \emph{flat} otherwise.

Finally let $\lmin$ and $\lmax = l_1$ be the smallest and the largest among $l_1,\ldots,l_k$ breaking ties arbitrarily.

\begin{lemma}
\label{lem:delta-up}
For any up-edge $x_i\rightarrow x_j$ we have
\[ \Delta o_{ij} \ge l_i - \frac{1}{2}l_j \ge \lmin - \frac{1}{2}\lmax.\]
\end{lemma}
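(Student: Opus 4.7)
The plan is to peel off the two inequalities one at a time, the first by direct application of Lemma~\ref{lem:breslauer2} and the second by the defining inequalities $l_i \ge \lmin$ and $l_j \le \lmax$. For the first inequality, I would unpack the definition $\Delta o_{ij} = (l_i + \tfrac{1}{2}l_j) - o_{ij}$, so that the claim $\Delta o_{ij} \ge l_i - \tfrac{1}{2}l_j$ is equivalent to the assertion $o_{ij} \le l_j$.

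To establish $o_{ij} \le l_j$, I would appeal to Lemma~\ref{lem:breslauer2} applied to the pair $(i,j)$: since $x_i \to x_j$ is an up-edge we have $l_i < l_j$, in particular $l_i \le l_j$, so Lemma~\ref{lem:breslauer2} yields $o_{ij} < l_j$. Substituting back gives
\[\Delta o_{ij} = l_i + \tfrac{1}{2}l_j - o_{ij} \;>\; l_i - \tfrac{1}{2}l_j,\]
which is even a strict version of the desired bound.

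For the second inequality, I would use that $\lmin$ and $\lmax$ are defined as the smallest and largest of the $l_r$. Since $l_i \ge \lmin$ and $l_j \le \lmax$, one has $-\tfrac{1}{2}l_j \ge -\tfrac{1}{2}\lmax$, and adding these gives $l_i - \tfrac{1}{2}l_j \ge \lmin - \tfrac{1}{2}\lmax$, finishing the chain.

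There is really no hard step here: the content of the lemma is packaging Lemma~\ref{lem:breslauer2} in the $\Delta o_{ij}$ notation and then taking the trivial worst-case estimate over the cycle to isolate $\lmin$ and $\lmax$. If anything, the only point to double-check is that Lemma~\ref{lem:breslauer2}'s hypothesis (non-equivalent nice strings, $w_i$-strings) is indeed in force throughout Section~\ref{sec:proof}, which is explicitly asserted at the start of Subsection~\ref{subsec:proof-prelims} where the notation is inherited.
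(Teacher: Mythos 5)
Your proof is correct and follows essentially the same route as the paper's: the first inequality is Lemma~\ref{lem:breslauer2} restated in the $\Delta o_{ij}$ notation, and the second is just $l_i \ge \lmin$, $l_j \le \lmax$. The only (harmless) difference is that you apply Lemma~\ref{lem:breslauer2} uniformly to every up-edge, whereas the paper first dismisses steep up-edges by noting that $l_i - \frac{1}{2}l_j \le 0$ there; your version is, if anything, slightly cleaner.
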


\begin{proof}
The second inequality is obvious. 

As for the first, there is nothing to prove for steep up-edges since then $l_i - \frac{1}{2}l_j \le 0$. For flat up-edges we have
\[ \Delta o_{ij} \ge \Big(l_i + \frac{1}{2}l_j\Big) - l_j = l_i - \frac{1}{2}l_j,\]
by Lemma~\ref{lem:breslauer2}. 
\end{proof}

\begin{lemma}
\label{lem:delta-app}
For any cycle $C$ we have
\[\Delta O \ge \frac{1}{12}\sum_{(x_i,x_j) \in C_d} \Big(l_i-l_j\Big) \ge \frac{1}{12}\Big(\lmax-\lmin\Big).\]
The constant can be improved to $\frac{1}{8}$ if there are no two consecutive steep down-edges in $C$, and to $\frac{1}{4}$, if there are no steep down-edges in $C$.
\end{lemma}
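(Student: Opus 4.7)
The plan is to combine two local one-edge bounds via a convex combination chosen per edge, sum over the cycle, and use the non-negativity of $\Delta\alpha_v$ to discard the remaining $\Delta\alpha$-terms as long as a simple constraint is satisfied at every DD vertex (defined below).

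I first record two bounds. For every edge $e=x_i\to x_j$ (up or down), Lemma~\ref{lem:breslauer1} rewrites as $\Delta o_{ij}\ge \Delta\alpha_j$. For every down-edge, Corollary~\ref{cor:delta} gives $\Delta o_{ij}+\Delta\alpha_i\ge c_e(l_i-l_j)$, where $c_e\ge\tfrac12$ if $e$ is flat, $c_e\ge\tfrac14$ if $l_i\ge 3l_j$, and $c_e\ge\tfrac16$ for every down-edge. For each down-edge $e$ I pick a weight $\lambda_e\in[0,1]$ and take the convex combination
\[
\Delta o_e\;\ge\;\lambda_e\bigl(c_e(l_i-l_j)-\Delta\alpha_i\bigr)+(1-\lambda_e)\Delta\alpha_j,
\]
while for every up-edge I use only $\Delta o_e\ge\Delta\alpha_j$. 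Classifying each vertex $v$ by the types of its incoming/outgoing edges (UU, UD, DU, or DD), the coefficient of $\Delta\alpha_v$ after summing over $C$ is $+1$ for UU, $1-\lambda_{\text{out}}$ for UD, $1-\lambda_{\text{in}}$ for DU, and $1-\lambda_{\text{in}}-\lambda_{\text{out}}$ for DD. Provided $\lambda_{\text{in}}+\lambda_{\text{out}}\le 1$ at every DD vertex, all these coefficients are non-negative, and since $\Delta\alpha_v\ge 0$ we may drop the corresponding terms to get
\[
\Delta O\;\ge\;\sum_{e\in C_d}\lambda_e c_e(l_i-l_j).
\]

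It then remains to choose the weights. For the main bound, take $\lambda_e=\tfrac12$ uniformly: the DD-constraint holds with equality and, using $c_e\ge\tfrac16$, each down-edge contributes at least $\tfrac1{12}(l_i-l_j)$. If no two consecutive down-edges are steep, set $\lambda_e=\tfrac14$ on flat down-edges and $\lambda_e=\tfrac34$ on steep ones; by hypothesis, at each DD vertex at most one of the two incident down-edges is steep, so $\lambda_{\text{in}}+\lambda_{\text{out}}\le\tfrac34+\tfrac14=1$, and each down-edge contributes $\tfrac14\cdot\tfrac12(l_i-l_j)=\tfrac18(l_i-l_j)$ (flat) or $\tfrac34\cdot\tfrac16(l_i-l_j)=\tfrac18(l_i-l_j)$ (steep). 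If there are no steep down-edges at all, every $c_e\ge\tfrac12$ and $\lambda_e=\tfrac12$ yields coefficient $\tfrac14$.

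To finish, I use the elementary fact that $\sum_{C_d}(l_i-l_j)\ge\lmax-\lmin$: the cycle must contain a path from a length-$\lmin$ vertex to a length-$\lmax$ vertex along which the net ascent equals $\lmax-\lmin$, so the total descent, which equals the total ascent around the cycle, is at least $\lmax-\lmin$. The only potentially delicate point is the coefficient bookkeeping at DD vertices, but once the convex-combination framework is set up, this reduces to verifying $\lambda_{\text{in}}+\lambda_{\text{out}}\le 1$ in three prescribed cases, which is immediate from the definitions.
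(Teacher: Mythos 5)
Your proposal is correct and is essentially the paper's own argument: both combine Corollary~\ref{cor:delta} for each down-edge with the bound $\Delta o_e \ge \Delta\alpha_{\mathrm{head}(e)}$ coming from Lemma~\ref{lem:breslauer1}, and amortize around the cycle (the paper charges each down-edge to itself and its predecessor and counts each edge at most twice, which is exactly your uniform $\lambda_e=\tfrac12$ choice, while its $2\times$first$+$second combination in the non-consecutive-steep case corresponds to your $\lambda=\tfrac34$/$\tfrac14$ weights). Your per-edge convex-combination and vertex-coefficient bookkeeping is just a tidier reformulation of the same double-counting, and all the auxiliary facts you invoke ($\Delta\alpha_v\ge 0$, total descent $\ge \lmax-\lmin$) are valid.
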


\begin{proof}
Let $x_i \rightarrow x_j$ be a down-edge, and let $x_l \rightarrow x_i$ be the edge preceding it on $C$. Then we get from Corollary~\ref{cor:delta}:
\begin{equation}
\label{eqn:sum-delta}
\Delta o_{li} + \Delta o_{ij} \ge \Delta \alpha_i + \Delta o_{ij} \ge \frac{1}{6}\Big(l_i-l_j\Big)
\end{equation} 
The right-hand side of the sum of inequality~\eqref{eqn:sum-delta} over all down-edges is upper-bounded by $2\Delta O$ and the claim follows. 

If there are no steep down-edges in $C$, then this reasoning can be repeated using the sharper bound in Corollary~\ref{cor:delta}.

Finally, if there are no two consecutive steep down-edges in $C$, then let $C_s$ be the set of steep down-edges and consider the sum of inequality~\eqref{eqn:sum-delta} over $C_s$. Since steep down-edges are nonconsecutive, the right-hand side of this inequality is upperbounded by $\Delta O$, and so
\[ \Delta O \ge \frac{1}{6} \sum_{(x_i,x_j) \in C_s} \Big(l_i - l_j\Big) .\]
We can also slightly improve the first part of the proof by using a stronger bound for flat edges to obtain:
\[ 2\Delta O \ge \frac{1}{6} \sum_{(x_i,x_j) \in C_s} \Big(l_i - l_j\Big) + \frac{1}{2} \sum_{(x_i,x_j) \in C_d\setminus C_s} \Big(l_i - l_j\Big).\]
Adding twice the first inequality to the second one, we get
\[ 4\Delta O \ge \frac{1}{2} \sum_{(x_i,x_j) \in C_s} \Big(l_i - l_j\Big) + \frac{1}{2} \sum_{(x_i,x_j) \in C_d\setminus C_s} \Big(l_i - l_j\Big),\]
and the claim follows.
\end{proof}

\begin{lemma}
\label{lem:delta-flat}
If $\lmin > \frac{1}{2} \lmax$ then
\[ \Delta O \ge \frac{1}{4} \lmin.\]
\end{lemma}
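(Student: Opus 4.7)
The plan is to use the hypothesis $\lmin > \frac{1}{2}\lmax$ to rule out steep edges, and then, on every edge, to combine two different lower bounds on $\Delta o_{ij}$ so that when summed around the cycle the troublesome $\Delta \alpha$-terms telescope and the residual length contribution adds up to at least $\frac{1}{4}\lmin$.

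First I would observe that for any two vertices, $l_i \le \lmax < 2\lmin \le 2l_j$, which forces every edge of $C$ to be flat. For every edge $x_i \rightarrow x_j$ the bound $o_{ij} < l_i + \alpha_j$ from Lemma~\ref{lem:breslauer1} can be rewritten as the universal estimate $\Delta o_{ij} > \Delta \alpha_j$. I would pair this with an edge-type-specific one: Corollary~\ref{cor:delta}(1) gives $\Delta o_{ij} \ge \tfrac{1}{2}(l_i-l_j) - \Delta \alpha_i$ on flat down-edges, while Lemma~\ref{lem:delta-up} gives $\Delta o_{ij} \ge l_i - \tfrac{1}{2}l_j$ on flat up-edges. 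Summing the universal estimate over all edges gives a quantity $S_1 = \sum_{(x_i,x_j) \in C}\Delta \alpha_j$, and summing the edge-type bounds gives $S_2 = \sum_{C_d}\bigl[\tfrac{1}{2}(l_i-l_j) - \Delta \alpha_i\bigr] + \sum_{C_u}\bigl[l_i - \tfrac{1}{2}l_j\bigr]$; adding these two estimates on each edge yields $2\Delta O \ge S_1 + S_2$.

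Next I would simplify $S_1 + S_2$ using that every vertex is the source of exactly one edge and the target of exactly one edge, so $\sum_{C_d} l_i + \sum_{C_u} l_i = L = \sum_{C_d} l_j + \sum_{C_u} l_j$ and $S_1 = \sum_v \Delta \alpha_v = \sum_{C_d} \Delta \alpha_i + \sum_{C_u} \Delta \alpha_i$. The $\Delta \alpha_i$-terms over $C_d$ then cancel and the length contributions rearrange to $\tfrac{1}{2}\sum_{C_u} l_i$, leaving
\[ \Delta O \;\ge\; \tfrac{1}{4}\sum_{(x_i,x_j) \in C_u} l_i \;+\; \tfrac{1}{2}\sum_{(x_i,x_j) \in C_u} \Delta \alpha_i \;\ge\; \tfrac{1}{4}\sum_{(x_i,x_j) \in C_u} l_i. \]
If $C_u \neq \emptyset$, each source of an up-edge has length $\ge \lmin$, so $\Delta O \ge \tfrac{1}{4}\lmin$ as required. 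If $C_u = \emptyset$, the cycle consists entirely of down-edges, which around a cycle forces all $l_i$ equal to a common value $l = \lmin$; then Lemma~\ref{lem:breslauer2} applied to each of the (at least two) edges gives $o_{ij} < l$, hence $\Delta o_{ij} > l/2$, and $\Delta O > l = \lmin$ directly.

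The main obstacle is that each individual bound is too weak to survive summation around the cycle: the down-edge bound is burdened by $\Delta \alpha_i$, which may be as large as $l_i/2$, and the naive Lemma~\ref{lem:delta-app} approach only produces $\tfrac{1}{4}(\lmax - \lmin)$, which is useless when $\lmax$ is close to $\lmin$. The point of combining the universal bound $\Delta o_{ij} > \Delta \alpha_j$ with the edge-type bounds is precisely that the unfavourable $\Delta \alpha_i$-contributions telescope cyclically, leaving a residual controlled by the lengths of the up-edges, each of which is at least $\lmin$.
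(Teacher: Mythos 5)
Your proof is correct and follows essentially the paper's own route: the paper likewise restricts to flat edges, combines the bound $\Delta o_{li} \ge \Delta\alpha_i$ (coming from Lemma~\ref{lem:breslauer1}) with Corollary~\ref{cor:delta} on flat down-edges and Lemma~\ref{lem:delta-up} on up-edges, and double-counts each edge to obtain $2\Delta O \ge \frac{1}{2}\sum_{(x_i,x_j)\in C_d}(l_i-l_j)+\sum_{(x_i,x_j)\in C_u}\Delta o_{ij} \ge \frac{1}{2}(\lmax-\lmin)+(\lmin-\frac{1}{2}\lmax)=\frac{1}{2}\lmin$. Your exact cyclic telescoping to $\Delta O \ge \frac{1}{4}\sum_{(x_i,x_j)\in C_u} l_i$ and your explicit handling of the no-up-edge (all $l_i$ equal) case are only minor refinements of the same argument (the paper instead appeals to the standing assumption that not all $l_i$ are equal).
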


\begin{proof}
Note that in the proof of Lemma~\ref{lem:delta-app} we actually have
\[ 2\Delta O \ge \frac{1}{6} \sum_{(x_i,x_j) \in C_s} \Big(l_i - l_j\Big) + \frac{1}{2} \sum_{(x_i,x_j) \in C_d\setminus C_s} \Big(l_i - l_j\Big) + \sum_{(x_i,x_j) \in C_u} \Delta o_{ij}.\]
If $\lmin > \frac{1}{2} \lmax$, then since all edges are flat and there is at least one down-edge (we excluded the case of all $l_i$ equal) we obtain
\[ 2\Delta O \ge \frac{1}{2} \sum_{(x_i,x_j) \in C_d} \Big(l_i - l_j\Big) + \Big(\lmin-\frac{1}{2}\lmax\Big) \ge \frac{1}{2}\Big(\lmax-\lmin\Big)+\Big(\lmin-\frac{1}{2}\lmax\Big) = \frac{1}{2}\lmin,\]
by Lemma~\ref{lem:delta-up} and the claim follows.
\end{proof}

\subsection{Proof of a Weaker Version of the Main Theorem}
\label{subsec:weaker}

In this subsection we present a weaker version of the Theorem~\ref{thm:main-local}, which is relatively easy to prove, and still leads to approximation factor smaller than $2\frac{1}{2}$.

\begin{theorem}[Main Theorem, weak local version]
For any cycle $C$ in the overlap graph of $R$ we have
\[ M_C + 24O_C \le 36\frac{1}{4}L_C.\]
\end{theorem}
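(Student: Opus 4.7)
The plan is to work with the equivalent reformulation
\[ M_C \;\le\; \tfrac{1}{4}L_C + 24\,\Delta O, \qquad \Delta O := \tfrac{3}{2}L_C - O_C, \]
where nonnegativity of $\Delta O$ comes from summing Lemma~\ref{lem:breslauer1} over the edges of $C$. I will in fact aim to establish the slightly stronger bound $M_C \le 24\,\Delta O$, so that the $\tfrac{1}{4}L_C$ term is pure slack and the approximation factor bakes in some room for later strengthening.

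I would arm myself with two upper bounds on $M_C$. The crude one, $M_C \le \tfrac{3}{2}\lmax$, follows from $M_C \le o_{ij} \le l_i + \tfrac{1}{2}l_j$ on any edge of $C$. The refined one, $M_C \le \lmax$, comes from Lemma~\ref{lem:breslauer2}: any edge $x_i \to x_j$ with $l_i < l_j$ satisfies $o_{ij} < l_j \le \lmax$, and such a strict up-edge must exist whenever the lengths are not all equal, because a cycle cannot carry a strictly decreasing sequence of lengths that wraps around.

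I would then case on the relationship between $\lmin$ and $\lmax$. \textbf{(A) All lengths equal to $l$.} Lemma~\ref{lem:breslauer2} applied at each edge (the $w_i$ are pairwise non-equivalent) yields $o_{ij} < l$ and hence $\Delta o_{ij} > \tfrac{1}{2}l$ on each of the $k \ge 2$ edges, so $\Delta O > \tfrac{1}{2}L_C$, which dwarfs $M_C \le \tfrac{3}{2}L_C$. \textbf{(B) $\lmin > \tfrac{1}{2}\lmax$ and not all equal.} Lemma~\ref{lem:delta-flat} gives $\Delta O \ge \tfrac{1}{4}\lmin > \tfrac{1}{8}\lmax$, so $24\,\Delta O > 3\lmax \ge 2 M_C$ by the crude bound. \textbf{(C) $\lmin \le \tfrac{1}{2}\lmax$.} Strictness of this inequality forces $\lmin < \lmax$, hence a strict up-edge, hence the refined bound $M_C \le \lmax$; meanwhile Lemma~\ref{lem:delta-app} gives $\Delta O \ge \tfrac{1}{12}(\lmax - \lmin) \ge \tfrac{1}{24}\lmax$, so $24\,\Delta O \ge \lmax \ge M_C$.

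The only delicate step is case (C): the crude bound $M_C \le \tfrac{3}{2}\lmax$ misses the target by exactly $\tfrac{1}{2}\lmax$ and has to be sharpened via the essentially free upgrade furnished by Lemma~\ref{lem:breslauer2}. Once that observation is in hand the three cases slot neatly into the pre-packaged estimates for $\Delta O$, and no further computation is required.
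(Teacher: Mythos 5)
Your proof is correct, and it combines the ingredients differently from the paper. The paper first shows $\Delta O \ge \frac{1}{24k}L_C$ (splitting on $\lmax<2\lmin$ versus $\lmax\ge 2\lmin$, and comparing to $L_C$ via $L_C\le\lmin+(k-1)\lmax$), and then controls $M_C$ by the averaging bound $M_C\le\frac{1}{k}O_C\le\frac{3}{2k}L_C$; this yields $M_C+24O_C\le(36+\frac{1}{2k})L_C$, whence the stated $36\frac14 L_C$. You instead bound $M_C$ directly against $\lmax$ --- crudely by $\frac32\lmax$ via Lemma~\ref{lem:breslauer1}, and, when $\lmin\le\frac12\lmax$, by $\lmax$ using Lemma~\ref{lem:breslauer2} applied to an up-edge (any edge entering the longest string already works, so the existence of a \emph{strict} up-edge is not even needed) --- and compare with the same two lower bounds on $\Delta O$ (Lemmas~\ref{lem:delta-flat} and~\ref{lem:delta-app}). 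This gives the $k$-independent and slightly stronger inequality $M_C\le 24\,\Delta O$, i.e.\ $M_C+24O_C\le 36L_C$, and avoids the $M_C\le O_C/k$ step entirely. Two small remarks: your separate treatment of the all-equal case is actually needed for Lemma~\ref{lem:delta-flat}, whose proof uses the existence of an up-edge, so your case split is the careful way to invoke it (the paper's own weak-version proof glosses over this); and in your Case (C) the phrase ``strictness of this inequality'' is off --- $\lmin\le\frac12\lmax$ need not be strict --- but $\lmin<\lmax$ follows anyway since $\lmax\ge 2$, so nothing breaks.
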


\begin{remark}
Note that the emphasis here is on simplicity, and the proof below can easily be improved in many ways.
\end{remark}

\begin{proof}
We first prove that we always have $\Delta O \ge \frac{1}{24k}L$, where $k$ is the length of the cycle. We consider two cases:

\case{1} If $\lmax < 2\lmin$ then by Lemma~\ref{lem:delta-flat} we have $\Delta O \ge \frac{1}{4}\lmin$ and so 
\[ \Delta O \ge \frac{1}{4}\lmin \ge \frac{1}{4(2k-1)} \Big(2k-1\Big) \lmin \ge \frac{1}{4(2k-1)}\Big(\lmin+(k-1)\lmax\Big) \ge \frac{1}{4(2k-1)}L \ge \frac{1}{24k}L.\]

\case{2} If $\lmax \ge 2\lmin$ then we have by Lemma~\ref{lem:delta-app} 
\begin{align*} \Delta O \ge \frac{1}{12}\Big(\lmax-\lmin\Big) \ge \frac{1}{12(2k-1)}\Big((2k-1)\lmax - (2k-1)\lmin\Big) \ge  \\
\ge \frac{1}{12(2k-1)}\Big((k-1)\lmax + \lmin\Big) \ge \frac{1}{12(2k-1)}L \ge \frac{1}{24k}L.
\end{align*}

From the inequality we have just proved we get
\[ O \le \left(\frac{3}{2} - \frac{1}{24k}\right)L.\]
We also always have 
\[ M \le \frac{1}{k} \cdot O \le \frac{3}{2k} L.\] 
Joining these gives
\[ M + 24O \le \left( \frac{3}{2k} + 36 - \frac{1}{k}\right)L = \left(36 + \frac{1}{2k}\right)L \le 36\frac{1}{4} L.\] 
\end{proof}

\begin{corollary}
There exists a $2\frac{145}{292}$-approximation algorithm for \ssp. 
\end{corollary}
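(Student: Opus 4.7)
The plan is to mirror the derivation of Corollary~\ref{cor:factor}, simply substituting the weak local bound just established for Theorem~\ref{thm:main-local}. First I would sum the weak local inequality over all cycles $C$ of the maximum weight cycle cover $\mathcal{C}$ of the overlap graph of $R$, exactly as in Theorem~\ref{thm:main-global}. Since $\sum_{C\in\mathcal{C}} L_C = w(\mathcal{C}_{\min}) \le OPT(S)$ (the cycle cover lower bound recalled in the preliminaries), this produces the global inequality
\[ M_\mathcal{C} + 24\, O_\mathcal{C} \le \frac{145}{4}\, OPT(S).\]

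Next I would invoke Lemma~\ref{lem:losses} on $S_0$ together with Lemma~\ref{lem:blum-bound} and the currently best value $c=\frac{2}{3}$ for \tsp, which yields
\[ |S_0| \le 2\, OPT(S) + \min\!\left(M_\mathcal{C},\ \frac{1}{3}\, O_\mathcal{C}\right).\]
What remains is to upper-bound the $\min$ term using the single linear inequality derived above. This is a one-line LP exercise: the worst case of $\max \min(M, O/3)$ under the constraint $M + 24\, O \le B$ is attained when the two arguments are equal, i.e.\ at $M = O/3$, which forces $73 M \le B$. Equivalently, one uses the pointwise convex combination
\[ \min\!\left(M,\ \frac{1}{3}\, O\right) \le \frac{1}{73}\Big(M + 24\, O\Big),\]
verified by the two routine cases $M \le O/3$ and $M \ge O/3$.

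Combining the pieces gives $\min(M_\mathcal{C}, \frac{1}{3}\, O_\mathcal{C}) \le \frac{1}{73} \cdot \frac{145}{4}\, OPT(S) = \frac{145}{292}\, OPT(S)$, and therefore $|S_0| \le \left(2 + \frac{145}{292}\right) OPT(S) = 2\frac{145}{292}\, OPT(S)$, as required. There is essentially no obstacle to this plan: every ingredient is already assembled in the excerpt, and the only nontrivial decision is selecting the optimal splitting coefficients $(1/73,\, 24/73)$, which are completely determined by the tight configuration $M_\mathcal{C} = \frac{1}{3}\, O_\mathcal{C}$.
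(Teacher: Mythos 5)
Your proposal is correct and follows essentially the same route as the paper: combine Lemma~\ref{lem:losses} and Lemma~\ref{lem:blum-bound} to get $|S_0| \le 2\,OPT(S) + \min(M_\mathcal{C},\frac{1}{3}O_\mathcal{C})$, then bound the min by the convex combination $\frac{1}{73}M_\mathcal{C} + \frac{24}{73}O_\mathcal{C}$ and apply the summed weak local inequality $M_\mathcal{C} + 24 O_\mathcal{C} \le 36\frac{1}{4}\,OPT(S)$. The only cosmetic difference is that you state the global summation over the cycle cover explicitly, which the paper leaves implicit.
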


\begin{proof}
Similarly as in the proof of Corollary~\ref{cor:factor}, we get from Lemma~\ref{lem:losses} and Lemma~\ref{lem:blum-bound} that
\[ |S_0| \le 2OPT(S) + \min\Big(M,\Big(1-\frac{2}{3}\Big)O\Big).\]
We can bound the second term as follows:
\[ \min\Big(M,\frac{1}{3}O\Big) \le \Big(\frac{1}{73}M + \frac{72}{73}\cdot\frac{1}{3}O\Big) \le \frac{36\frac{1}{4}L}{73} = \frac{145}{292}L \le \frac{145}{292}OPT(S),\]
and so
$|S_0| \le 2\frac{145}{292}OPT(S)$.
\end{proof}

\subsection{The Proof of Theorem~\ref{thm:main-local} -- Basic Observations} 

Let us recall Theorem~\ref{thm:main-local}.

{
\renewcommand{\thetheorem}{\ref{thm:main-local}}
\begin{theorem}[restated]
For every cycle $C$ in the overlap graph of $R$, we have \[2M_C+7O_C \le 11L_C.\] 
\end{theorem}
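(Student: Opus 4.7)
Rewrite the target inequality as
\[ M_C \le \frac{L_C}{4} + \frac{7}{2}\Delta O_C, \]
where $\Delta O_C = \frac{3}{2}L_C - O_C \ge 0$ is the overlap deficit defined in Subsection~\ref{subsec:proof-prelims}. Read in this form, the proof needs to trade off two independent bounds on $M$: the averaging bound $M \le O_C/k \le \frac{3L_C}{2k}$, which is strong when the cycle length $k$ is large, against the $\Delta O$--based bounds of Section~\ref{sec:general-bounds}, which become available when the $l_i$ are either unbalanced (Lemma~\ref{lem:delta-app}) or all roughly equal (Lemma~\ref{lem:delta-flat}). The plan is therefore to dispatch long cycles by pure averaging, and handle $k\in\{2,3,4,5\}$ separately, matching the four subsections promised in the text.

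\textbf{Long cycles.} For $k\ge 6$, the averaging bound immediately gives $M_C \le \frac{3L_C}{2k} \le \frac{L_C}{4}$, so the theorem holds with no use of $\Delta O_C$ at all. This fills one of the four subsections (the ``$k\ge 6$'' or combined ``large $k$'' case).

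\textbf{Short cycles: common framework.} For each $k\in\{2,3,4,5\}$ I would split on the ratio $\lmin/\lmax$. If $\lmin \le \frac{1}{2}\lmax$, Lemma~\ref{lem:delta-app} yields $\Delta O_C \ge \frac{1}{12}(\lmax-\lmin)$, sharpened to $\frac{1}{8}$ or $\frac{1}{4}$ when steep down-edges are absent or non-consecutive. If instead $\lmin > \frac{1}{2}\lmax$, every edge is flat and Lemma~\ref{lem:delta-flat} gives $\Delta O_C \ge \frac{1}{4}\lmin$. On the $M$ side, I would use $M_C \le o_{ij} \le l_i + \frac{1}{2}l_j$ (Lemma~\ref{lem:breslauer1}) on a well-chosen edge: the up-edge leaving the shortest string satisfies $M_C < \lmax$ by Lemma~\ref{lem:breslauer2}, and more generally the refined bound $o_{ij} + \alpha_i \le l_i + l_j$ (or $\le 2l_i - l_j$) from Lemma~\ref{lem:delta} limits $M$ whenever the minimum edge is adjacent to a sufficiently long string. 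For each $k$, combining the chosen $M$-bound with the corresponding $\Delta O$-bound and simplifying using $L_C = \sum l_i$ reduces the claim to an elementary inequality in the three quantities $\lmin,\lmax$ (and, for $k\ge 3$, the intermediate lengths).

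\textbf{The main obstacle.} The tight configuration lives in the short-cycle regime with nearly balanced lengths, where the generic bounds on $M$ and on $\Delta O$ are both close to tight simultaneously; this is exactly the case described in Section~\ref{sec:tight}. Concretely, for a $2$-cycle with $l_1 \approx l_2$ and both overlaps near $\frac{3}{2}\lmax$, Lemma~\ref{lem:delta-flat} alone gives only $\Delta O_C \ge \frac{1}{4}\lmin$, which misses the constant $11/23$. One has to invoke the finer structural output of Lemma~\ref{lem:general-long-l1}: as soon as $o_{ij}$ approaches $l_i+\frac{1}{2}l_j$, the positions $\imin(w_{ij})$ and $\imax(w_{ij})$ are essentially forced, which forces $\alpha_i$ to be small, and hence via Lemma~\ref{lem:breslauer1} forces the \emph{other} overlap $o_{ji}\le l_j+\alpha_i$ to drop well below $\frac{3}{2}l_j$. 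The key technical step of the whole proof is therefore to package this ``if one overlap is near-maximal then the other cannot be'' tradeoff, cycle length by cycle length, using Corollary~\ref{cor:delta} and Lemma~\ref{lem:delta-up}, and to check that the resulting saving is at least $\frac{7}{2}\Delta O_C$ in each configuration. Once that tradeoff is quantified cleanly for $k=2$, the cases $k=3,4,5$ follow by the same pattern with one extra degree of freedom, handled by the steep/flat distinction of Lemma~\ref{lem:delta-app}.
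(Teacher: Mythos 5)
Your skeleton is exactly the paper's: rewriting the claim as $2M_C-7\Delta O_C\le\frac12 L_C$, dispatching $k\ge 6$ by the averaging bound $M_C\le O_C/k$, and splitting $k\in\{2,3,4,5\}$ according to the ratio $\lmin/\lmax$ and the steep/flat structure, using Lemmas~\ref{lem:delta-app} and~\ref{lem:delta-flat} for $\Delta O_C$ and Lemmas~\ref{lem:breslauer1},~\ref{lem:breslauer2} and~\ref{lem:delta} for $M_C$ (this is Lemma~\ref{lem:mainclaim}, Corollary~\ref{cor:mainclaim} and Subsections~\ref{subsec:5-cycles}--\ref{subsec:2-cycles}). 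However, for $k\le 5$ you never actually prove the inequality: you list the lemmas you would combine and defer the decisive step (``check that the resulting saving is at least $\frac72\Delta O_C$ in each configuration''). Those verifications are the substance of the theorem; in the paper they take four subsections of case analysis (six cases with subcases for $k=4$; for $k=3$ the regimes $2l_3\le l_2<\frac52 l_3$ and $l_2\ge\frac52 l_3$ need the second and third parts of Lemma~\ref{lem:delta} routed through $\Delta\alpha_2$), and they do not ``follow by the same pattern'' from the $2$-cycle case as you assert.

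More seriously, the mechanism you sketch for the hardest regime is quantitatively insufficient, so the gap is not merely one of bookkeeping. Take a $2$-cycle with $2l_2\le l_1<3l_2$. Everything you propose to package from Section~\ref{sec:general-bounds} gives: $\Delta O\ge\frac16(l_1-l_2)$ (Corollary~\ref{cor:delta}), $\Delta O\ge\frac32 l_2-\frac12 l_1$ (Lemma~\ref{lem:delta}, via $\Delta o_{21}\ge\Delta\alpha_1$), and your ``$o_{12}$ near-maximal forces $\alpha_1$ small, hence $o_{21}\le l_2+\alpha_1$'' tradeoff, which through Lemma~\ref{lem:general-long-l1-cor} yields only $\Delta O\ge\frac12 l_1-l_2$. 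Near $l_1=\frac52 l_2$ the best of these is about $\frac14 l_2$, while the only available bound on the minimum overlap is $M\le l_2+\frac12 l_1$, so $2M$ can be estimated only by $\frac92 l_2$ whereas $\frac12 L+7\Delta O\approx\frac72 l_2$; the target inequality cannot be concluded from these ingredients. The paper closes exactly this case with an extra lemma (Lemma~\ref{lem:delta-2-cycle}: $l_1\ge 2l_2$ implies $\Delta o_{12}+\Delta o_{21}\ge\frac12 l_2$), whose proof is a fresh string argument: when $o_{12}\ge l_1$ it establishes the much stronger fact $o_{21}<\alpha_1$ using Lemmas~\ref{lem:unique} and~\ref{lem:unbordered}, not the generic $o_{21}\le l_2+\alpha_1$. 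Nothing in your proposal produces this (or an equivalent) bound, so the $2$-cycle case, and hence the theorem, remains open in your write-up. (A smaller inaccuracy: the tight $2$-cycle configuration has $l_1\approx\frac32 l_2$ as in Example~\ref{ex:tight2}, not $l_1\approx l_2$ with both overlaps near $\frac32\lmax$ --- the latter is impossible by Lemma~\ref{lem:breslauer2}.)
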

\addtocounter{theorem}{-1}
}

We can easily get rid of the following special case, which will make some reasonings easier later on.
\begin{lemma}
\label{lem:all-equal}
If all $l_i$ are equal for a cycle $C$ then the claim of Theorem~\ref{thm:main-local} holds.
\end{lemma}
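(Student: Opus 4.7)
The plan is to observe that when all $l_i$ agree, Lemma~\ref{lem:breslauer2} forces every overlap on the cycle to be strictly smaller than that common length, and this by itself already leaves substantial slack in the target inequality.

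Write $l$ for the common value of $l_1,\ldots,l_k$, so that $L_C = kl$. First I would apply Lemma~\ref{lem:breslauer2} to each edge $x_i \to x_j$ on $C$: the hypothesis $l_i \le l_j$ holds trivially, and the conclusion is $o_{ij} < l$. Summing over the $k$ edges of $C$ yields $O_C < kl = L_C$. Second, because $M_C$ is the minimum of the $k$ overlaps on $C$, it is at most their arithmetic mean, so $M_C \le O_C/k$.

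Combining these two observations, $2M_C + 7O_C \le (2/k + 7)\,O_C \le 8\,O_C$ whenever $k \ge 2$, and then $8\,O_C < 8\,L_C \le 11\,L_C$ closes the argument. Essentially no obstacle arises: the bound holds with the much stronger constant $8$ in place of $11$, which is consistent with the intuition built up in the surrounding lemmas that the genuinely tight cases of Theorem~\ref{thm:main-local} must involve cycles whose string lengths differ substantially (so that Corollary~\ref{cor:delta} only gives weak control on $\Delta O$). The only convention worth flagging is that cycles in the overlap graph are taken to have length at least two, so the step using $2/k + 7 \le 8$ is valid throughout.
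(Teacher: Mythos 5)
Your proof is correct and follows essentially the same route as the paper: both argue that equal lengths force every overlap on the cycle below $l$ (the paper states this fact directly, you invoke Lemma~\ref{lem:breslauer2}), giving $O_C \le L_C$ and $M_C \le O_C/2$, hence $2M_C + 7O_C \le 8L_C \le 11L_C$. Nothing further is needed.
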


\begin{proof}
Since two non-equivalent strings of equal length $l$ cannot have an overlap of length $l$ or greater, it follows that in this case $\gamma \le 1$. Therefore $\beta \le \frac{1}{2}\gamma \le \frac{1}{2}$ and $2\beta+7\gamma \le 8$, a much stronger bound than needed.
\end{proof}

In the remainder of this section we assume that not all $l_i$ are equal.

\begin{lemma}
\label{lem:mainclaim}
Either of the following statements imply the claim of Theorem~\ref{thm:main-local} for a $k$-cycle $C$:
\begin{itemize}
\item $2M - 7\Delta O \le \frac{1}{2}$,
\item $\Delta O \ge \frac{6-k}{2(7k+2)}L.$ 
\end{itemize}
\end{lemma}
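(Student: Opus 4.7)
The plan is to unpack the substitution $O = \tfrac{3}{2}L - \Delta O$ and see that the first bullet is just a direct rewriting of the theorem's conclusion, while the second bullet reduces to the first via the elementary averaging bound $M \le O/k$. I am reading the first bullet as $2M - 7\Delta O \le \tfrac12 L$ (the $L$ seems to be a missing factor in the statement as typeset).

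First I would compute, starting from the desired inequality $2M_C + 7O_C \le 11 L_C$ and substituting $O = \tfrac{3}{2}L - \Delta O$, that
\[ 2M + 7O = 2M + \tfrac{21}{2}L - 7\Delta O, \]
so $2M + 7O \le 11L$ is equivalent to $2M - 7\Delta O \le \tfrac{1}{2}L$. This handles the first bullet immediately.

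Next, for the second bullet, I would use the fact that $M$ is the minimum of the $k$ overlaps along $C$, whose sum is $O$, giving $M \le O/k = (\tfrac{3}{2}L-\Delta O)/k$. Substituting this into $2M - 7\Delta O$ yields
\[ 2M - 7\Delta O \;\le\; \frac{3L - 2\Delta O}{k} - 7\Delta O \;=\; \frac{3L - (7k+2)\Delta O}{k}. \]
A simple rearrangement shows that this last quantity is at most $\tfrac{1}{2}L$ precisely when $(7k+2)\Delta O \ge \tfrac{6-k}{2}L$, i.e.\ $\Delta O \ge \tfrac{6-k}{2(7k+2)}L$, which is exactly the hypothesis of the second bullet. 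Thus the second condition implies the first, which by the first part implies the conclusion of Theorem~\ref{thm:main-local}.

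There is no real obstacle here; the lemma is a purely algebraic repackaging meant to set up the case analysis that follows (for $k \ge 6$ the second bullet is automatic since its right-hand side is non-positive, and for small $k$ it gives an explicit lower bound on $\Delta O$ that must be matched by the structural bounds from Section~\ref{sec:general-bounds}). The only thing to be careful about is the direction of inequalities when dividing by $k > 0$ and moving $\Delta O$ across, but everything is linear.
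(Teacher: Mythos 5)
Your proof is correct and follows essentially the same route as the paper: substitute $O = \tfrac{3}{2}L - \Delta O$ to see the first bullet (correctly read as $2M - 7\Delta O \le \tfrac{1}{2}L$, the $L$ being an evident typo) is equivalent to the theorem's conclusion, and use $M \le O/k$ to reduce the second bullet to it, which matches the paper's algebra up to a trivial rearrangement. Nothing is missing.
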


Before proving the above lemma, let us note its particularly useful consequences:
\begin{corollary}
\label{cor:mainclaim}
Let $C$ be a $k$-cycle, then the claim of Theorem~\ref{thm:main-local} for $C$
\begin{itemize}
\item is implied by $\Delta O \ge \frac{1}{30}L$ if $k=4$,
\item is implied by $\Delta O \ge \frac{1}{74}L$ if $k=5$,
\item holds if $k \ge 6$.
\end{itemize}
\end{corollary}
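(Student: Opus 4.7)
The proposed proof is a direct specialization of the second alternative of Lemma~\ref{lem:mainclaim}, which states that the conclusion of Theorem~\ref{thm:main-local} for a $k$-cycle $C$ is implied whenever $\Delta O \ge \frac{6-k}{2(7k+2)} L$. The plan is simply to evaluate this threshold for $k=4$, $k=5$, and $k \ge 6$ in turn, and observe that in the last regime the threshold is met automatically.

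For $k=4$ we compute $\frac{6-4}{2(7\cdot 4 + 2)} = \frac{2}{60} = \frac{1}{30}$, which matches the first bullet. For $k=5$ we compute $\frac{6-5}{2(7\cdot 5 + 2)} = \frac{1}{74}$, which matches the second bullet. For $k \ge 6$ the numerator $6-k$ is non-positive, hence $\frac{6-k}{2(7k+2)} L \le 0$; since Lemma~\ref{lem:breslauer1} guarantees $o_{ij} < l_i + \alpha_j \le l_i + \frac{1}{2} l_j$ on every edge of $C$, each $\Delta o_{ij}$ is strictly positive, so $\Delta O > 0$ automatically satisfies the hypothesis and the third bullet follows unconditionally.

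There is essentially no obstacle in this step: the corollary is an arithmetic repackaging of Lemma~\ref{lem:mainclaim} designed to isolate the cycle-length regimes that still require attention. The substantive work is deferred to the later subsections, where for $k=4$ and $k=5$ one must either certify the respective thresholds $\Delta O \ge \frac{1}{30} L$ and $\Delta O \ge \frac{1}{74} L$, or otherwise invoke the first alternative $2 M - 7 \Delta O \le \tfrac{1}{2} L$ of Lemma~\ref{lem:mainclaim}. That verification will rely on the structural machinery of Section~\ref{sec:general-bounds}, most notably Corollary~\ref{cor:delta} together with Lemmas~\ref{lem:delta-up} and~\ref{lem:delta-flat}; but none of that enters the proof of the corollary itself.
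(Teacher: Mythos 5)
Your proof is correct and matches the paper's (implicit) justification: the corollary is exactly the second alternative of Lemma~\ref{lem:mainclaim} evaluated at $k=4$, $k=5$, and $k\ge 6$, with the last case trivial because the threshold $\frac{6-k}{2(7k+2)}L$ is non-positive while $\Delta O \ge 0$ by Lemma~\ref{lem:breslauer1}. Nothing is missing.
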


\begin{proof}[Proof (of Lemma~\ref{lem:mainclaim})]
For the first part we have
\[ 2M+7O = 2M + 7\left(\frac{3}{2}L-\Delta O \right) = 10\frac{1}{2}L + (2M-7\Delta O),\]
and the claim follows.

For the second part, note that $M \le \frac{1}{k}O$. Therefore if we have  $\Delta O \ge \frac{6-k}{2(7k+2)}L$, then
\[ 2M+7O \le \frac{2+7k}{k}O \le \frac{2+7k}{k}\left(\frac{3}{2}L-\frac{6-k}{2(7k+2)}L\right) =
\frac{2+7k}{k} \cdot \frac{22k}{2(7k+2)} L = 11L.\]
\end{proof}

\subsection{The Proof of Theorem~\ref{thm:main-local} for $5$-cycles}
\label{subsec:5-cycles}

The remainder of the proof is divided into four parts, one for each cycle length in $\{2,3,4,5\}$. Although there are similarities between these parts, they are mostly independent. For easier reading, we put each part in a separate subsection. 

\begin{lemma}
If $C$ is $5$-cycle, then $2M+7O \le 11L$.
\end{lemma}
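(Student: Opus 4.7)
By Corollary~\ref{cor:mainclaim}, the claim for a $5$-cycle follows once I establish $\Delta O \ge \frac{1}{74}L$. Lemma~\ref{lem:all-equal} disposes of the case where all $l_i$ are equal, so I assume $\lmax > \lmin$. My plan is a short case split on the ratio $\lmin/\lmax$, combined with a secondary split on whether two consecutive steep down-edges occur; all the heavy lifting is done by Lemma~\ref{lem:delta-flat} and Lemma~\ref{lem:delta-app}.

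\textbf{The flat case $\lmin > \tfrac{1}{2}\lmax$.} Here Lemma~\ref{lem:delta-flat} gives $\Delta O \ge \tfrac{1}{4}\lmin$. Since $L \le 5\lmax < 10\lmin$, I get $\Delta O > \tfrac{1}{40}L$, which comfortably beats $\tfrac{1}{74}L$. This case has a lot of slack, which is expected, since the obstruction to pushing the approximation below $2\tfrac{1}{2}$ lies in very unbalanced cycles rather than in this regime.

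\textbf{The non-flat case $\lmin \le \tfrac{1}{2}\lmax$.} Then $\lmax - \lmin \ge \tfrac{1}{2}\lmax$ and, since one $l_i$ equals $\lmin$ and the other four are at most $\lmax$, $L \le 4\lmax + \lmin \le \tfrac{9}{2}\lmax$. If no two consecutive steep down-edges occur, I would invoke the improved $\tfrac{1}{8}$ bound of Lemma~\ref{lem:delta-app} to obtain $\Delta O \ge \tfrac{1}{8}(\lmax - \lmin) \ge \tfrac{1}{16}\lmax \ge \tfrac{L}{72} > \tfrac{L}{74}$. If two consecutive steep down-edges $x_i \to x_j \to x_k$ do occur, then $l_i \ge 2l_j \ge 4l_k$, hence $\lmin \le \tfrac{1}{4}\lmax$ and $L \le 4\lmax + \tfrac{1}{4}\lmax = \tfrac{17}{4}\lmax$; the general $\tfrac{1}{12}$ bound of Lemma~\ref{lem:delta-app} then yields $\Delta O \ge \tfrac{1}{12}\cdot \tfrac{3}{4}\lmax = \tfrac{1}{16}\lmax \ge \tfrac{L}{68} > \tfrac{L}{74}$.

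\textbf{Anticipated obstacles.} The main thing to get right is just the numerical bookkeeping of constants; conceptually there is no new idea needed beyond the bounds already packaged in Section~\ref{sec:general-bounds}. The reason the $5$-cycle case goes through so cleanly is that the threshold $\tfrac{1}{74}$ coming from $k=5$ in Corollary~\ref{cor:mainclaim} is rather mild — indeed the same structure of argument is what makes $k \ge 6$ immediate. I expect the harder work of Section~\ref{sec:proof} to be concentrated in the $k \in \{2,3,4\}$ subsections, where the required bounds on $\Delta O$ are strictly tighter (e.g.\ $\tfrac{1}{30}L$ for $k=4$) and presumably force finer case analysis using Lemma~\ref{lem:general-long-l1} and the exact positions of $\imax$ and $\imin$, rather than just the aggregate consequences distilled into Corollary~\ref{cor:delta}.
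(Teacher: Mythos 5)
Your proof is correct and follows essentially the same route as the paper: a split on the ratio $\lmin/\lmax$ (with the consecutive-steep-down-edge dichotomy standing in for the paper's intermediate case $\frac{1}{4}\lmax < \lmin \le \frac{1}{2}\lmax$), using Lemma~\ref{lem:delta-flat} and the $\frac{1}{8}$/$\frac{1}{12}$ constants of Lemma~\ref{lem:delta-app} to reach the $\frac{1}{74}L$ threshold of Corollary~\ref{cor:mainclaim}. The constants you obtain ($\frac{1}{40}$, $\frac{1}{72}$, $\frac{1}{68}$) all clear the threshold, matching the paper's bookkeeping.
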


\begin{proof}
We consider three cases. In all three we prove that $\Delta O \ge \frac{1}{74}L$ and the claim follows from Corollary~\ref{cor:mainclaim}.

\case{1} If $\lmin > \frac{1}{2} \lmax$ we have by Lemma~\ref{lem:delta-flat} that $ \Delta O \ge \frac{1}{4}\lmin$ and so
\[ \Delta O \ge \frac{1}{36}\Big(9\lmin\Big) \ge \frac{1}{36}\Big(\lmin + 4\lmax\Big) \ge \frac{1}{36}L.\]

\case{2} If $\frac{1}{4}\lmax < \lmin \le \frac{1}{2} \lmax$, then we cannot have two consecutive steep down-edges, and so by Lemma~\ref{lem:delta-app} we have
\[ \Delta O \ge \frac{1}{8}\Big(\lmax-\lmin\Big) \ge \frac{1}{72}\Big(9\lmax-9\lmin\Big) \ge \frac{1}{72}\Big(4\lmax+\lmin\Big) \ge \frac{1}{72}L.\]

\case{3} Finally, if $\lmin \le \frac{1}{4} \lmax$, then  we have
\[ \Delta O \ge \frac{1}{12}\Big(\lmax-\lmin\Big) \ge \frac{1}{72}\Big(6\lmax-6\lmin\Big) \ge \frac{1}{72}\Big(4\lmax+2\lmin\Big) \ge \frac{1}{72} L.\] 
\end{proof}

\subsection{The Proof of Theorem~\ref{thm:main-local} for $4$-cycles}
\label{subsec:4-cycles}

\begin{lemma}
\label{lem:4-cycle}
If $C$ is $4$-cycle, then $2M+7O \le 11L$.
\end{lemma}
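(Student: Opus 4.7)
My strategy is to apply Corollary~\ref{cor:mainclaim} and prove $\Delta O \ge \tfrac{1}{30}L$. Since $\tfrac{1}{30}$ is a much tighter target than the $\tfrac{1}{74}$ used for $5$-cycles, the crude $\tfrac{1}{12}(\lmax - \lmin)$ bound from Lemma~\ref{lem:delta-app} is not going to suffice in every configuration, and I expect to have to reach back to Lemma~\ref{lem:delta} itself (rather than just its Corollary~\ref{cor:delta}) whenever a consecutive pair of lengths sits in the critical ratio range $[2,\tfrac{5}{2})$, where Lemma~\ref{lem:delta}(2) gives substantially more than Corollary~\ref{cor:delta}(3) does.

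The case split is on $\lmin/\lmax$. For $\lmin > \lmax/2$, Lemma~\ref{lem:delta-flat} gives $\Delta O \ge \lmin/4$, and combined with $L \le \lmin + 3\lmax < 7\lmin$ this yields $\Delta O > L/28 > L/30$, which is enough. For $\lmin \le \lmax/2$ I would subdivide by cycle structure: two consecutive steep down-edges force a chain $l_a \ge 2l_b \ge 4l_c$, so $\lmin \le \lmax/4$ and at least one of these edges is very-steep (ratio $\ge 3$); Corollary~\ref{cor:delta}(2)'s $\tfrac{1}{4}(l_i - l_j)$-coefficient applied to that single edge (together with $\Delta o_{\text{preceding}} \ge \Delta\alpha$ from Lemma~\ref{lem:breslauer1}) already clears $L/30$. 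In the opposite regime (no two consecutive steep down-edges), I would sum the individual inequalities of Corollary~\ref{cor:delta} over the down-edges without telescoping them to a single $\lmax-\lmin$.

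The main obstacle is the ``plateau plus drop'' configuration in which three lengths cluster just below $\lmax$ while the fourth is $\lmin$ at intermediate magnitude. Here $L$ approaches $3\lmax + \lmin$ and the flat plateau edges contribute almost nothing, so the bound must come from the single descent. For that descent's ratio $\lmax/\lmin$ in $[2, \tfrac{5}{2})$ I would use Lemma~\ref{lem:delta}(2) directly: $o + \alpha \le 2l_i - l_j$ gives $\Delta o + \Delta\alpha \ge \tfrac{3}{2}\lmin - \tfrac{1}{2}\lmax$, and composing with $\Delta o_{\text{preceding}} \ge \Delta\alpha$ produces a bound that, for each value of $\rho = \lmin/\lmax$ in this subrange, exceeds $L/30 \le (3\lmax + \lmin)/30$. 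For ratios in $[\tfrac{5}{2}, 3)$ I would analogously extract the stronger intermediate bound from the proof of Lemma~\ref{lem:delta}(3) (case 1: $\tfrac{3}{2}\lmin - \tfrac{1}{2}\lmax$ type), and for ratio $\ge 3$ use Corollary~\ref{cor:delta}(2). Each of these subranges reduces to a routine inequality in $\rho$; the verification is tedious but short.

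Finally I would sweep the remaining cycle shapes (alternating $\lmax,\lmin,\lmax,\lmin$; monotone non-increasing with a single up-edge; two-tier with two lengths near $\lmax$ and two near $\lmin$) and confirm that in each of them the same toolkit yields $\Delta O \ge L/30$ with comfortable slack, using the $\frac{1}{4}$-coefficient version of Lemma~\ref{lem:delta-app} when no steep down-edges are present, and the alternating non-consecutive steep version otherwise. The hardest part is disciplining the subcase enumeration so that no configuration is missed: I expect about five subranges of $\rho$ times three or four shape categories, plus placement of $\lmin$ on the cycle, which is precisely the ``long and detailed case analysis'' advertised in the paper's introduction.
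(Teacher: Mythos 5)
Your overall strategy---reduce the whole $4$-cycle case to showing $\Delta O \ge \frac{1}{30}L$ and then invoke Corollary~\ref{cor:mainclaim}---cannot be carried out, and this is a genuine gap rather than a matter of tedious case-checking. The paper uses that route only in the ``flat'' situations ($\lmin > \frac12\lmax$, or all down-edges flat); as soon as a steep down-edge is present it switches to the \emph{other} criterion of Lemma~\ref{lem:mainclaim}, namely $2M - 7\Delta O \le \frac12 L$, proved with explicit structural bounds on the smallest overlap such as $M \le o_{34} \le l_3+\frac12 l_4$ or $M \le o_{41} \le l_4 + \frac12 l_1$. The reason one must do this is that the crude substitution $M \le \frac14 O$ hidden in Corollary~\ref{cor:mainclaim} is too lossy exactly where $\Delta O$ is genuinely small. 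Take the length profile $(l_1,l_2,l_3,l_4)\approx(6n,\,6n-O(1),\,5n,\,2n)$ on the cycle $x_1\to x_2\to x_3\to x_4\to x_1$: the up-edge $x_4\to x_1$ is steep, so Lemma~\ref{lem:delta-up} gives nothing; the drop $l_1-l_2$ is negligible; and the only usable inequalities from Lemma~\ref{lem:delta}/Corollary~\ref{cor:delta} are $\Delta o_{12}+\Delta o_{23}\gtrsim \frac12 n$ and $\Delta o_{23}+\Delta o_{34}\gtrsim \frac12 n$ (Lemma~\ref{lem:delta-app} likewise gives only $\frac18(\lmax-\lmin)=\frac12 n$). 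These are compatible with $\Delta O \approx \frac12 n$, whereas $\frac1{30}L \approx \frac{19n}{30}$. So your target inequality is not derivable from the toolkit you cite, and it is almost certainly simply false: near-maximal overlaps between nice strings of almost equal length are realizable (see the family in Section~\ref{sec:tight}), so one can extend the tight $3$-cycle of Example~\ref{ex:tight3} by a fourth string of length just above $l_1$ and obtain $\Delta O = \Theta(n)$ with $L\approx 19n$ and $\Delta O/L$ below $\frac1{30}$. In such configurations the lemma is still true, but only because $M$ itself is small (e.g.\ $M\le o_{41}\le l_4+\frac12 l_1$), which your plan never exploits.

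Two smaller points. First, even your flagged ``plateau plus drop'' case does not close under your own estimate: with three lengths near $\lmax$ and $\lmin=\rho\,\lmax$, the bound $\Delta O \ge \frac32\lmin-\frac12\lmax$ from Lemma~\ref{lem:delta}(2) exceeds $\frac{1}{30}(3\lmax+\lmin)$ only for $\rho \ge \frac{9}{22}$, so the subrange $\rho\in(\frac25,\frac{9}{22})$ is not covered. Second, two consecutive steep down-edges $l_a\ge 2l_b\ge 4l_c$ do not force either single edge to have ratio at least $3$ (both may be exactly $2$), so invoking the $\frac14$-coefficient of Corollary~\ref{cor:delta}(2) on ``that single edge'' is unjustified. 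To repair the proof you would have to follow the paper's actual structure: handle the flat cases via $\Delta O \ge \frac{1}{28}L$, and in every configuration with a steep down-edge bound $2M-7\Delta O \le \frac12 L$ directly, choosing for $M$ an edge of the cycle whose overlap is structurally small.
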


\begin{proof}
We again consider several cases.

\case{1} If $\lmin > \frac{1}{2} \lmax$ then by Lemma~\ref{lem:delta-flat} we have $\Delta O \ge \frac{1}{4}\lmin$ and so
\[ \Delta O \ge \frac{1}{28}\Big(7\lmin\Big) \ge \frac{1}{28}\Big(\lmin + 3\lmax\Big) \ge \frac{1}{28}L,\]
and the claim follows by Corollary~\ref{cor:mainclaim}.

\case{2} If $l_{\min} \le \frac{1}{2} l_1$, but all down-edges of $C$ are flat then we have by Lemma~\ref{lem:delta-app} that $\Delta O \ge \frac{1}{4}(\lmax - \lmin)$ and so
\[ \Delta O \ge \frac{1}{28}\Big(7\lmax - 7\lmin\Big) \ge \frac{1}{28}\Big(3\lmax+\lmin\Big) \ge \frac{1}{28}L.\]

Therefore we only need to consider cases where at least one down-edge of $C$ is steep. This is implicitly assumed in all remaining cases.

\case{3} If $l_1 \ge l_2 \le l_3 \ge l_4$, i.e.\ the edges of $C$ are alternating down-up-down-up, then
\[ \Delta O = \Big(\Delta o_{41}+\Delta o_{12}\Big)+\Big(\Delta o_{23}+\Delta o_{34}\Big) \ge \frac{1}{6}\Big(l_1-l_2\Big)+\frac{1}{6}\Big(l_3-l_4\Big),\]
by Corollary~\ref{cor:delta}.
We also have \[M \le \min\Big(o_{41},o_{23}\Big) \le \min\Big(l_1,l_3\Big) \le \frac{1}{2}\Big(l_1+l_3\Big).\]
Therefore
\[ 2M-7\Delta O \le l_1+l_3 + \frac{7}{6}\Big(-l_1+l_2-l_3+l_4\Big) =
\frac{1}{6}\Big(-l_1+7l_2-l_3+7l_4\Big) \le \frac{1}{2}\Big(l_1+l_2+l_3+l_4\Big)\]
since $l_2 \le l_1$ and $l_4 \le l_3$.

\case{4} If $l_1 \ge l_2 \le l_3 \le l_4$, then we have
$\Delta O \ge \frac{1}{6}(l_1-l_2)$ and $M \le o_{23} \le l_3$. Therefore
\[ 2M-7\Delta O \le 2l_3-\frac{7}{6}l_1 + \frac{7}{6}l_2 \le 
\Big(\frac{1}{2}l_3+\frac{1}{2}l_4+l_1\Big)-\frac{7}{6}l_1+\Big(\frac{3}{6}l_2 + \frac{4}{6}l_1\Big) = \frac{1}{2}L,\]
and the claim follows from Lemma~\ref{lem:mainclaim}.

\case{5} If $l_1 \ge l_2 \ge l_3 \le l_4$, we consider two subcases. Since we excluded Cases 1 and 2, at least one down-edge of $C$ is steep.

\case{5a} If $l_2 \le \frac{1}{2}l_1$, then $\Delta O \ge \frac{1}{6}(l_1-l_2)$ and $M \le o_{34} \le l_3 + \frac{1}{2}l_4$. Therefore
\[ 2M-7\Delta O \le 2l_3+l_4 + \frac{7}{6}l_2-\frac{7}{6}l_1 \le \Big(\frac{1}{2}l_3 + \frac{3}{4}l_1\Big)+\Big(\frac{1}{2}l_4+\frac{1}{2}l_1\Big)+\Big(\frac{3}{6}l_2+\frac{2}{6}l_1\Big)-\frac{7}{6}l_1 \le \frac{1}{2}L \]

\case{5b} If $l_2 > \frac{1}{2}l_1$ and $l_3 \le \frac{1}{2}l_2$, then $\Delta O \ge \frac{1}{8}(l_1-l_3)$ by Lemma~\ref{lem:delta-app} and $M \le o_{34} \le l_3 + \frac{1}{2}l_4$. Therefore
\[ 2M-7\Delta O \le 2l_3 + l_4 - \frac{7}{8}l_1 + \frac{7}{8}l_3 \le \frac{23}{8}l_3 + l_4 - \frac{7}{8}l_1 \le 
\Big(\frac{4}{8}l_3 + \frac{4}{8}l_2 + \frac{11}{16}l_1\Big)+\Big(\frac{1}{2}l_4 + \frac{1}{2}l_1\Big)-\frac{7}{8}l_1 \le \frac{1}{2}L.\]

\case{6} We are left with the case where $l_1 \ge l_2 \ge l_3 \ge l_4$, i.e.\ $C$ has three down-edges, and at least one of them is steep. We consider three subcases:

\case{6a} If $l_2 \le \frac{1}{2} l_1$ then similarly to Case 3 we have
\[\Delta O \ge \Big(\Delta o_{41}+\Delta o_{12}\Big)+\Big(\Delta o_{23}+\Delta o_{34}\Big) \ge \frac{1}{6}\Big(l_1-l_2\Big)+\frac{1}{6}\Big(l_3-l_4\Big).\]
We also have $M \le o_{34} \le l_3 + \frac{1}{2}l_4$. Therefore
\begin{align*}
2M-7\Delta O \le 2l_3+l_4 +\frac{7}{6}\Big(-l_1+l_2-l_3+l_4\Big)= -\frac{7}{6}l_1+\frac{7}{6}l_2+\frac{5}{6}l_3+\frac{13}{6}l_4 \le\\
\le-\frac{7}{6}l_1+\Big(\frac{3}{6}l_2+\frac{2}{6}l_1\Big)+\Big(\frac{3}{6}l_3+\frac{1}{6}l_1\Big)+\Big(\frac{3}{6}l_4+\frac{5}{6}l_1\Big) \le \frac{1}{2}L.
\end{align*}

\case{6b} If $l_3 \le \frac{1}{2} l_2$ then $\Delta O \ge \frac{1}{6}(l_2 - l_3)$ and $M \le l_3 + \frac{1}{2}l_4$. Therefore
\begin{align*}
2M - 7\Delta O \le 2l_3 + l_4 - \frac{7}{6}l_2 + \frac{7}{6}l_3 = -\frac{7}{6}l_2 + \frac{19}{6}l_3 + l_4 \le -\frac{7}{6}l_2 + \frac{22}{6}l_3 + \frac{1}{2}l_4 \le\\
\le -\frac{7}{6}l_2+\Big(\frac{3}{6}l_1+\frac{13}{12}l_2+\frac{3}{6}l_3\Big)+\frac{1}{2}l_4 \le \frac{1}{2}L.
\end{align*}

\case{6c} If $l_4 \le \frac{1}{2} l_3$ then similarly to Case 6a we have
\[\Delta O \ge \frac{1}{6}\Big(l_1-l_2\Big)+\frac{1}{6}\Big(l_3-l_4\Big),\]
but this time we use the bound $M \le o_{41} \le l_4 + \frac{1}{2}l_1$. We get
\begin{align*}
2M-7\Delta O \le 2l_4 + l_1 + \frac{7}{6}\Big(-l_1+l_2-l_3+l_4\Big) = -\frac{1}{6}l_1 + \frac{7}{6}l_2 - \frac{7}{6}l_3  + \frac{19}{6}l_4 \le \\
\le -\frac{1}{6}l_1 + \Big(\frac{3}{6}l_2 + \frac{4}{6}l_1\Big)  -\frac{7}{6}l_3 + \Big(\frac{3}{6}l_4 + \frac{8}{6}l_3\Big) \le \frac{1}{2}L.
\end{align*}

\end{proof}


\subsection{The Proof of Theorem~\ref{thm:main-local} for $3$-cycles}
\label{subsec:3-cycles}

\begin{lemma}
If $C$ is a $3$-cycle then $2M+7O \le 11L$.
\end{lemma}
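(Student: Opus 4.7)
The plan is to mirror the structure of the $4$-cycle proof (Lemma~\ref{lem:4-cycle}). By Lemma~\ref{lem:all-equal} we may assume not all $l_i$ are equal, and by Lemma~\ref{lem:mainclaim} it suffices to show either $\Delta O \ge \tfrac{3}{46}L$ or $2M - 7\Delta O \le \tfrac{1}{2}L$. Fix $l_1 = \lmax$; then up to relabelling the $3$-cycle has only two topologies: either $l_1 \ge l_2 \ge l_3$ (two consecutive down-edges followed by an up-edge) or $l_1 \ge l_3 \ge l_2$ (one down-edge followed by two up-edges).

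First I would handle the all-flat regime $\lmin > \tfrac{1}{2}\lmax$. In the down-up-up topology the up-edge $(x_2, x_3)$ yields $M \le o_{23} < l_3 \le l_1$ by Lemma~\ref{lem:breslauer2}, and combining with $\Delta O \ge \tfrac{1}{4}\lmin$ from Lemma~\ref{lem:delta-flat} a direct calculation gives $2M - 7\Delta O \le \tfrac{1}{2}L$. The harder case is down-down-up, where the only available up-edge bound is $M \le o_{31} < l_1$; together with $\Delta O \ge \tfrac{1}{4}l_3$ this is enough precisely when $\lmax \le \tfrac{11}{6}\lmin$. For the residual sub-regime $\lmax \in (\tfrac{11}{6}\lmin, 2\lmin)$ I would sharpen by aggregating the down-edge inequality $\Delta \alpha_1 + \Delta o_{12} \ge \tfrac{1}{2}(l_1 - l_2)$ from Corollary~\ref{cor:delta} with the Lemma~\ref{lem:breslauer1} bounds $\Delta o_{31} \ge \Delta \alpha_1$ and $\Delta o_{23} \ge \Delta \alpha_3$, together with the up-edge bound $\Delta o_{31} \ge l_3 - \tfrac{1}{2}l_1$ from Lemma~\ref{lem:delta-up}.

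The steep-down-edge case would follow Cases~5--6 of Lemma~\ref{lem:4-cycle}: case-split on which down-edge is steep, apply Corollary~\ref{cor:delta} with the constant appropriate to the length ratio across it, and upper-bound $M$ by $o_{ij} \le l_i + \tfrac{1}{2}l_j$ on the edge whose head has the smallest length (so as to minimise the $M$ contribution). The remaining case $\lmin \le \tfrac{1}{2}\lmax$ with no steep down-edges is handled by the $\tfrac{1}{4}$-constant version of Lemma~\ref{lem:delta-app} combined with $M < l_1$.

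The main obstacle is the near-tight all-flat regime $\lmax \sim 2\lmin$ in the down-down-up topology. Here both generic tools used throughout the paper --- the bound $\Delta O \ge \tfrac{1}{4}\lmin$ from Lemma~\ref{lem:delta-flat} and the up-edge bound $M < l_1$ from Lemma~\ref{lem:breslauer2} --- are individually too loose to give $2M - 7\Delta O \le \tfrac{1}{2}L$, and we also lose direct access to Corollary~\ref{cor:mainclaim} because $\tfrac{3}{46}L$ can exceed $\tfrac{1}{4}\lmin$ in this range. Resolving this will require the finest combination of Lyndon-word inequalities from Section~\ref{sec:general-bounds} and should comprise the longest subcase in the lemma, analogous to Case~5b of Lemma~\ref{lem:4-cycle} but with tighter margins since $k=3$ is the smallest cycle length to which Lemma~\ref{lem:mainclaim} applies nontrivially.
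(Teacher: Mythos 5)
There is a genuine gap, and it lies exactly where your plan becomes vague. You identify the all-flat down--down--up regime $\lmax\sim 2\lmin$ as the main obstacle, but that case is actually dispatched in one line: besides $M\le o_{31}<l_1$ you may of course also use $M\le o_{23}\le l_2+\frac12 l_3$ (Lemma~\ref{lem:breslauer1} applies to every edge, not only up-edges), and the convex combination $2M\le \frac12\bigl(l_2+\frac12 l_3\bigr)+\frac32 l_1$ together with $\Delta O\ge\frac14 l_3$ gives $2M-7\Delta O\le \frac32 l_1+\frac12 l_2-\frac32 l_3\le\frac12 L$ whenever $l_1\le 2l_3$; no sharpening is needed there. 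The genuinely hard regime is the one your recipe treats generically: $l_1\ge l_2\ge l_3$ with the first down-edge flat ($l_1<2l_2$) and the second steep ($l_2\ge 2l_3$). This is precisely where the tight Example~\ref{ex:tight3} lives ($l_1=6n+10$, $l_2=5n+8$, $l_3=2n+3$, $o_{31}=5n+7$, $\Delta O=\frac12 n+O(1)$), and on it your plan provably cannot close: the route $\Delta O\ge\frac{3}{46}L$ is unavailable since $\frac{3}{46}L\approx 0.85n$, and your prescription to bound $M$ on ``the edge whose head has the smallest length'' gives $M\le l_2+\frac12 l_3=6n+O(1)$, so that $2M-7\Delta O\ge 12n-3.5n-O(1)$ exceeds $\frac12 L=6.5n+O(1)$ no matter how good a lower bound on $\Delta O$ you prove; the correct choice is the up-edge bound $M\le o_{31}\le l_3+\frac12 l_1$.

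Even with that correction, Corollary~\ref{cor:delta} alone (your only tool for $\Delta O$ in the steep case) is too weak: at $l_1=l_2=2l_3$ it yields only $\Delta O\ge \frac16(l_2-l_3)=\frac1{12}l_2$, while every admissible bound on $M$ is about $l_2$, so $2M-7\Delta O\le \frac{17}{12}l_2$, which exceeds $\frac12 L=\frac54 l_2$, and $\frac{3}{46}L\approx 0.16\,l_2$ is likewise out of reach. What the paper uses here, and what is missing from your proposal, are the refined inequalities of Lemma~\ref{lem:delta} (parts 2 and 3): $o_{23}+\alpha_2\le 2l_2-l_3$ when $2l_3\le l_2<\frac52 l_3$, and $o_{23}+\alpha_2\le l_2+l_3+\alpha_3$ otherwise. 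Translated via $\Delta\alpha_2\le\Delta o_{12}$ these give $\Delta o_{12}+\Delta o_{23}\ge \frac32 l_3-\frac12 l_2$ (resp. $\ge\frac12 l_2-l_3$), which one combines with the second lower bound $\Delta o_{12}\ge\frac12(l_1-l_2)$ by splitting the coefficient $7$ as $6+1$, and with $M\le l_3+\frac12 l_1$, to land exactly at $\frac12 L$. Without Lemma~\ref{lem:delta} and this weighted combination your case analysis cannot reach the tight configuration, so the proof as planned does not go through.
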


\begin{proof}
There are essentially two kinds of $3$-cycles - ones with (cyclically) increasing $l_i$, and ones with decreasing $l_i$. 

\case{1} If $l_1 \ge l_2 \le l_3$ (i.e.\ $l_i$ are cyclically increasing), then we consider three subcases.

\case{1a} If $\lmax < 2\lmin$, i.e.\ $l_1 < 2l_2$, then
\[ \Delta O \ge \Delta o_{23} + \Delta o_{31} \ge \Big(l_2 - \frac{1}{2}l_3\Big) + \Big(l_3 - \frac{1}{2}l_1\Big) = l_2+\frac{1}{2}l_3 - \frac{1}{2}l_1 > \frac{1}{2}l_3\]
by Lemma~\ref{lem:delta-up}. We also have $M \le o_{23} \le l_3$. Then
\[ 2M - 7\Delta O \le 2l_3 - \frac{7}{2}l_3 < 0 \le \frac{1}{2}L.\]

\case{1b} If $l_1 \ge 2l_3$ then we have 
\[\Delta O \ge \Delta o_{12} \ge \frac{1}{6}\Big(l_1-l_2\Big)\]
and $M \le o_{23} \le l_2 + \frac{1}{2}l_3$. Therefore
\[ 2M-7\Delta O \le 2l_2 + l_3 +\frac{7}{6}\Big(-l_1+l_2\Big) \le 
-\frac{7}{6}l_1 + \frac{19}{6}l_2+l_3 \le 
-\frac{7}{6}l_1 + \Big(\frac{3}{6}l_2 + \frac{8}{6}l_1\Big) + \Big(\frac{1}{2}l_3 + \frac{1}{4}l_1\Big) \le \frac{1}{2}L.\]

\case{1c} If $2l_2 \le l_1 < 2l_3$ then
\[ \Delta O \ge \max\Big(\Delta o_{12}, \Delta o_{31}\Big) \ge \max\Big(\frac{1}{6}\Big(l_1-l_2\Big),l_3-\frac{1}{2}l_1\Big)\]
and $M \le o_{23} \le l_2 + \frac{1}{2}l_3$. Hence
\[ 2M-7\Delta O \le 2l_2 + l_3 - \Big(l_3-\frac{1}{2}l_1\Big)-6\cdot\frac{1}{6}\Big(l_1-l_2\Big) = -\frac{1}{2}l_1 + 3l_2 \le -\frac{1}{2}l_1 + \Big(l_1+\frac{1}{2}l_2+\frac{1}{2}l_3\Big)\le \frac{1}{2}L.\] 

\case{2} If $l_1 \ge l_2 \ge l_3$ then we consider several subcases. The logic in their ordering is that we are trying to eliminate the easy ones first until only the hardest case remains --- one that is actually tight.

\case{2a} If $l_1 < 2l_3$, i.e.\ $\lmax < 2\lmin$, then by Lemma~\ref{lem:delta-flat} we have $\Delta O \ge \frac{1}{4} l_3$. We also have $M \le \min(o_{23},o_{31}) \le \min(l_2+\frac{1}{2}l_3,l_1)$. Therefore
\[ 2M-7\Delta O \le \frac{1}{2}\Big(l_2+\frac{1}{2}l_3\Big)+\frac{3}{2}l_1 - \frac{7}{4}l_3 = \frac{3}{2}l_1 + \frac{1}{2}l_2 - \frac{3}{2}l_3 \le \frac{1}{2}L.\]

\case{2b} If $l_1 \ge 2l_3$, but both down-edges are flat, then by Lemma~\ref{lem:delta-app} we have $\Delta O \ge \frac{1}{4}(l_1-l_3)$. Using the same bound on $M$ as in the previous case, we obtain
\[ 2M-7\Delta O \le \frac{1}{2}\Big(l_2+\frac{1}{2}l_3\Big)+\frac{3}{2}l_1 - \frac{7}{4}\Big(l_1-l_3\Big) = -\frac{1}{4}l_1 + \frac{1}{2}l_2 + 2l_3 \le \frac{1}{2}L.\]

We are left with the case where at least one down-edge of $C$ is steep.

\case{2c} If $l_2 \le \frac{1}{2}l_1$ then we have $\Delta O \ge \frac{1}{6}(l_1-l_2)$ by Corollary~\ref{cor:delta}. We also have $M \le o_{23} \le l_2 + \frac{1}{2}l_3$. Hence
\[ 2M-7\Delta O \le 2l_2 + l_3 - \frac{7}{6}l_1 + \frac{7}{6}l_2 = -\frac{7}{6}l_1 + \frac{19}{6}l_2 + l_3 \le 
-\frac{7}{6}l_1 + \Big(\frac{3}{6}l_2 + \frac{8}{6}l_1\Big) + \Big(\frac{1}{2}l_3 + \frac{1}{4}l_1\Big) \le \frac{1}{2}L. \]

\case{2d} If $l_2 > \frac{1}{2} l_1$ and $2l_3 \le l_2 < \frac{5}{2}l_3$ then we have
\[ \Delta O \ge \Delta o_{12} + \Delta o_{23} \ge \Delta \alpha_2 + \Delta o_{23} \ge \Big(\frac{3}{2}l_2 + \frac{1}{2}l_3\Big) - \Big(2l_2 - l_3\Big) = \frac{3}{2}l_3 - \frac{1}{2}l_2\]
by Lemma~\ref{lem:delta}, and we also have
\[ \Delta O \ge \Delta o_{12} \ge \frac{1}{2}\Big(l_1-l_2\Big). \]
Joining these two bounds with $M \le o_{31} \le l_3 + \frac{1}{2}l_1$ gives

\[ 2M - 7\Delta O \le 2l_3 + l_1 - 6\Big(\frac{3}{2}l_3-\frac{1}{2}l_2\Big) -\frac{1}{2}\Big(l_1-l_2\Big) =
\frac{1}{2}l_1+\frac{7}{2}l_2-7l_3 \le \frac{1}{2}l_1 + \Big(\frac{1}{2}l_2 + \frac{15}{2}l_3\Big) - 7l_3 = \frac{1}{2}L.
\]

\case{2e} Finally if $l_2 > \frac{1}{2} l_1$ and $l_2 \ge \frac{5}{2}l_3$ then we have
\[ \Delta O \ge \Delta o_{12} + \Delta o_{23} \ge \Delta \alpha_2 + \Delta o_{23} \ge \Big(\frac{3}{2}l_2 + \frac{1}{2}l_3\Big) - \Big(l_2 + l_3 + \alpha_3\Big) \ge \frac{1}{2}l_2 - l_3\]
by Lemma~\ref{lem:delta}. We now proceed similarly to the previous case:
\[ 2M - 7 \Delta O \le 2l_3 + l_1 - 6\Big(\frac{1}{2}l_2-l_3\Big)-\frac{1}{2}\Big(l_1-l_2\Big) \le 
\frac{1}{2}l_1 - \frac{5}{2}l_2 + 8l_3 = \frac{1}{2}L -3l_2+ \frac{15}{2}l_3 \le \frac{1}{2}L .\] 

\end{proof}


\subsection{The Proof of Theorem~\ref{thm:main-local} for $2$-cycles}
\label{subsec:2-cycles}

Before we proceed with the case of $2$-cycles, we need an additional technical lemma.

\begin{lemma}
\label{lem:delta-2-cycle}
If $l_1 \ge 2l_2$ then $\Delta o_{12} + \Delta o_{21} \ge \frac{1}{2}l_2$. 
\end{lemma}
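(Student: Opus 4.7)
The claim is equivalent to $o_{12} + o_{21} \le \frac{3}{2}l_1 + l_2$. I assume without loss of generality that $w_2 = (w_2)_{\max}$, so Lemmas~\ref{lem:general-long-l1} and~\ref{lem:general-long-l1-cor} apply directly. Two easy cases dispose of most of the work. If $o_{12} < l_1$, then combining with $o_{21} < l_2 + \alpha_1 \le l_2 + \frac{1}{2}l_1$ from Lemma~\ref{lem:breslauer1} yields $o_{12} + o_{21} < l_1 + l_2 + \frac{1}{2}l_1 = \frac{3}{2}l_1 + l_2$. Otherwise $o_{12} \ge l_1$ and Lemma~\ref{lem:general-long-l1-cor} applies; if its first alternative $\alpha_1 \le l_1 + l_2 - o_{12}$ holds, then $o_{12} + o_{21} \le o_{12} + l_2 + \alpha_1 \le l_1 + 2l_2 \le \frac{3}{2}l_1 + l_2$, where the last step uses the hypothesis $l_1 \ge 2l_2$.

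The substantive situation is that $o_{12} \ge l_1$ and the second alternative of Lemma~\ref{lem:general-long-l1-cor} holds. Since $o_{12} \ge l_1 \ge 2l_2$, the ``moreover'' clause of Lemma~\ref{lem:general-long-l1} pins $\imin(w_{12}) = \alpha_2 + 1$ (the leftmost candidate), so the second alternative must place $\imax(w_{12})$ at the rightmost position of the form $kl_2+1$ inside the overlap, namely $\imax(w_{12}) = l_2 \lfloor (o_{12}-1)/l_2 \rfloor + 1$. If $l_1 \ge 3l_2$, the always-bound $\alpha_1 \le l_2 + (l_1 + \alpha_2 - o_{12})$ already suffices: it gives $o_{12} + o_{21} \le l_1 + 2l_2 + \alpha_2 \le l_1 + \frac{5}{2}l_2 \le \frac{3}{2}l_1 + l_2$. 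So the one remaining subcase is $2l_2 < l_1 < 3l_2$ (the boundary $l_1 = 2l_2$ is impossible here, since then $w_{12}$ would equal $w_2 w_2$, which is nonprimitive, contradicting that $w_{12}$ is a rotation of the primitive $w_1$). In this subcase necessarily $j := \lfloor (o_{12}-1)/l_2 \rfloor = 2$, so $w_{12} = w_2 w_2 p$ with $p := w_2[1, l_1 - 2l_2]$ and $|p| \in [1, l_2-1]$. A direct computation gives $|\pmin(w_{12})| = 2l_2 - \alpha_2$ and $|\pmax(w_{12})| = l_1 - 2l_2 + \alpha_2$, hence $|\pmax(w_{12})| < |\pmin(w_{12})|$ in this range, and so the nice rotation $w_1$ equals $(w_{12})_{\max} = p w_2 w_2$.

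In this final subcase I plan to establish the stronger bound $o_{21} < l_2$, which immediately yields $o_{12} + o_{21} < (l_1 + \alpha_2) + l_2 \le l_1 + \frac{3}{2}l_2 \le \frac{3}{2}l_1 + l_2$. The obstacle is that if $o_{21} \ge l_2$, then $w_1[1, l_2] = w_2[1, |p|]\,w_2[1, l_2 - |p|]$ would have to coincide with a suffix of $x_2$ of length $l_2$, which is some rotation $w_2[k, l_2]\,w_2[1, k-1]$ of $w_2$ (for some $k \in \{1, \ldots, l_2\}$), and I must rule out every such $k$. The case $k=1$ forces $w_2[|p|+1, l_2] = w_2[1, l_2 - |p|]$, making $|p|$ a period of $w_2$ with $|p| < l_2$, contradicting primitivity of $w_2$. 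For every other $k$, matching the corresponding segments of $w_1[1, l_2]$ against the hypothesized rotation produces a proper prefix of $w_2$ equal to a proper suffix of $w_2$, contradicting the unborderedness of $w_2$ guaranteed by Lemma~\ref{lem:unbordered} (applicable since $w_2$ is its own maximal rotation). Since no $k$ survives, $o_{21} < l_2$ and the proof is complete.
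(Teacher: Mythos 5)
Your argument is correct, and it shares the paper's overall skeleton (dispose of $o_{12}<l_1$ cheaply, then exploit the position constraints of Lemma~\ref{lem:general-long-l1} when $o_{12}\ge l_1$, and finish with an unborderedness contradiction that caps $o_{21}$), but the route through the hard case is genuinely different. The paper treats all $l_1>2l_2$ uniformly: writing $kl_2<l_1<(k+1)l_2$ with $k\ge 2$, it deduces $\imin(w_{12})=\alpha_2+1$ and $\imax(w_{12})\ge kl_2+1$, concludes that $w_1$ is its maximal rotation with $\pmax(w_1)=w\,\pmax(w_2)$ for a proper prefix $w$ of $w_2$, and then shows the sharper fact $o_{21}<\alpha_1$ by invoking Lemma~\ref{lem:unique} to align the copy of $\pmax(w_2)$ with an occurrence of $w_2$ inside $x_2$, which would force $w$ to be a border of $w_2$. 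You instead dispatch $l_1\ge 3l_2$ with the unconditional bound $\alpha_1\le l_2+(l_1+\alpha_2-o_{12})$ of Lemma~\ref{lem:general-long-l1-cor}, and only in the window $2l_2<l_1<3l_2$ do you pin down $w_1=p\,w_2w_2$ and prove $o_{21}<l_2$ by matching $w_1[1,l_2]=w_2[1,|p|]\,w_2[1,l_2-|p|]$ against each rotation of $w_2$ and extracting a border every time. What your version buys is that the final contradiction needs only Lemma~\ref{lem:unbordered}, not Lemma~\ref{lem:unique}; what it costs is the extra split at $l_1=3l_2$ and a weaker conclusion ($o_{21}<l_2$ rather than $o_{21}<\alpha_1$), which is nevertheless ample for the stated bound.

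Two small repairs. First, in your case $k=1$ you claim that $|p|$ being a period of $w_2$ with $|p|<l_2$ contradicts primitivity; it does not (a primitive string such as $aba$ has period $2$). What the equality $w_2[1,l_2-|p|]=w_2[|p|+1,l_2]$ actually gives is a border of $w_2$, and this contradicts Lemma~\ref{lem:unbordered} exactly as in your other cases, since $w_2$ is its maximal rotation; so the conclusion stands, but the cited reason should be unborderedness rather than primitivity. Second, the assertion that necessarily $\lfloor (o_{12}-1)/l_2\rfloor=2$ needs one more sentence: $o_{12}$ can exceed $3l_2$, but then the rightmost position of the form $kl_2+1$ inside the overlap would exceed $l_1$ and could not be the starting position of any rotation of $w_{12}$, so the $\imax$ branch of the second alternative of Lemma~\ref{lem:general-long-l1-cor} could not hold and you would fall back into the already-handled first alternative.
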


\begin{proof}
If $o_{12} \le l_1$ then clearly $\Delta O \ge \Delta o_{12} \ge \frac{1}{2}l_2$. Hence we can assume $o_{12} \ge l_1$. Note that this means that $l_1$ is not a multiple of $l_2$, since $w_1$ is primitive.

Assume w.l.o.g.\ that $w_2$ is its maximal rotation and let $k \ge 2$ be such, that $ kl_2 < l_1 < (k+1)l_2$. Since $o_{12} \ge l_1$, by Lemma~\ref{lem:general-long-l1} we get $\imin(w_{12}) = \alpha_2+1$ and $\imax(w_{12}) \ge kl_2+1$. This means that $|\pmin(w_1)| \ge kl_2-\alpha_2 = (k-1)l_2 + (l_2-\alpha_2) \ge \frac{3}{2}l_2$ and $|\pmax(w_1)| \le l_1 - kl_2 + \alpha_2 < \frac{3}{2}l_2$. Therefore, $w_1$ is its maximal rotation as well.

Since $\imax(w_{12}) \ge kl_2+1$ and $l_2$ does not divide $l_1$, we know $\pmax(w_1) = w \pmax(w_2)$, where $w = w_{12}[\imax(w_{12}),l_1]$. Note that $|w| < |w_2|$ and $w$ is a prefix of $w_2$ (because it is an initial segment of a maximal rotation of $w_1$). 

We will show that $o_{21} < |\pmax(w_1)| = \alpha_1$, which implies the claim of the lemma. Assume the opposite, i.e.\ $o_{21} \ge \alpha_1$. Then $o_{21}[1,\alpha_1] = \pmax(w_1) = w \pmax(w_2)$. By Lemma~\ref{lem:unique} this can only happen if $w_2$ is aligned with position $|w|+1$ of $\ov_{21}$. But then $w$ is a suffix of $w_2$, and since it is also a prefix of $w_2$ we get a contradiction with Lemma~\ref{lem:unbordered}.
\end{proof}

\begin{lemma}
If $C$ is a $2$-cycle, then $2M+7O \le 11L$.
\end{lemma}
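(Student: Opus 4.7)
My plan is to apply the first part of Lemma~\ref{lem:mainclaim}, reducing the goal to $2M + 7O \le 11L$. Assume $l_1 \ge l_2$ by symmetry and $l_1 > l_2$ strictly via Lemma~\ref{lem:all-equal}. The natural split is at $l_1 = 2l_2$, which is precisely the threshold for the newly proved Lemma~\ref{lem:delta-2-cycle}.

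For $l_1 \ge 2l_2$, Lemma~\ref{lem:delta-2-cycle} supplies $\Delta O \ge \frac{1}{2}l_2$, so $O \le \frac{3}{2}L - \frac{1}{2}l_2$. Combined with $M \le o_{21} \le l_2 + \frac{1}{2}l_1$ from Lemma~\ref{lem:breslauer1}, a direct calculation gives $2M + 7O \le l_1 - \frac{3}{2}l_2 + \frac{21}{2}L$, which is at most $11L$ exactly when $l_1 \le 4l_2$. For $l_1 > 4l_2$ I would switch to Lemma~\ref{lem:delta} part 3, which bounds $o_{12} + \alpha_1 \le l_1 + l_2 + \alpha_2 \le l_1 + \frac{3}{2}l_2$; coupling this with $o_{21} \le l_2 + \alpha_1$ from Lemma~\ref{lem:breslauer1} makes the $\alpha_1$ terms cancel in $O = o_{12} + o_{21}$, leaving $O \le l_1 + \frac{5}{2}l_2$. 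The trivial estimate $2M \le O$ then finishes it: $2M + 7O \le 8O \le 8l_1 + 20l_2 \le 11L$ whenever $l_1 \ge 3l_2$, which holds.

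For $l_2 < l_1 < 2l_2$, Lemma~\ref{lem:delta-2-cycle} is unavailable, but Lemma~\ref{lem:delta} part 1 supplies $o_{12} + \alpha_1 \le l_1 + l_2$. Writing $2M + 7O \le 7o_{12} + 9o_{21}$ via $M \le o_{21}$ and substituting this bound together with $o_{21} \le \min(l_2 + \alpha_1, l_1)$ (the second coming from Lemma~\ref{lem:breslauer2}), optimizing over $\alpha_1 \in [0, l_1/2]$ peaks at $\alpha_1 = l_1 - l_2$, yielding $2M + 7O \le 9l_1 + 14l_2 \le 11L$ precisely when $l_1 \ge \frac{3}{2}l_2$. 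For the remaining strip $l_2 < l_1 < \frac{3}{2}l_2$ I would swap in the sharper Lemma~\ref{lem:breslauer1} bound $o_{12} \le l_1 + \alpha_2 \le l_1 + \frac{1}{2}l_2$ in place of the $\alpha_1$-dependent one; the analogous calculation then gives $2M + 7O \le 16l_1 + \frac{7}{2}l_2 \le 11L$ exactly in that range.

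The main obstacle will be organizing the case analysis so that every ratio $l_1/l_2 \in (1, \infty)$ is covered by at least one of the four subcases, with the inequality essentially tight at each transition point $l_1 \in \{\frac{3}{2}l_2, 3l_2, 4l_2\}$ so no single estimate can subsume the others. The most delicate regime is $l_1$ just above $l_2$: there one really needs that $o_{12}$ and $o_{21}$ cannot both saturate their individual Lemma~\ref{lem:breslauer1} upper bounds simultaneously, which is exactly the content of Lemma~\ref{lem:delta} part 1 and is what keeps the inequality honest as $l_1 \to l_2^+$.
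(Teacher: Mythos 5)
Your proof is correct, and all the numerical checkpoints verify: the $\alpha_1$-cancellation giving $O \le l_1 + \tfrac{5}{2}l_2$ (valid for $l_1 \ge 3l_2$ via $2M \le O$), the optimization at $\alpha_1 = l_1 - l_2$ giving $9l_1 + 14l_2 \le 11L$ for $l_1 \ge \tfrac{3}{2}l_2$, the raw bounds $o_{12} < l_1 + \tfrac{1}{2}l_2$, $o_{21} < l_1$ closing the strip $l_2 < l_1 \le \tfrac{3}{2}l_2$, and Lemma~\ref{lem:delta-2-cycle} covering $2l_2 \le l_1 \le 4l_2$, so the four regimes indeed cover every ratio. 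The route differs from the paper's in its organization: the paper splits only at $l_1 = 2l_2$ and $l_1 = 3l_2$ and argues uniformly through the reformulation $2M - 7\Delta O \le \tfrac{1}{2}L$ of Lemma~\ref{lem:mainclaim}, using Corollary~\ref{cor:delta} part 1 together with Lemma~\ref{lem:delta-up} in the flat case (so a single case suffices below $2l_2$, with $M \le l_1$), Lemma~\ref{lem:delta-2-cycle} for $2l_2 \le l_1 < 3l_2$, and Corollary~\ref{cor:delta} part 2 ($\Delta O \ge \tfrac{1}{4}(l_1-l_2)$) for $l_1 \ge 3l_2$, always with $M \le l_2 + \tfrac{1}{2}l_1$. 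You instead work directly with the overlap bounds of Lemmas~\ref{lem:breslauer1},~\ref{lem:breslauer2} and~\ref{lem:delta} (parts 1 and 3), which costs you one extra subcase at each end (splits at $\tfrac{3}{2}l_2$ and $4l_2$) but avoids invoking Corollary~\ref{cor:delta} and Lemma~\ref{lem:delta-up} altogether; in particular your treatment of the very steep regime via $2M \le O$ is a nice shortcut, since there the crude bound $M \le \tfrac{1}{2}O$ is already strong enough. Both proofs hinge on the same essential new ingredient, Lemma~\ref{lem:delta-2-cycle}, in the regime just above ratio $2$, which is where the example of Section~\ref{sec:tight} shows the statement is tight.
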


\begin{proof}
We consider three cases.

\case{1} If $l_1 < 2l_2$, i.e.\ the down-edge of $C$ is flat, then we have
\[\Delta O \ge \max\Big(\frac{1}{2}\Big(l_1-l_2\Big),l_2-\frac{1}{2}l_1\Big) \]
by Corollary~\ref{cor:delta} and Lemma~\ref{lem:delta-up}. We also have $M \le o_{21} \le l_1$, and so
\[ 2M-7\Delta O \le 2l_1 - 5\cdot \frac{1}{2}\Big(l_1-l_2\Big)-2\Big(l_2-\frac{1}{2}l_1\Big) = \frac{1}{2}L.\]

\case{2} If $2l_2 \le l_1 < 3l_2$ then we have $\Delta O \ge \frac{1}{2}l_2$ by Lemma~\ref{lem:delta-2-cycle}. This easily gives our main claim since using $M \le o_{21} \le l_2 + \frac{1}{2}l_1$ we have
\[ 2M-7\Delta O \le 2l_2 + l_1 - \frac{7}{2}l_2 = l_1 - \frac{3}{2}l_2 \le \Big(\frac{1}{2}l_1 + \frac{3}{2}l_2\Big) - \frac{3}{2}l_2 \le \frac{1}{2}L.\]

\case{3} If $3l_2 \le l_1$ then by Corollary~\ref{cor:delta} we have  $\Delta O \ge \frac{1}{4}(l_1-l_2)$ and together with $M \le o_{21} \le l_2 + \frac{1}{2}l_1$ we get
\[ 2M-7\Delta O \le 2l_2 + l_1 - \frac{7}{4}l_1 + \frac{7}{4}l_2 = -\frac{3}{4}l_1 + \frac{15}{4}l_2 = -\frac{3}{4}l_1 + \frac{5}{4}l_1 \le \frac{1}{2}L.\]

\end{proof}

\section{Tight examples}
\label{sec:tight}

\subsection{Tightness of Theorem~\ref{thm:main-local}}

We will now show that Theorem~\ref{thm:main-local} is essentially tight. To this end, we give two examples of cycles in the overlap graph, for which $2M + 7O = 11L - O(1)$. Note that by increasing the lengths of the strings in these cycles we can get $\frac{2M+7O}{11L} \rightarrow 1$.

\begin{example}
\label{ex:tight2}
Let $w_1=ba^kb|a^{k+1}ba^{k+1}$ and $w_2=a^{k+1}|ba^kb$ (we use the symbol $|$ to mark the border between $\pmax$ and $\pmin$). Here $l_1 = 3k+5$, $l_2 = 2k+3$, so $L= 5k+8$. 

Now, let $x_i = (w_i^2)[1,2l_i-1]$ for $i=1,2$. Note that all $w_i$ are nice words and every $x_i$ is a $w_i$-word. 

We have $o_{12} = 4k+5$ and $o_{21} = 3k+4$, so $O = 7k+9$ and $M=3k+4$. Note that $2M+7O = 6k+8+49k+63=55k+71 = 11L-O(1)$
\end{example}

\begin{example}
\label{ex:tight3}
Let $w_1 = ba^nba^{n+1}ba^nb|a^{n+1}ba^{n+1}ba^{n+1}$, $w_2 = a^{n+1}ba^{n+1}|ba^nba^{n+1}ba^nb$, $w_3 = a^{n+1}|ba^nb$. We have $l_1 = 6n+10$, $l_2 = 5n+8$, $l_3 = 2n+3$, so $L=13n+21$. 

Now, let $x_1 = (w_1^2)[1,2l_1-1]$, $x_2 = (w_2^3)[1,2l_2+\alpha_2-1]$ and $x_3 = (w_3^4)[1,4l_3-1]$. Note that all $w_i$ are nice words and every $x_i$ is a $w_i$-word.

We have $o_{12} = 8n+12$, $o_{23} = 6n+8$ and $o_{31} = 5n+7$, so $O=19n+27$ and $M=5n+7$. Note that $2M+7O = 10n+14+133n+189=11L-O(1)$.
\end{example}

We will now show that the bound we give on $\min(M,(1-c)O)$ in Corollary~\ref{cor:factor} is also essentially tight.

Consider a cycle cover $\mathcal{C}$ in the overlap graph, composed of two collections of cycles: $\mathcal{C}_2$ consisting of $2$-cycles of the form described in Example~\ref{ex:tight2}, and $\mathcal{C}_3$ consisting of $3$-cycles described in Example~\ref{ex:tight3} (note that these cycles need to use different $n$ so that their vertices are non-equivalent).

Let $L_2,L_3$ be the total length of the periods of the strings in the cycles of $\mathcal{C}_2$ and $\mathcal{C}_3$, respectively. Let $O_2$ be the sum of all overlaps on the cycles in $\mathcal{C}_2$ and let $M_2$ be the sum of smallest overlaps for each cycle in $\mathcal{C}_2$. Similarly define $O_3$ and $M_3$ for $\mathcal{C}_3$. Finally let $L=L_2+L_3$, $O=O_2+O_3$ and $M=M_2+M_3$. 

Note that to make the analysis in Corollary~\ref{cor:factor} tight we only need to make $M=(1-c)O$, since we already have $2M+7O = 11L-O(1)$.
Since $M_2 \sim \frac{3}{7}O_2$ and $M_3 \sim \frac{5}{19}O_3$, this can be done by adjusting the balance between $L_2$ and $L_3$, provided that $c \in [\frac{4}{7},\frac{14}{19}]$. The current best approximation ratio of $\frac{2}{3}$ for \tsp\ sits well within this interval.

\subsection{The Greedy Algorithm}

Recall the greedy algorithm, which picks two strings with the largest overlap and combines them together until a single string remains. The bounds in Breslauer et al.\ can be used to improve the analysis of this algorithm, as shown by Kaplan et al.~\cite{kaplan-greedy}. It is natural to ask whether our bounds can be used in a similar fashion. Unfortunately, it seems that there is no simple way to do this. In their analysis, Kaplan et al.\ require a good bound on the the total overlap of a (possibly) long path of strings in the overlap graph. As it turns out, in this case the overlap can actually approach the bound of $\frac{3}{2}\sum_i l_i$ arbitrarily close, as can be seen in the following example.

\begin{example}
For any $k \ge 1$ let $w_{2k} = b^k|a^k$ and $w_{2k-1} = a^{k-1}|b^k$.
Also, let $x_{2k} = b^k a^k b^k a^{k-1}$ and $x_{2k-1} = a^{k-1} b^k a^{k-1} b^{k-1}$. Note that all $w_i$ are nice words and every $x_i$ is a $w_i$-word.

Consider $S = \{x_3,x_4,\ldots,x_n\} = \{ab^2ab, b^2a^2b^2a, a^2b^3a^2b^2,b^3a^3b^3a^2,\ldots\}$ and the path $x_n \rightarrow x_{n-1} \rightarrow \ldots \rightarrow x_3$ in the overlap graph of $S$. It is easy to verify that $o_{i+1,i} = \lfloor \frac{3i}{2} \rfloor$. Therefore, the total overlap of the path is approximately $\frac{3}{4}n^2$, and $\sum_{i=3}^n l_i \sim \frac{1}{2}n^2$.
\end{example}

This, of course, does not rule out using our results to improve the analysis of the greedy algorithm. However, any such result requires some additional insight.

\section{Acknowledgements}

The author would like to thank Anupam Gupta and Piotr Sankowski for valuable suggestions that helped to improve the presentation of the paper, and Jakub Radoszewski for his advice in matters of stringology. 


\bibliographystyle{plain}
\bibliography{superstring}


\end{document}